\newcommand\dotover{\leavevmode\cleaders\hb@xt@ .22em{\hss $\cdot$\hss}\hfill\kern\z@}
\newtheorem{theorem}{Theorem}
\newtheorem{assumption}{Assumption}
\newtheorem{conditionalt}{Condition}
\newtheorem{lemma}[theorem]{Lemma}
\newtheorem{proposition}[theorem]{Proposition}
\newtheorem{definition}{Definition}
\theoremstyle{remark}
\newtheorem{example}{Example}
\newtheorem{claim*}{Claim}
\newtheorem{claim}{Claim}
\DeclareMathOperator*{\argmax}{arg\,max}
\numberwithin{equation}{section}
\numberwithin{theorem}{section}
\numberwithin{example}{section}
\numberwithin{table}{section}
\numberwithin{definition}{section}
\numberwithin{figure}{section}
\newcommand{\cR}{\mathcal{R}}
\newcommand{\cHun}{\mathcal{H}^{u/[n]}}
\newcommand{\cH}{\mathcal{H}}
\newcommand{\cG}{\mathcal{G}}
\newcommand{\cJ}{\mathcal{J}}
\newcommand{\cA}{\mathcal{A}}
\newcommand{\cS}{\mathcal{S}}
\newcommand{\cL}{\mathcal{L}}
\newcommand{\cT}{\mathcal{T}}
\newcommand{\bR}{\mathbb{R}}
\newcommand{\bN}{\mathbb{N}}
\newcommand{\zetabf}{\boldsymbol{\zeta}}
\newcommand{\betabf}{\boldsymbol{\beta}}
\newcommand{\tvec}{\mathbf{t}}
\newcommand{\hattvec}{\hat{\mathbf{t}}}
\newcommand{\hatt}{\hat{t}}
\newcommand{\methodname}{ParFilter }
\newcommand{\methodnamens}{ParFilter}
\newcommand{\cI}{\mathcal{I}}
\newcommand{\cC}{\mathcal{C}}
\newcommand{\uiG}{u_{ik} / \cG_k}
\newcommand{\uiGc}{r_{ik} / \cC_k}
\newcommand{\FDR}{\text{FDR}}
\newcommand{\FDP}{\text{FDP}}
\newcommand{\rep}{\text{rep}}
\newcommand{\FDPhat}{\widehat{\FDP}}
\newcommand{\Simes}{\text{Simes}}
\newcommand{\1}{I}
\newcommand{\nubf}{\boldsymbol{\nu}}
\newcommand{\matleq}{\preccurlyeq}
\newcommand{\matgeq}{\succcurlyeq}
\newcommand{\angleb}[1]{\langle {#1} \rangle}
\newcommand{\Clfdrc}{\omega}
\newcommand{\Clfdr}{\psi}
\newcommand{\appropto}{\mathrel{\vcenter{
  \offinterlineskip\halign{\hfil$##$\cr
    \propto\cr\noalign{\kern2pt}\sim\cr\noalign{\kern-2pt}}}}}
\newcommand{\assumpref}[1]{Assumption~\ref{assump:#1}}
\newcommand{\assumpsref}[1]{Assumptions~\ref{assump:#1}}
\newcommand{\assumpssref}[1]{\ref{assump:#1}}
\newcommand{\figref}[1]{Figure~\ref{fig:#1}}
\newcommand{\condref}[1]{Condition~\ref{cond:#1}}
\newcommand{\condsref}[1]{Conditions~\ref{cond:#1}}
\newcommand{\secref}[1]{Section~\ref{sec:#1}}
\newcommand{\secsref}[1]{Sections~\ref{sec:#1}}
\newcommand{\secssref}[1]{\ref{sec:#1}}
\newcommand{\algref}[1]{Algorithm~\ref{alg:#1}}
\newcommand{\appref}[1]{Appendix~\ref{app:#1}}
\newcommand{\appsref}[1]{Appendices~\ref{app:#1}}
\newcommand{\appssref}[1]{\ref{app:#1}}
\newcommand{\defref}[1]{Definition~\ref{def:#1}}
\newcommand{\exref}[1]{Example~\ref{ex:#1}}
\newcommand{\lemref}[1]{Lemma~\ref{lem:#1}}
\newcommand{\propref}[1]{Proposition~\ref{prop:#1}}
\newcommand{\propsref}[1]{Propositions~\ref{prop:#1}}
\newcommand{\propssref}[1]{\ref{prop:#1}}
\newcommand{\thmref}[1]{Theorem~\ref{thm:#1}}
\newcommand{\thmsref}[1]{Theorems~\ref{thm:#1}}
\newcommand{\thmssref}[1]{\ref{thm:#1}}
\newcommand{\tabref}[1]{Table~\ref{tab:#1}}
\newcommand{\claimref}[1]{Claim~\ref{claim:#1}}
\newcommand{\AdaFilterBHCol}{dark gray }
\newcommand{\AdaPTCol}{green }
\newcommand{\AdaptiveBHCol}{deep orange }
\newcommand{\AdaptiveCoFilterBHCol}{lavender }
\newcommand{\BYCol}{brown }
\newcommand{\CAMTCol}{pink }
\newcommand{\CoFilterBHCol}{peach }
\newcommand{\IHWCol}{turquoise }
\newcommand{\InflatedAdaFilterBHCol}{light blue }
\newcommand{\InflatedParFilterCol}{purple }
\newcommand{\NoCovarParFilterCol}{maroon }
\newcommand{\NonAdaptiveParFilterCol}{olive }
\newcommand{\OracleCol}{light grey }
\newcommand{\ParFilterCol}{navy blue }
\title{
A covariate-adaptive test for replicability across multiple studies with false discovery rate control
}
\author[N.~Tran]{Ninh Tran}
\address[Ninh Tran]{School of Mathematics and Statistics, University of Melbourne, Australia}
\email{ninht@student.unimelb.edu.au}
\author[D.~Leung]{Dennis Leung}
\address[Dennis Leung]{School of Mathematics and Statistics, University of Melbourne, Australia}
\email{dennis.leung@unimelb.edu.au}
\keywords{multiple testing, partial conjunction hypothesis, replicability analysis, meta analysis,  false discovery rate, filtering, adaptive, side information.}
\thanks{This research was supported by The University of Melbourne's Research Computing Services and the Petascale Campus Initiative.}
\subjclass[2020]{62F03}
\begin{document}

\maketitle

\begin{abstract}

Replicability  is a lynchpin for  credible  discoveries.  The partial conjunction (PC) p-value, which  combines individual base p-values  from multiple similar studies,  can  gauge  whether a feature of interest exhibits replicated  signals  across studies. However, when a large set of   features are examined as in high-throughput experiments,  testing for their replicated signals simultaneously can pose a very underpowered problem, due to both the  multiplicity burden  and   inherent limitations of PC $p$-values. This power deficiency is markedly severe when replication is demanded for all studies under consideration,  which is nonetheless the most natural and appealing benchmark for scientific generalizability a practitioner may  request. 
 
We propose ParFilter, a general framework that marries the ideas of filtering and covariate-adaptiveness to power up  large-scale testing for replicated signals as described above. It reduces the multiplicity burden by partitioning studies into smaller groups and borrowing the cross-group information  to filter out unpromising features. Moreover, harnessing side information offered by auxiliary covariates whenever they are available, it can train informative hypothesis weights to encourage rejections of features more likely to exhibit replicated signals. We prove its finite-sample control on the false discovery rate, under both independence and arbitrary dependence among the  base $p$-values across features.   In simulations as well as  a real case study on autoimmunity based on RNA-Seq data obtained from thymic cells, the ParFilter has demonstrated competitive performance against other existing methods for such replicability analyses.

\end{abstract}

\section{Introduction}\label{sec:intro}
The  replicability crisis \citep{Ioannidis2005}  has raised widespread concern about the credibility of  contemporary scientific findings. A recent article of \citet{nosek2022replicability} has recounted how psychologists have been actively confronting this  issue within their field over the past decade, as a case in point. To enhance the reliability of and assurance in scientific claims, the statistical community  have responded by  developing    rigorous methods for \emph{replicability analysis}, which aim to identify, in a principled manner,  replicable discoveries  that are supported by   data  from \emph{multiple studies} of a similar nature.  In an era of open science where  experimental data from different research groups become more  accessible,  these statistical tools have been increasingly employed in different domains, such as  immunology \citep{jaljuli2025revisiting}, natural language processing \citep{dror2017replicability} and developmental neuroscience \citep{meltzoff2018infant}. In this paper, we  will further contribute a timely method for such analyses of  modern  high-throughput experimental data.

Consider a typical meta-analytical  framework that consists of $n$ comparable studies, each examining the same  set of $m$ features. Adopting the shorthand $[\ell] \equiv \{ 1, \dots, \ell \}$ for any natural  number $\ell \in \bN \equiv \{1,2, \dots\}$,  suppose $\mu_{ij} \in \mathbb{R}$ is an unknown effect parameter corresponding to feature $i \in [m]$ in study $j \in [n]$, and $\mathcal{A}_i \subseteq \bR$ is a common \textit{null} parameter region across the $n$ studies for feature $i$. The effect specific to study $j$ is considered a scientifically interesting finding (i.e., a \textit{signal}) if $\mu_{ij}$ does not belong to $\mathcal{A}_i$. Consequently, the query of whether $\mu_{ij}$ is a signal can be  formulated as  testing the null hypothesis
\begin{equation} \label{individual_hypothesis}
H_{ij}: \mu_{ij} \in \cA_i
\end{equation}
against the alternative  $\mu_{ij} \not \in \cA_i$. In this paper, we  exemplify the problem with a meta-analysis on thymic gene differential expression in relation to autoimmunity, the details of which will be given  in \secref{realdata}. 

\begin{example}[Meta-analysis on differential gene expression in thymic medulla for autoimmune disorder]\label{ex:DGE}
 We consider a meta-analysis on RNA-Seq data obtained from medullary thymic epithelial cells (mTECs) for $m = 6,587$ genes compiled from $n = 3$ independent mouse studies. For each $(i,j) \in [6587] \times [3]$, $\mu_{ij} \in \mathbb{R}$ is  the mean difference in expression level of gene $i$ between healthy and autoimmune mice in study $j$. If $\mu_{ij} \neq 0$ (corresponding to $\mathcal{A}_i = \{0\}$), then gene $i$ in study $j$ is implicated as a potential marker for autoimmunity, as its expression differs between healthy and autoimmune mice, on average.
 \end{example}

Testing the  individual hypotheses $H_{ij}$ is certainly central to each study. However, to reinforce the generalizability of scientific findings, one may  ask whether a given feature $i$ has \textit{replicated} signals across  studies.  First advocated by \citet{FRISTON2005661}, \citet{Benjamini2008} and \citet{Benjamini2009}, the following paradigm  addresses the latter problem: 
For a given number $u \in [n]$ that represents the required \textit{level of replicability}, declare the effect of feature $i$ as being $u/[n]$ \textit{replicated} if at least $u$ of  $\mu_{i1},\dots,\mu_{in}$ are signals. This can be formulated as testing the \textit{partial conjunction} (PC) null hypothesis 
\begin{equation} \label{PC_null}
    H^{u/[n]}_{i} :  |\{ j \in [n]: \mu_{ij} \in \mathcal{A}_i \}| \geq n - u +1
\end{equation}
which states that at least $n-u+1$ of the effects $\mu_{i1}, \dots, \mu_{in}$ belong to the null set $\cA_i$, or equivalently, no more than $u-1$ of them are signals. Setting $u =1$  amounts to testing the \textit{global null hypothesis} that $H_{ij}$ is true for all $j \in [n]$, an inferential goal pursued in many past meta-analyses, but one that is the least demanding in terms of replicability. At the other end, setting $u = n$, the maximum replicability level, offers the  most conceptually appealing yet strictest benchmark for assessing the generalizability of scientifically interesting findings.


Assume the primary data available from the  studies allow a statistician to directly access or otherwise construct a   $p$-value $p_{ij}$ for each $H_{ij}$. For  meta-analyses like \exref{DGE}, the  gene expression data  are often available on public repositories such as the Gene Expression Omnibus\footnote{\url{https://www.ncbi.nlm.nih.gov/geo/}} (GEO)  and can be processed into  $p$-values using bioinformatics pipelines such as \texttt{limma} \citep{Ritchie2015}. 
In such large-scale experiments with many features (e.g. the $6,587$ genes  in \exref{DGE}),  it is natural to aim for controlling the \textit{false discovery rate} (FDR), i.e. the expected proportion of Type I errors among the discoveries, when simultaneously testing the PC nulls in \eqref{PC_null} for all  $i \in [m]$. In the present context, suppose  $\widehat{\cR} \subseteq [m]$ is a set of rejected PC nulls determined with  the available data, the FDR for $\widehat{\cR}$ is then defined as 
\begin{equation*} 
    \FDR_{\rep} = \FDR_{\rep}(\widehat{\cR}) \equiv \mathbb{E}\left[ \frac{ \sum_{i \in \cHun} \1 \{ i \in \widehat{\cR} \} }{ 1 \vee \sum_{i \in [m]} \1 \{ i \in \widehat{\cR} \} } \right],
\end{equation*}
where $a \vee b \equiv \max(a,b)$ for any real numbers $a,b \in \mathbb{R}$ and
\begin{equation} \label{PC_true_null_set}
    \cHun \equiv \{i \in [m]: H_i^{u/[n]} \text{ is true} \}
\end{equation}
is the set of all features whose PC null hypotheses are true. 

A ``vanilla'' protocol for controlling the $\FDR_{\rep}$  is to apply the Benjamini-Hochberg (BH) procedure \citep{Benjamini1995} to a set of $m$ $p$-values for assessing the PC nulls, also commonly referred to as partial conjunction (PC) $p$-values. In the literature, a PC $p$-value for testing $H_i^{{}_{u/[n]}}$, denoted as $p^{{}_{u/[n]}}_i$, is typically constructed by combining the base $p$-values $p_{i1}, \dots, p_{in}$  following the formulation proposed by \citet{Benjamini2008} and later extended by \citet{Wang2019}.  Although seemingly natural, this vanilla approach is known to have low power; this is  not only due to the multiplicity correction required for the FDR control, but also due to the inherent limitations of PC p-values. In \secref{deprived_power_sim}, we will re-examine this issue with a particular focus on  the maximum replicability level $u=n$, which represents the most natural benchmark for replicability of discoveries  a typical practitioner may wish for, but unfortunately also poses the most power-deficient testing problem.

The ParFilter is our  proposed method to  powerfully test for  replicated  signals in the  meta-analytical setting above with strong (finite-sample)  $\FDR_\rep$ control guarantees. In a nutshell, as the name suggests,  it operates by first \textbf{par}titioning the $n$ studies into different groups and  borrowing the between-group information to \textbf{filter} out features that are unlikely to be $u/[n]$ replicable; this reduces the multiplicity correction required and set the scene for high testing power. Moreover, it can incorporate useful \emph{side information} provided by auxiliary covariates, on top of the  $p$-values, to decide on the features that are   $u/[n]$ replicable. Specifically, for each study $j$, the statistician may optionally  have access to  a covariate $x_{ij} \in \mathbb{R}^{d_j}$  that is also informative for  testing $H_{ij}$. For instance, in \exref{DGE}, $x_{ij}$ can be taken as the differential expression of gene $i$ in cells from a different part of the thymus other than the medulla, such as the  cortex (\secref{realdata}). 
   This  reflects the recent trend in the multiple testing literature of integrating side information to enhance power.

Therefore, the ParFilter seamlessly marries two  lines of current  research -- replicability analysis and covariate-adaptive inference, and we highlight our  contributions in context: 
  \begin{enumerate} [(i)]
 \item  The ParFilter substantially generalizes the two $\FDR_\rep$ controlling procedures of   \citet[Sec. 3.2 \& 4.2]{Bogolomov2018}, which were only designed for  assessing $2/[2]$ replicability using the filtering strategy. Precisely, in the special case where $n = u = 2$ and covariates are unavailable, our \methodname  boils down to them under particular tuning parameter settings; see \secref{PF}. Their powerful procedures  provide   finite-sample  $\FDR_\rep$ control guarantees   under standard assumptions  \citep[Theorems 2 \& 4]{Bogolomov2018}, and can therefore be considered as gold standards to be pursued for further generalization.  
 
Nonetheless,  the original proofs for $\FDR_\rep$ control  in \citet{Bogolomov2018}  involve bounding sums of probabilities over intricate intersecting events, which appears too difficult to be directly extended for more than two studies. As a technical innovation, we take a different route by harnessing  techniques inspired by the recent developments in \citet{pfilter2019} and \citet{Katsevich2023} for  ``dependency control''  \citep{Blanchard2008}, and provide relatively transparent proofs of \methodnamens's finite-sample $\FDR_{\rep}$ guarantees  (\secref{main_results}) for arbitrary $u \leq n$.

 \item To  our best knowledge, the \methodname is   the only covariate-adaptive  procedure  specialized for $\FDR_\rep$ controlling multi-study replicability analysis thus far.  Due to their useful side information offered for testing, auxiliary covariates have been incorporated into multiple testing within a single study for some time. The related literature encompasses  Type I error rates beyond the FDR and resists a succinct summary; we refer to \citet{Lei2018, Ignatiadis2021, Leung2022, rencandesAOAS2023,  freestone2024semi} for examples of covariate-adaptive testing procedures with finite-sample FDR guarantees that are most relevant to our present work.

 To boost  testing power while ensuring its finite-sample $\FDR_\rep$  guarantees, the \methodname trains informative hypothesis weights, which can encourage rejections of features more likely to contain replicated signals,  from the covariate-augmented data 
 in a  manner that naturally respects the partitioning structure of the studies  in the filtering step (\secref{powerful_weights}). This novel  approach  shares a similar spirit as  the ``cross-weighting'' approach in the single-study FDR work of    \citet{Ignatiadis2021}, where  hypothesis weights are cross-trained on randomly or manually defined $p$-value folds alongside the covariates. 

 \end{enumerate}

We  also compare with two existing   methods  that  employ the general idea of filtering in different ways for power enhancement in $\FDR_{\rep}$ controlling testing. The  AdaFilter-BH procedure of \citet[Section 4.2]{adafilter}  first filters out  features that are unlikely to exhibit $(u-1)/[n]$ replicability and then computes a final rejection set for $u/[n]$ replicability from the remaining features. While being powerful, since its rejection set is derived from the same $p$-values used for  filtering, it only offers  asymptotic $\FDR_{\rep}$ control under  limiting  assumptions. On the other hand, the CoFilter of \citet{Dickhaus2024}, like the ParFilter, offers  finite-sample $\FDR_\rep$ control for an arbitrary number of studies $n$. It forms a selection set $\cS = \{ i \in [m]: p^{{}_{u/[n]}}_i \leq \tau \}$ using a pre-determined threshold $\tau \in (0,1)$, and then runs a BH procedure on the \textit{selection-adjusted} PC $p$-values in $\{ p^{u/[n]}_i/\tau : i \in \cS \}$. While its power benefits from filtering, the gain is limited because selection-adjusted PC $p$-values are more conservative than ordinary PC $p$-values. Further  comparisons will be explored in our simulation studies  (\secref{sim}).

Other inferential methods for partial conjunction hypotheses  not already discussed include those of \citet[Sec. 5.3]{Bogomolov2023} and \citet[Sec. 4(b)]{Benjamini2009}, which provide lower bounds on the number of studies in which feature $i$ shows an effect, while controlling the false coverage rate \citep{Benjamini2005}.  The methods proposed in \citet{Chung2014} and \citet{Heller2014} provide replicability inferences specifically under the popular two-group modeling assumption of \citet{Efron2001}. \citet{Li2021} developed a method --- building on the model-X framework of \citet{Candes2018} --- to assess replicability in a regression setting, where the goal is to determine whether explanatory variables are consistently associated with a response variable across $n$ sampled environments. \citet[Sec. 3.2 \& 4.2]{Sun2015} analyzed spatial signal clusters by formulating them as partial conjunction testing problems.  These references are certainly not exhaustive, and we refer to \citet{BogomolovHeller2023} for a comprehensive review. 


Our paper is structured as follows: \secref{PC} reviews PC $p$-values and re-examines their power limitations. \secref{MainPF} provides the layout of the \methodnamens, outlining its implementation and finite-sample FDR guarantees (\thmsref{error_control} and \ref{thm:error_control_dep}). \secref{sim} demonstrates the competitiveness of the \methodname against existing $\FDR_{\rep}$ procedures using simulated data. \secref{realdata} explores the \methodnamens's performance in the replicability analysis of the real data from \exref{DGE}. Finally, \secref{conclusion} concludes with a discussion on potential extensions. An R package implementing the \methodname is available at \url{https://github.com/ninhtran02/Parfilter}. 

\subsection{Additional notation}\label{sec:notation} 
$\Phi(\cdot)$ denotes a standard normal distribution function.   For two vectors ${\bf a}$ and ${\bf b}$ of the same length, $\textbf{a} \matleq \textbf{b}$ (resp. $\textbf{a} \matgeq \textbf{b}$) denotes that each element in $\textbf{a}$ is less (resp. greater) than or equal to its corresponding element in $\textbf{b}$;  for two matrices $\textbf{A}$ and $\textbf{B}$ of equal dimensions, $\textbf{A} \matleq \textbf{B}$ and $\textbf{A} \matgeq \textbf{B}$ are defined analogously by comparing their corresponding elements.  For any two real numbers $a,b \in \mathbb{R}$, $a \vee b \equiv \max(a,b)$ and $a \wedge b \equiv \min(a, b)$. 
  Given a natural number $\ell \in \mathbb{N}$, a real valued function $f: \mathcal{D}  \to \mathbb{R}$ defined on  a subset $\mathcal{D} \subseteq \mathbb{R}^{\ell} $  is said to be non-decreasing (resp. non-increasing) if $f(\mathbf{x}_1) \leq f(\mathbf{x}_2)$ (resp. $f(\mathbf{x}_1) \geq f(\mathbf{x}_2)$) for any vectors $\mathbf{x}_1,\mathbf{x}_2 \in \mathbb{R}^{\ell}$ such that $\mathbf{x}_1 \matleq \mathbf{x}_2$. For a vector of real numbers $\mathbf{a} \in \mathbb{R}^{\ell}$,   the vector $(1,\mathbf{a}) \in \mathbb{R}^{1 + {\ell}}$ denotes the concatenation of $1$ and $\mathbf{a}$.  For $\ell_1,\ell_2 \in \mathbb{N}$ such that $\ell_1 \leq \ell_2$, let $[\ell_1:\ell_2] \equiv \{ \ell_1,  \ell_1 + 1, \dots, \ell_2 \}$. For a set of numbers $\cJ \subseteq \bN$, let $|\cJ|$ denote its cardinality.

\section{Multiple testing with PC $p$-values}\label{sec:PC}
We first review some fundamentals of partial conjunction testing. Consider a generic context where  $\theta_1, \dots, \theta_n$ are $n$ parameters of interest and $\Theta_j$ is a set of null values for $\theta_j$ for each $j \in [n]$. Let $p_1,\dots,p_n \in [0,1]$ be the base $p$-values for testing the null hypothesis $ \theta_1 \in \Theta_1$, $\dots$, $\theta_n \in \Theta_n$ respectively, and suppose these $p$-values are \textit{valid}, i.e.,
\begin{equation} \label{generic_pv_valid}
    \Pr(p_j \leq t)\leq t \text{ for all } t \in [0, 1]  \text{ if the null hypothesis }
 \theta_j \in \Theta_j \text{ is true}. 
\end{equation}
For a given replicability level $u \in [n]$, the partial conjunction (PC) null hypothesis that at least $n- u + 1$ parameters among $\theta_1,\cdots,\theta_n$ are null is denoted as
\begin{equation*}
    H^{u/[n]}: |\{ j \in [n] : \theta_j \in \Theta_j \} | \geq n - u + 1.
\end{equation*}
In the literature, $H^{u/[n]}$ is typically tested by combining the base $p$-values $p_1,\dots,p_n$ to create a \textit{combined} $p$-value that is both \textit{valid} and \textit{monotone}:
 
\begin{definition}[Valid and monotone combined $p$-values for testing $H^{u/[n]}$]\label{def:monotone_valid_combined_pc_pv}
For a given combining function  $f: [0,1]^n \to [0,1]$, the combined $p$-value $f(p_1,\dots,p_n)$ is said to be
\begin{enumerate}[\indent (a)]
\item  valid for testing $H^{u/[n]}$ if 
    $\Pr(f(p_1,\dots,p_n) \leq t) \leq t  \text{ for all }    t \in [0, 1] \text{ under $H^{u/[n]}$, and }$
\item monotone if $f$ is non-decreasing.
\end{enumerate}
\end{definition}

The combined $p$-value properties outlined in \defref{monotone_valid_combined_pc_pv} are minimal because validity is required for ensuring Type I error control, and monotonicity is sensible for scientific reasoning; a more nuanced discussion on monotonicity can be found in \citet[Sec 3.2]{Wang2019}. For testing $H^{1/[n]}$, i.e., the \textit{global null hypothesis} that all $\theta_1 \in \Theta_1, \dots, \theta_n \in \Theta_n$ hold true, perhaps the most well known combining function is 
 \[
 f_{\text{Bonferroni}}(p_1, \dots, p_n)  \equiv (n \cdot \min(p_1, \dots, p_n) ) \wedge 1.
 \]
In words, $f_{\text{Bonferroni}}$ applies a Bonferroni correction to the minimum base $p$-value, and the resulting combined $p$-value is usually a very conservative. More powerful functions for combining $p$-values to test global nulls are available, and a comprehensive list of such functions is provided in \citet[Section 2.1]{Bogomolov2023}. Below, we present three common choices:
\begin{itemize}
 \item \textbf{\citet{Fisher1973}'s combining function:}
    \begin{equation*}
        f_{\text{Fisher}}(p_{1},\dots,p_n) \equiv 1 - F_{\chi^2_{(2 n)}} \left( -2 \sum_{j \in [n]} \log(p_{j}) \right),
    \end{equation*}
    where $F_{\chi^2_{(2 n)}}$ is the chi-squared distribution function  with $2  n$ degrees of freedom.
      \item \textbf{\citet{Stouffer1949}'s combining function:}
    \begin{equation*}
        f_{\text{Stouffer}}(p_{1},\dots,p_n) \equiv 1 - \Phi \left( \frac{\sum_{j \in [n]} \Phi^{-1}(1 - p_{j})}{\sqrt{n}} \right);
    \end{equation*}
   
    \item \textbf{\citet{Simes1986}'s combining function}:
    \begin{equation*}
        f_{\Simes}(p_{1},\dots,p_n) \equiv \min_{j \in [n]} \frac{n p_{(j)}}{ j },
    \end{equation*}
     where $p_{(1)}\leq \dots \leq p_{(n)}$ are the ordered statistics for $p_{1} \leq \dots \leq p_n$.
  \end{itemize}
Assuming $p_1, \dots, p_n$ are independent and property \eqref{generic_pv_valid} holds, all four combining functions above (i.e., Bonferroni, Fisher, Stouffer, and Simes) give valid combined $p$-values for testing $H^{1/[n]}$. Moreover, these combining functions are all evidently non-decreasing.

To test $H^{u/[n]}$ for a general $u > 1$, the corresponding combined $p$-value, commonly referred to as a PC $p$-value, is typically derived using the \textit{generalized Benjamini-Heller} combining method \citep{Benjamini2008, Wang2019}. The resulting PC $p$-value is known as a \textit{GBHPC} (\textbf{g}eneralized \textbf{B}enjamini-\textbf{H}eller \textbf{p}artial \textbf{c}onjunction) $p$-value:

\begin{definition}[GBHPC  $p$-value for testing $H^{u/[n]}$]  \label{def:gbhpc} 
Suppose $u \in [n]$ is given.
For each subset $\cJ = \{j_1, \dots, j_{n-u+1}\}\subseteq [n]$ of $n - u +1$ ordered elements  $1 \leq j_1 < \dots <  j_{n - u +1} \leq n$, let 
\[
\mathbf{p}_\cJ = (p_{j_1}, \dots, p_{j_{n - u +1}}) \text{ and } {\boldsymbol \theta}_\cJ = (\theta_{j_1}, \dots, \theta_{j_{n - u +1}})
\]
 be the  vectors  of $p$-values and parameters corresponding to $\cJ$. Moreover, let 
 \[
 f_{\cJ}: [0, 1]^{n-u+1} \longrightarrow [0, 1]
\]
 be a non-decreasing function where $f_\cJ( \mathbf{p}_\cJ)$ is valid for testing 
 \[H^{1/\cJ}: {\boldsymbol \theta}_\cJ  \in \Theta_{j_1} \times \cdots \times \Theta_{j_{n-u+1}},\]
   the restricted global null hypothesis that all the base nulls corresponding to $\cJ$ are true. Define the GBHPC $p$-value for testing $H^{u/[n]}$ to be
\begin{equation}\label{u_combine_p}
   f^\star(p_1,\dots,p_n;u)\equiv 
     \max \{ f_\cJ( \mathbf{p}_\cJ): \mathcal{J} \subseteq [n]  \text{ such that } |\cJ| = n-u+1\},
\end{equation}
i.e., the largest combined $p$-value $f_\cJ( \mathbf{p}_\cJ)$ for testing the restricted global nulls $H^{1/\cJ}$ among all $\cJ \subseteq [n]$ with cardinality $|\cJ| = n-u+1$.
\end{definition}
As a concrete example, if we take each $f_\cJ$ in \eqref{u_combine_p} to be Stouffer's combining function in $n - u +1$ arguments, then
\begin{equation}\label{stouffer_GBHPC}
f^\star(p_1,\dots,p_n) = f_{\text{Stouffer}}(p_{(u)}, \dots, p_{(n)}) = 1 - \Phi \left( \frac{\sum^n_{j = u} \Phi^{-1}(1 - p_{(j)})}{\sqrt{n - u + 1}} \right).
\end{equation}
Intuitively, \eqref{stouffer_GBHPC} makes sense as a $p$-value for testing $H^{u/[n]}$ because it is a function of the largest $n - u +1$ base $p$-values only. By \citet[Proposition 1]{Wang2019}, GBHPC $p$-values (\defref{gbhpc}) are valid for testing $H^{u/[n]}$ (\defref{monotone_valid_combined_pc_pv}$(a)$) because every $f_\cJ(\mathbf{p}_\cJ)$ in \eqref{u_combine_p} is valid for testing their corresponding restricted global null  $H^{1/\cJ}$; moreover, they are monotone (\defref{monotone_valid_combined_pc_pv}$(b)$) because every $f_\cJ$ in \eqref{u_combine_p} is non-decreasing. Under mild conditions, \citet[Theorem 2]{Wang2019} has also demonstrated that any admissible monotone combined $p$-value must necessarily take the form of \eqref{u_combine_p}.

\begin{figure}
    \centering
    \includegraphics[width=0.75\linewidth]{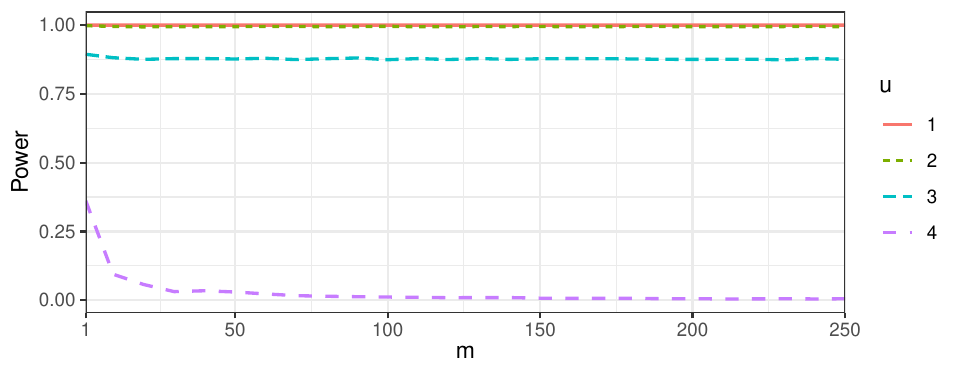}
    \caption{Power of the BH procedure (with $\FDR$ target $0.05$) applied to $m = 1,10,20,\cdots,250$ PC $p$-values under   replicability levels $u = 1, 2, 3, 4$, based on a simulation experiment with a total of $n = 4$ studies. By design, $H^{u/[n]}_i$ is false for each $(i, u) \in [m] \times [4]$, and its corresponding PC $p$-value $p^{u/[4]}_i$ is constructed from the base $p$-values $p_{i1}, \dots, p_{i4}$ using the generalized Benjamini–Heller formulation in \defref{gbhpc}, with Stouffer’s combining function.}
    \label{fig:PCdemo}
\end{figure}

  \subsection{The deprived power of partial conjunction test when $u = n$} \label{sec:deprived_power_sim}
 
Maximum replicability ($u = n$) is the most natural and appealing requirement for ensuring the generalizability of scientific findings. However, it also posits the most power-deprived partial conjunction testing problem, because the underlying composite null parameter space of $H^{u/[n]}$ becomes larger as $u$ increases \citep[Sec. 2.1]{BogomolovHeller2023}. What may be surprising is how disproportionately underpowered the case of $u = n$ can be, relative to other cases with $u < n$, especially when multiple PC nulls $H_1^{{}_{u/[n]}}, \dots, H_m^{{}_{u/[n]}}$ are to be tested simultaneously as in \secref{intro}. 

To illustrate, consider a simple scenario with $n = 4$ studies, where all the base null hypotheses $H_{ij}$ are \emph{false} with their corresponding $p$-values generated as 
\[
p_{ij} \overset{\tiny \text{iid}}{\sim} \text{Beta}(0.26,7) \text{ for each } (i, j) \in [m] \times [4],
\]
which are highly right-skewed and stochastically smaller than a uniform distribution. At a given replicability level $u = 1, \dots, 4$, we assess $H_i^{u/[4]}$ by forming a PC $p$-value $p^{{}_{u/[4]}}_i$ from $p_{i1}, \dots, p_{i4}$ for each feature $i$, using the Stouffer-GBHPC formulation given in \eqref{stouffer_GBHPC}.  We then apply the BH procedure to the $m$ PC $p$-values at a nominal FDR level of 0.05. For various numbers of hypotheses tested, i.e., $m \in \{ 1, 10, 20, \dots, 250 \}$, \figref{PCdemo} shows the power of this standard testing procedure, estimated by the average proportion of rejected PC null hypotheses out of the total $m$ over 500 repeated experiments. 

For a single test at $m = 1$, the power  is $1.00$, $1.00$, $0.89$, and $0.37$ for $u = 1, 2, 3, 4$, respectively; the significant drop in power from $u = 3$ to $u = 4$ highlights how highly conservative a single PC $p$-value can already be under $u = n$. As $m$ increases, the power for testing $4/[4]$ replicability further drops rapidly towards $0$, while the power for other replicability levels remains quite stable throughout. This illustrates that, in addition to their poor individual power,  PC $p$-values at maximum replicability level $u = n$ can be extremely sensitive to standard multiplicity correcting methods such as the BH procedure, and yield power too low to be detected. 

In \appref{deprived_power}, we repeat these simulations with $n$ varying from 2 to 8, and also examine the power when Fisher’s combining function is used in place of Stouffer’s in the GBHPC $p$-value formulation of $p^{{}_{u/[n]}}_1,\dots,p^{{}_{u/[n]}}_m$. The results
 there
exhibit the same pattern: a sharp drop in power when moving from $u = n - 1$ to $u = n$, and a rapid decline to zero power as $m$ increases when $u = n$. A general discussion on the low power of combined $p$-values in partial conjunction hypothesis testing can be found in \citet[Sec. 1.3]{Liang2025} and the references therein. However, a complete theoretical explanation or lower bound for the sharp drop in power of a GBHPC $p$-value at $u = n$ currently eludes us, though we suspect that the results of \citet{Weiss1965, Weiss1969} on the distribution of sample spacings $p_{(j+1)} - p_{(j)}$ for $j \in [n-1]$ may prove useful in developing one.

\section{The \methodname}\label{sec:MainPF}
In this section, we introduce our \methodname for simultaneously testing the PC nulls  in \eqref{PC_null}. In the sequel, we will use 
\begin{equation*}
\mathbf{P} \equiv (p_{ij})_{i \in [m],j \in [n]} \quad \text{ and } \quad \mathbf{X} \equiv (x_{ij})_{i \in [m], j \in [n]}
\end{equation*}
to respectively denote the $m \times n$ matrices of  $p$-values and covariates for the base hypotheses in \eqref{individual_hypothesis}, with the understanding that the covariates  $x_{ij}$ within the same study $j$  have a common dimension $d_j \geq 1$. If covariates are unavailable for a given study $j$, the $j$-th column of $\mathbf{X}$ is  taken to contain entries of $0$'s as constants. 

We  also need to denote submatrices or subvectors of $\mathbf{P}$ or $\mathbf{X}$. If we are given  $i \in [m]$, $j \in [n]$, $\cI \subseteq [m]$ and $\cJ \subseteq [n]$ as row or column indices, then $(-i)$, $(-j)$, $(-\cI)$ and $(-\cJ)$ are shorthands for their complements $[m] \backslash \{i\}$, $[n] \backslash \{j\}$, $[m] \backslash \cI$ and $[n] \backslash \cJ$ respectively, and ``$\cdot$'' represents either all the rows or all the columns. For examples,  $\mathbf{P}_{i \cdot} \equiv (p_{i1},\cdots,p_{in})$ denotes the $i$th row  of $\mathbf{P}$ as a vector,  $\mathbf{P}_{\cdot \cJ}$ denotes the whole matrix $\mathbf{P}$ restricted to the columns in  $\cJ$,  $\mathbf{P}_{i \cJ}$ denotes the $i$th row of $\mathbf{P}_{\cdot \cJ}$, 
and $\mathbf{P}_{(-\cI) \cJ}$ denotes $\mathbf{P}_{\cdot \cJ}$ restricted to the rows in $[m] \backslash \cI$. Other such submatrices or subvectors should be self-explanatory from their context. 



These elementary  assumptions  will ground  our theoretical results throughout:
 
 \begin{assumption}[Independence across studies conditional on ${\bf X}$]  \label{assump:indepstudies}
$\mathbf{P}_{\cdot 1}, \dots, \mathbf{P}_{\cdot n}$ are mutually independent 
conditional on $\mathbf{X}$.
 \end{assumption}

\begin{assumption}[Valid base $p$-values]\label{assump:superuniformp}
    Each base $p$-value in $\mathbf{P} \in [0,1]^{m \times n}$ satisfies
    \begin{equation*}
    \Pr(p_{ij} \leq t) \leq t \quad \text{for all $\ t \in [0,1]$ whenever $H_{ij}$ is true}.
\end{equation*}
    In other words, $p_{ij}$ is stochastically no smaller than a uniform distribution  (i.e. superuniform) when $H_{ij}$ is true. 
\end{assumption}

\begin{assumption}[Independence between $p_{ij}$ and ${\bf X}$ under the null]\label{assump:pxindependence}
Given $H_{ij}$ is true, then $p_{ij}$ is independent of ${\bf X}$.  
\end{assumption}

 When the covariates can be considered as deterministic, such as our case study in \secref{realdata}  where the covariates are computed from external datasets,  \assumpref{indepstudies}  is natural  since meta-analyses typically consist of independently conducted studies. In other occasions, within each study, the covariates are induced from the same data used to compute the base $p$-values and are  more appropriately treated as random.  An example is   \citet{bottomly2011evaluating}'s well known dataset in the single-study  FDR testing literature \citep{Lei2018, Leung2022,Zhang2022, ignatiadis2016data}, where for each gene (a feature), the $p$-value corresponds to its differential  expression  between two mouse strains and the covariate is the mean normalized read counts across samples. With multiple studies, as long as the $n$ covariate-augmented datasets $\{{\bf P}_{\cdot 1}, {\bf X}_{\cdot 1}\}, \cdots \{{\bf P}_{\cdot n}, {\bf X}_{\cdot n}\}$ are independent, \assumpref{indepstudies} is obviously satisfied.

The superuniformity of null $p$-values  (\assumpref{superuniformp}) is a standard starting point for hypothesis testing.  In virtually all covariate-adaptive testing literature in the single-study context, it has always been assumed that the covariates do not influence the distribution of a null $p$-value as a   ``principle for information extraction'' \citep{CARS2019}, and such covariates naturally exist in many applications 
\citep{ignatiadis2016data, boca2018direct}. \assumpref{pxindependence}  is therefore a multi-study generalization of this principle, and is trivial if ${\bf X}$ is deterministic. If ${\bf X}$ is random and the covariate-augmented datasets $\{{\bf P}_{\cdot 1}, {\bf X}_{\cdot 1}\}, \cdots \{{\bf P}_{\cdot n}, {\bf X}_{\cdot n}\}$ are independent, \assumpref{pxindependence}  can be restated as ``\emph{$p_{ij}$ is independent of ${\bf X}_{\cdot j}$ when $H_{ij}$ is true}''  as in the single-study case; see \citet[Assumption 1(b) and 2(b)]{Ignatiadis2021} for instance.

%


The \methodname begins by specifying a \textit{testing configuration for $u/[n]$ replicability}.

\begin{definition}[\text{Testing configuration for $u/[n]$ replicability}]\label{def:partition}
For a given number $K \in [u]$, let 
\begin{equation*}
    \cG_1,\cG_2,\dots,\cG_K \subseteq [n]
\end{equation*}
be disjoint non-empty subsets, referred to as groups, that form a partition of the set $[n]$.
Let $w_1,\cdots,w_K \in (0,1]$ be numbers, referred to as local error weights, that sum to one, i.e.,
\begin{equation*}
    \sum_{k \in [K]} w_k = 1.
\end{equation*}
For each feature $i \in [m]$, let $\mathbf{u}_i \equiv (u_{i1},\dots,u_{iK})$ be a vector of $K$ natural numbers, referred to as local replicability levels, that satisfies
\begin{align}\label{uvec}
    u_{ik} \in [|\cG_k|]  \text{ for all } k \in [K]  \quad \text{ and } \quad  \sum_{k \in [K]} u_{ik} = u.
\end{align}
Then the combination
\begin{equation}\label{testing_config_notation}
	\{ \cG_k, w_k , \mathbf{u}_i\}_{k \in [K], i \in [m]}
\end{equation}
is known as a testing configuration for $u/[n]$ replicability. This configuration can be fixed or dependent on the covariates $\mathbf{X}$, but must not be dependent on the base $p$-values $\mathbf{P}$.
\end{definition}

With respect to a testing configuration  \eqref{testing_config_notation}, for each feature $i$ in a given group $k$, one can  define the \textit{local} PC null hypothesis  
\begin{equation}\label{local_PC_null}
    H_i^{\uiG}:  |\{ j \in \cG_k: \mu_{ij} \in \mathcal{A}_i \}| \geq |\cG_k| - u_{ik} +1,
\end{equation}
which asserts that there are no more than $u_{ik} -1$ true signals among the $|\cG_k| $ effects, as well as its corresponding \emph{local} PC $p$-value
\begin{equation} \label{generic_local_pv}
    p^{\uiG}_i \equiv f_{ik} (\mathbf{P}_{i \cG_k}; u_{ik}),
\end{equation}
where $f_{ik}$ is a function that combines the base $p$-values in $\mathbf{P}_{i \cG_k}$ and also depends on $u_{ik}$. The \methodname will leverage the  property associated with a given testing configuration in the following lemma to control the $\FDR_{\rep}$ when testing the PC nulls  in \eqref{PC_null}:

\begin{lemma}[$\FDR_{k}$ control $\Rightarrow$ $\FDR_{\rep}$ control]\label{lem:FDR_rep_control_from_partition}
Suppose $\{ \cG_k, w_k , \mathbf{u}_i\}_{k \in [K], i \in [m]}$ is a testing configuration  and 
\begin{equation*}
    \cH_k \equiv \{ i \in [m]: \text{$H^{\uiG}_i$ is true} \}
\end{equation*}
is the set of true local PC nulls for each group $k \in [K]$. Let $\widehat{\cR} \subseteq [m]$ be an arbitrary data-dependent rejection set and 
  \begin{equation*}
    \FDR_{k} = \FDR_{k}(\widehat{\cR}) \equiv \mathbb{E} \left[ \frac{\sum_{i \in  \cH_k}  I \{ i \in \widehat{\cR} \}}{|\widehat{\cR}| \vee 1} \right]
\end{equation*}
be the FDR entailed by rejecting the local PC nulls in group $k$ for the features in $\widehat{\cR}$. For an $\FDR_{\rep}$ target $q \in (0,1)$, if $\widehat{\cR}$ is such that $\FDR_{k}(\widehat{\cR}) \leq w_k \cdot q \text{ for each } k \in [K]$,
    then $\FDR_{\rep} (\widehat{\cR})\leq q$.
\end{lemma}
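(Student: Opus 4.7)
The plan is to exploit a simple pigeonhole argument that relates truth of the global PC null to truth of at least one local PC null, and then to apply a union-bound-style decomposition of the numerator in $\FDR_{\rep}$.

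First I would establish the set-inclusion
\begin{equation*}
\cHun \subseteq \bigcup_{k \in [K]} \cH_k.
\end{equation*}
To see this, fix $i \in \cHun$ and suppose for contradiction that $i \notin \cH_k$ for every $k \in [K]$, i.e., every local PC null $H_i^{\uiG}$ is false. By the definition of $H_i^{\uiG}$ in \eqref{local_PC_null}, this would mean that in each group $k$ strictly more than $|\cG_k| - u_{ik}$ of the effects $\{\mu_{ij}: j \in \cG_k\}$ are null, i.e., strictly fewer than $u_{ik}$ are null in the complementary sense; phrased in terms of signals, group $k$ contains at least $u_{ik}$ signals. Summing over $k \in [K]$ using $\sum_{k} u_{ik} = u$ from \eqref{uvec} gives at least $u$ signals among $\mu_{i1},\dots,\mu_{in}$, contradicting $H_i^{u/[n]}$ being true. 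Hence the inclusion holds.

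Next I would bound $\FDR_{\rep}(\widehat{\cR})$ by replacing $\cHun$ with $\bigcup_k \cH_k$ in the numerator and then splitting the sum over $k$:
\begin{equation*}
\FDR_{\rep}(\widehat{\cR}) = \mathbb{E}\!\left[ \frac{\sum_{i \in \cHun} \1\{i \in \widehat{\cR}\}}{|\widehat{\cR}| \vee 1} \right] \leq \mathbb{E}\!\left[ \frac{\sum_{i \in \bigcup_{k} \cH_k} \1\{i \in \widehat{\cR}\}}{|\widehat{\cR}| \vee 1} \right] \leq \sum_{k \in [K]} \mathbb{E}\!\left[ \frac{\sum_{i \in \cH_k} \1\{i \in \widehat{\cR}\}}{|\widehat{\cR}| \vee 1} \right],
\end{equation*}
where the first inequality uses the inclusion just proved and the second is a straightforward union bound on indicators. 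The right-hand side is exactly $\sum_{k \in [K]} \FDR_k(\widehat{\cR})$, and applying the hypothesis $\FDR_k(\widehat{\cR}) \leq w_k q$ together with $\sum_k w_k = 1$ yields $\FDR_{\rep}(\widehat{\cR}) \leq q$.

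There is essentially no technical obstacle here; the only delicate step is the pigeonhole inclusion, which must be phrased carefully to handle the counting on both the null and signal sides consistently with \eqref{PC_null} and \eqref{local_PC_null}. Everything else is a direct manipulation of expectations and indicators, so the proof should be short and self-contained.
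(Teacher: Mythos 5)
Your proposal follows essentially the same route as the paper's proof: the inclusion $\cHun \subseteq \bigcup_{k \in [K]} \cH_k$ established by contradiction using $\sum_{k \in [K]} u_{ik} = u$, followed by the union bound on the numerator and summation of the weights $w_k$. The only blemish is the intermediate sentence negating the local null, which is stated backwards (falsity of $H_i^{\uiG}$ means \emph{at most} $|\cG_k| - u_{ik}$ of the effects in group $k$ are null, not strictly more than $|\cG_k| - u_{ik}$), but since the conclusion you actually use---that group $k$ then contains at least $u_{ik}$ signals---is correct, the argument is sound and matches the paper's.
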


\lemref{FDR_rep_control_from_partition} is proved in \appref{FDR_k_control_gives_FDR_rep_control}. In addition to the testing configuration,  for each group $k$, the \methodname  employs two related functions $f_{\cS_k}$ and $f_{\nubf_k}$ that take $\mathbf{P}$ and $\mathbf{X}$ as inputs: The first function output a subset 
\begin{equation} \label{selected_set}
  \cS_k \equiv f_{\cS_k}(\mathbf{P}; \mathbf{X}) \subseteq [m]
\end{equation}
that aims to have as few intersecting features with the set of null features $\cH^{u/[n]}$ in \eqref{PC_true_null_set}  as possible, and the second function creates non-negative weights $\nu_{1k}, \dots, \nu_{mk} \in [0, \infty)$ by outputting a vector
\begin{align}\label{nufun}
    \nubf_k =  (\nu_{1k},\nu_{2k},\dots,\nu_{mk}) \equiv  f_{\nubf_k}(\mathbf{P};\mathbf{X}) \quad \text{ that satisfies } \quad \sum_{\ell \in \cS_k} \nu_{\ell k} = |\cS_k|.
\end{align}
Finally, for given fixed tuning parameters
\[
\lambda_1, \lambda_2, \dots,\lambda_K \in (0,1),
\]
the ParFilter considers the following \textit{candidate} rejection set
\begin{multline}\label{RrepEasy}
  \cR(\tvec) \equiv \bigcap_{k \in [K]} \left\{ i \in \cS_k :  p^{\uiG}_{i} \leq (\nu_{ik} \cdot t_k) \wedge \lambda_k \right\}  \\
  \text{ for any vector of thresholds } {\bf t} = (t_1, \dots, t_K) \in [0,\infty)^{K},
\end{multline}
and decides on a data-dependent  vector $\hattvec = (\hatt_1,\dots,\hatt_K)$, which is  plugged into \eqref{RrepEasy} to obtain a final rejection set $\widehat{\cR} = \cR(\hattvec)$. Under suitable assumptions, $\cR(\hattvec)$ will control $\FDR_{k} (\cR(\hattvec))$ below $w_k \cdot q$ for each $k \in [K]$, and therefore control $\FDR_\rep (\cR(\hattvec))$ below $q$ by virtue of \lemref{FDR_rep_control_from_partition}. 

In the following subsections, we will discuss the specifics of how  the testing configuration, local PC $p$-values, $f_{\cS_k}$,  $f_{\nubf_k}$, and $\hattvec$ should be formed. But before getting into these downstream details, one can  already observe from our overview above how the \methodname derives its testing power: 
\begin{enumerate} [(i)]
\item  The selected sets $\cS_1, \dots, \cS_K$ serve to filter out unpromising features that have little hope of demonstrating $u/[n]$ replicability. If the \emph{selection rules}  $f_{\cS_k}$'s can successfully exclude many features belonging to  $\cH^{u/[n]}$, we can substantially reduce the amount of multiplicity correction required when determining the data-dependent threshold $\hattvec$ for the rejection set. 
\item The \emph{local PC weights} $\nubf_1, \cdots, \nubf_K$ rescale the rejection thresholds, as seen in \eqref{RrepEasy}. A simple choice  is to set $\nu_{ik} = 1$ for every $i$ and $k$. However, if the weights can be learned from the data, ideally also with the help of auxiliary covariates, in such a way that $\nu_{ik}$ is large for $i \notin \cH^{u/[n]}$ and small for $i \in \cH^{u/[n]}$, then the final reject set $\cR(\hattvec)$ renders a more powerful procedure.

\end{enumerate}

\subsection{Testing configuration for the \methodname}\label{sec:choose_grouping_config}
Our suggestion for setting the testing configuration for maximum replicability is straightforward:
\begin{example}[Simple testing configuration for $u = n$]\label{ex:suggested_testing_config_ueqn}
    Let $K = n$. For each $k \in [n]$, let
    \begin{equation*}
        \cG_k = \{ k \}, \quad w_k = 1/n, \quad \text{and} \quad  u_{ik} = 1 \quad \text{for each $i \in [m]$ }.
    \end{equation*}
\end{example}
Since we primarily focus on the power-derived case of $u = n$, we defer our suggested testing configuration for the case of $u < n$ to \appref{uleqn}. Adapting a testing configuration to the covariates $\mathbf{X}$ is permitted by \defref{partition} and may lead to improved power. However, developing a general methodology for creating an adaptive testing configuration is challenging, as covariates can vary drastically in type (e.g., continuous, categorical, etc.), dimensionality, and level of informativeness across different scientific contexts. Based on our experiments, we find that the  testing configuration provided in \exref{suggested_testing_config_ueqn} performs well across various settings, particularly when paired with the selection rules and the local PC weights proposed in \secsref{selection}  and \secssref{powerful_weights}.

\subsection{Local PC $p$-values for the \methodnamens}\label{sec:local_PC}
The local PC $p$-values  in \eqref{generic_local_pv} should generally satisfy the following condition for our theoretical results established in \secref{main_results}. It resembles the validity property in \defref{monotone_valid_combined_pc_pv}$(a)$ but additionally account for the covariates:

\begin{conditionalt}[Validity of $p^{\uiG}_i$ conditional on ${\bf X}$]\label{cond:valid_monotone_local_PC_pv}
For any given feature $i$ and group $k$, its corresponding local PC $p$-value $p^{\uiG}_i$ defined in \eqref{generic_local_pv} is said to be valid for testing $H_i^{\uiG}$ conditional on ${\bf X}$ if
\begin{equation*}
    \Pr( p^{\uiG}_i \leq   t \mid \mathbf{X} ) \leq  t \text{ for all }  t \in [0,1] \ \text{when $H_i^{\uiG}$ is true}.
\end{equation*}

\end{conditionalt} 

Under our basic assumptions, defining the combining function $f_{ik}$ in \eqref{generic_local_pv} with the generalized Benjamini-Heller combining method (\defref{gbhpc}) and  the four examples of combining functions for testing global nulls mentioned in \secref{PC} can render conditionally valid local PC $p$-values.


\begin{lemma}[Validity of local GBHPC $p$-values conditional on ${\bf X}$] \label{lem:construct_local_PC}
Suppose \assumpsref{indepstudies}--\assumpssref{pxindependence} hold.  
 Given a testing configuration for $u/[n]$ replicability (\defref{partition}) and $(i, k) \in [m] \times [K]$, let
\begin{equation}\label{local_GBHPC}
    p^{\uiG}_i  = \max \left\{ f_{i, \cJ}( \mathbf{P}_{i \cJ}): \cJ \subseteq \cG_k \text{ and } | \cJ | = |\cG_k| - u_{ik} + 1 \right\},
\end{equation}
where each $f_{i, \cJ}: [0, 1]^{|\cG_k| - u_{ik} + 1} \longrightarrow [0,1]$ is one of the Bonferroni, Fisher, Stouffer, or Simes combining function in $|\cG_k| - u_{ik} + 1$  arguments.  Then   $p^{\uiG}_i$ meets \condref{valid_monotone_local_PC_pv}, i.e. it is valid for testing $H_i^{\uiG}$ conditional on ${\bf X}$.

%
%
%
\end{lemma}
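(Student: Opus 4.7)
The plan is to dominate the outer maximum in \eqref{local_GBHPC} by a single term corresponding to a carefully chosen subset of truly null indices, and then to argue validity of that single term conditional on $\mathbf{X}$ by reducing it to the unconditional validity of each of the four listed combining functions on independent superuniform $p$-values.

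First I would fix $(i,k)$ and suppose $H_i^{\uiG}$ is true. By the definition \eqref{local_PC_null} of the local PC null, at least $|\cG_k| - u_{ik} + 1$ indices $j \in \cG_k$ satisfy ``$H_{ij}$ is true''. Pick any subset $\cJ^\star \subseteq \cG_k$ of size exactly $|\cG_k| - u_{ik} + 1$ consisting solely of such true-null indices. Since $\cJ^\star$ is one of the sets over which the maximum in \eqref{local_GBHPC} ranges, one has $p_i^{\uiG} \geq f_{i,\cJ^\star}(\mathbf{P}_{i\cJ^\star})$ almost surely, so
\[
\Pr\bigl( p_i^{\uiG} \leq t \,\big|\, \mathbf{X} \bigr) \;\leq\; \Pr\bigl( f_{i,\cJ^\star}(\mathbf{P}_{i\cJ^\star}) \leq t \,\big|\, \mathbf{X} \bigr)
\]
for every $t \in [0,1]$, reducing the claim to conditional validity of the single combined $p$-value $f_{i,\cJ^\star}(\mathbf{P}_{i\cJ^\star})$.

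Next, I would pin down the conditional law of $\mathbf{P}_{i\cJ^\star}$ given $\mathbf{X}$. By \assumpref{indepstudies}, the columns $\{\mathbf{P}_{\cdot j} : j \in \cJ^\star\}$ are mutually independent given $\mathbf{X}$, so the scalar entries $\{p_{ij} : j \in \cJ^\star\}$ are conditionally independent given $\mathbf{X}$. Moreover, for each $j \in \cJ^\star$, \assumpref{pxindependence} gives $p_{ij} \indep \mathbf{X}$, so its conditional distribution given $\mathbf{X}$ equals its marginal distribution, which is superuniform by \assumpref{superuniformp}. Hence, conditional on $\mathbf{X}$, $\mathbf{P}_{i\cJ^\star}$ is a vector of independent superuniform $p$-values. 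I would then invoke the classical validity of each of the four combining functions against the global null under independence and superuniformity: Bonferroni by a union bound (no independence needed); Fisher from the chi-squared tail domination of $-2\sum_j \log p_j$ when the $p_j$ are superuniform rather than uniform; Stouffer from the analogous standard-normal tail comparison; and Simes from the classical Simes inequality for independent superuniform $p$-values. Each of these arguments is purely distributional and applies to the conditional law established above, yielding $\Pr(f_{i,\cJ^\star}(\mathbf{P}_{i\cJ^\star}) \leq t \mid \mathbf{X}) \leq t$ and hence \condref{valid_monotone_local_PC_pv}.

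The only delicate step is the conditioning argument, since the standard proofs of validity for Fisher, Stouffer, and Simes are typically stated unconditionally. It is clean in our setting because Assumptions~\ref{assump:indepstudies} and \ref{assump:pxindependence} together imply that the conditional distribution of $\mathbf{P}_{i\cJ^\star}$ given $\mathbf{X}$ coincides with the product of the (superuniform) marginal distributions of its entries; the unconditional validity proofs therefore transfer verbatim.
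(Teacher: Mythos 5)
Your proof is correct and follows essentially the same route as the paper's: establish that, conditional on $\mathbf{X}$, the null base $p$-values in the group are independent and superuniform (combining Assumptions~\ref{assump:indepstudies}--\ref{assump:pxindependence}), so each $f_{i,\cJ}(\mathbf{P}_{i\cJ})$ is conditionally valid for its restricted global null, and then conclude via the domination of the maximum by the term for a true-null subset. The only difference is that the paper delegates this last domination step to Proposition~1 of \citet{Wang2019}, whereas you spell it out directly; the content is the same.
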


The proof of \lemref{construct_local_PC} is given in \appref{valid_local_GBHPC_pvalue}. Note that the $p^{\uiG}_i$  in \eqref{local_GBHPC} is also monotone as in \defref{monotone_valid_combined_pc_pv}$(b)$, because the Bonferroni, Fisher, Stouffer and Simes combining functions  are all non-decreasing.

\subsection{Selection rules for the \methodname}\label{sec:selection}
The choice of the selection rule $f_{\cS_k}$ for each group $k$ is flexible, provided it satisfies the following two conditions for establishing the finite-sample $\FDR_{\rep}$ control guarantees of the \methodname  in \secref{main_results}.

\begin{conditionalt}[$\cS_k$  depends only on $\mathbf{P}_{ \cdot (-\cG_k)}$ and ${\bf X}$]\label{cond:indep}
For each $k \in [K]$, the selected set $\cS_k \equiv f_{\cS_k}(\mathbf{P};\mathbf{X})$ only depends on ${\bf P}$ via ${\bf P}_{\cdot (- \cG_k)}$. In other words, $\cS_k$ cannot be determined using knowledge of the base $p$-values in the same group $k$. 
\end{conditionalt}

\begin{conditionalt}[Stability of $f_{\cS_k}$]\label{cond:stable}
 For each $k \in [K]$ and  $i \in \cS_{k}\equiv f_{\cS_k} (\mathbf{P};\mathbf{X})$, fixing $\mathbf{P}_{(-i)\cdot}$ and $\mathbf{X}$ and changing $\mathbf{P}_{i \cdot}$ so that $i \in \cS_{k}$ still holds will not change $\cS_k$. In other words,  each selection rule $f_{\cS_k}$ is stable.
\end{conditionalt}

 Many common multiple testing procedures are stable (\condref{stable}) and can be employed as selection rules; see \citet{Bogomolov2023} for examples. For simplicity, we suggest the following thresholding selection rule which only depends on the $p$-values and can be easily verified to satisfy both \condsref{indep} and \ref{cond:stable}.

\begin{example}[Selection by simple PC $p$-value thresholding] \label{ex:suggested_select_rules}
A simple selection rule can be defined by taking
\begin{equation}\label{nonadaptiveSk}
    \cS_k = \bigcap_{\ell \in [K]\backslash \{k\} } \left\{ i \in [m] : p^{u_{i \ell}/\cG_{\ell}}_i \leq \left( w_{\ell} \cdot q \right) \wedge \lambda_{\ell}  \right\}  \quad \text{for each }  k \in [K]. 
\end{equation}
\end{example}

This suggested selection rule is an extension of what was adopted by \citet[Sec. 5]{Bogolomov2018} in their two-study special case where $n = u = K = 2$, $\cG_1 = \{ 1 \}$, $\cG_2 = \{ 2 \}$, and $u_{i1} = u_{i2} = 1$ for all $i \in [m]$.  The idea behind \eqref{nonadaptiveSk} is to select for each group $k$ the features that are likely to be $u_{i\ell}/\cG_{\ell}$ replicated for all other groups $\ell \neq k$; such features are also more likely to be  $u / [n]$ replicated since $\sum_{k \in [K]} u_{ik} = u$.

\subsection{Local PC weights for the \methodnamens} \label{sec:powerful_weights}

Given the form of  a candidate rejection set in \eqref{RrepEasy}, for each group $k$,  it suffices to only construct the local PC weights $\nu_{ik}$ for those $i \in \cS_k$. To ensure finite-sample $\FDR_{\rep}$ control, we require the local PC weight functions $f_{\nubf_k}$'s to create weights that satisfy either \condref{nu}$(a)$ or $(b)$ below:

\begin{conditionalt}[Restrictions on the construction of $\nu_{ik}$]\label{cond:nu}
 For a given group $k\in [K]$ and  $i \in \cS_k$, the local PC weight $\nu_{ik}$ is a function of
 \begin{enumerate}[\indent (a)]
  	\item $\mathbf{P}_{(-\cS_{k}) \cG_{k}}$, $\mathbf{X}$, and $\cS_k$ only. 
	\item $\mathbf{P}_{\cdot (-\cG_{k})}$, $\mathbf{X}$, and $\cS_k$ only. 
\end{enumerate}
\end{conditionalt}
In other words, 
both conditions allow $\nu_{ik}$ to depend on all  the covariates, where \condref{nu}$(a)$  allows further dependence on those base $p$-values in the same group $k$ that are not selected by $\cS_k$, and  \condref{nu}$(b)$  allows further dependence on those base $p$-values not in the same group $k$. Under  either condition,  
our  $\FDR_\rep$ control  in  \secref{main_results}   is established in the present frequentist setting where the truth statuses of the base hypotheses $H_{ij}$'s in \eqref{individual_hypothesis} are fixed.  

To motivate our specific construction of the local PC weights, we will nonetheless borrow a  working model for each study where the statuses of its base hypotheses  can be \emph{random}. Such Bayesian formulations   are also common for multiple testing in the  single-study context; see \citet{Leung2022} and \citet{Heller2020}, as well as the older references therein for instance. Our working model is as follows:  Conditional on ${\bf X}$, the duples  $(p_{1j}, H_{1j}), \dots, (p_{mj}, H_{mj})$ in study $j$ are independent, and the distribution of  each $p_{ij}$ only depends on $x_{ij}$ as
 \begin{equation}\label{pmod}
 p_{ij} | x_{ij} \sim  f_{\zetabf_j,\betabf_j}(p_{ij}|x_{ij} ) \equiv \pi_{\zetabf_j}(x_{ij}) + (1 - \pi_{\zetabf_j}(x_{ij})) f_{\betabf_j}(p_{ij}|x_{ij} )
\end{equation}
with parameter vectors $\boldsymbol{\beta}_j, \zetabf_j \in \mathbb{R}^{1 + d_j}$, where
 \begin{equation*}\label{nullx}
     \pi_{\zetabf_j}(x_{ij}) \equiv \frac{1}{1 + e^{-\zetabf^T_{j} (1,x_{ij}) }}
\end{equation*}
 models  $ \Pr(H_{ij} \text{ is true} \mid x_{ij})$, the conditional probability that $H_{ij}$ is true given $x_{ij}$, and
 \begin{equation*}
    f_{\boldsymbol{\beta}_j}(p_{ij}|x_{ij}) \equiv (1 - k_{\boldsymbol{\beta}_j}(x_{ij}) ) \cdot p^{-k_{\boldsymbol{\beta}_j}(x_{ij})},
\end{equation*}
with $ k_{\boldsymbol{\beta}_j}(x_{ij}) \equiv (1 + e^{-\boldsymbol{\beta}_j^T (1,x_{ij}) })^{-1}$, is a right-skewed beta density that models the conditional density of $p_{ij}$ when $H_{ij}$ is not true. Note that $p_{ij}$ is  modeled to be uniformly distributed on $[0, 1]$ given $x_{ij}$ when $H_{ij}$ is true, and \eqref{pmod} boils down to a marginal mixture model for $p_{ij}$ with only intercept parameters if no covariates are available for study $j$. The model can also be  generalized by  incorporating splines to capture possible nonlinearities in $x_{ij}$.

 Under the working assumption that $(p_{i1},H_{i1} ), \dots, (p_{in}, H_{in} )$ are independent conditional on ${\bf X}$ \footnote{A natural analog to \assumpref{indepstudies} when the base hypotheses can be random.}, our approach to constructing sensible local PC weights revolves around estimating 
\[
\Clfdrc^{\upsilon/\cJ}_i  \equiv \Pr ( \text{$H^{u/\cJ}_i$ is false} \mid \mathbf{P}, \mathbf{X} )
\]
 for a given $i \in [m]$, $\cJ \subseteq [n]$ and $\upsilon \in |\cJ|$, where $H^{v/\cJ}_i$ is the partial conjunction null hypothesis
 \[
 H^{v/\cJ}_i: |\{ j \in \cJ: \mu_{ij} \in \mathcal{A}_i \}| \geq |\cJ| - v  +1.
 \]
 In  \appref{PCClfdrc}, we will derive that $\Clfdrc^{\upsilon/\cJ}_i $ has the generic form
 \begin{equation} \label{form_for_omega_ivcj}
 \Clfdrc^{\upsilon/\cJ}_i =\Clfdrc^{\upsilon/\cJ}_i \big(\zetabf_j, \betabf_j, p_{ij}\big) = \sum_{ \cL \subseteq \cJ : |\cL| \geq \upsilon }  \Bigg( \prod_{j \in \cJ \backslash \cL}  \tfrac{ \pi_{\zetabf_{j}}(x_{ij}) }{ f_{\zetabf_j, \betabf_j}(p_{ij}|x_{ij} )  }  \cdot  \prod_{j \in \cL}  \tfrac{ (1 -\pi_{\zetabf_{j}}(x_{ij}) ) \cdot f_{\betabf_{j}}(p_{ij}|x_{ij}) }{ f_{  \zetabf_j, \betabf_j}(p_{ij}|x_{ij} )  } \Bigg),
 \end{equation}
 which is treated as a function in $\zetabf_j, \betabf_j, p_{ij}$.
 We  now illustrate  how to leverage this expression  to construct  PC weights that satisfies \condref{nu}$(a)$. A similar way to construct such weights  satisfying \condref{nu}$(b)$ is deferred to \appref{localPCweights_b}.


\begin{example}[Forming $  \nu_{ik}$ that satisfies  \condref{nu}$(a)$]\label{ex:suggested_PC_weights}
The general idea is to let
\begin{equation}\label{suggested_feature_weights}
  \nu_{ik} = \frac{{\footnotesize \tilde{\Clfdrc}^{\uiG}_i} |\cS_k| }{ \sum_{\ell \in \cS_k} { \footnotesize \tilde{\Clfdrc}^{u_{\ell k} / \cG_k}_{\ell} } }  ,
\end{equation}
where  ${\footnotesize \tilde{\Clfdrc}^{\uiG}_i}$ is an estimate of  $\Clfdrc^{{}_{\uiG}}_{i}$,
the  probability that $H^{\uiG}_i$ is false conditional on $\mathbf{P}$ and $\mathbf{X}$.  The local PC weights defined by \eqref{suggested_feature_weights} not only satisfy $\sum_{\ell \in \cS_k} \nu_{\ell k} = |\cS_k|$ as required in \eqref{nufun}, but also allows feature $i$ to be more easily rejected if $H^{{}_{\uiG}}_i$ is false. 
 Therefore, to the end of estimating $\Clfdrc^{{}_{\uiG}}_{i}$,  we will plug in estimates $(\tilde{\zetabf}_{j}, \tilde{\betabf}_{j}, \tilde{p}_{ij})$ of $(\zetabf_j, \betabf_j, p_{ij})$ constructed only based on $(\mathbf{P}_{(-\cS_{k}) \cG_{k}}, \mathbf{X}, \cS_k)$   into \eqref{form_for_omega_ivcj} above, by taking $\cJ = \cG_k$ and $v = u_{ik}$.

Let $k(j) \equiv \{k  \in [K] : j \in \cG_{k(j)} \}$ be  the group that study $j$ belongs to, and 
\begin{equation*}
    (\hat{\zetabf}_{j}, \hat{\betabf}_{j}) \equiv \argmax_{(\zetabf_{j}, \betabf_{j})} \sum_{i \notin \cS_{k(j)} } \log\left( f_{\zetabf_j,\betabf_j}(p_{ij}|x_{ij} )    \right)
\end{equation*}
be estimators for $(\zetabf_{j}, \betabf_{j})$, obtained by maximizing the working likelihood of \eqref{pmod} on the subset of data from study $j$ corresponding to the non-selected features in group $k(j)$. Since non-selected features tend to exhibit sparse signals, we find that inflating $\hat{\zetabf}_j$ by a multiplicative factor of $1.5$  generally improves the  fit, so we will let
\[
\tilde{\zetabf}_j = 1.5  \hat{\zetabf}_j \text{ and }\tilde{\betabf}_{j} = \hat{\betabf}_j.
\]
Moreover, by substituting them into the expression of $\mathbb{E} \left[ p_{ij} | x_{ij} \right]$ in $({\zetabf}_{j}, {\betabf}_{j})$ derived in \appref{epx}, an estimate for $p_{ij}$ can  also be taken as
\begin{equation*}
\tilde{p}_{ij} \equiv \frac{2 \cdot (1-k_{\tilde{\betabf}_j}(x_{ij})) + k_{\tilde{\betabf}_j}(x_{ij}) \cdot \pi_{ \tilde{\zetabf}_j}(x_{ij})  }{4 - 2 \cdot k_{\tilde{\betabf}_j}(x_{ij})}.
\end{equation*}
Plugging $(\tilde{\zetabf}_{j}, \tilde{\betabf}_{j}, \tilde{p}_{ij})$ into \eqref{form_for_omega_ivcj} and taking $\cJ = \uiG$ and $v = u_{ik}$, our estimate for $\Clfdrc^{{}_{\uiG}}_{i}$ is given by
\begin{align*}
    \tilde{\Clfdrc}^{\uiG}_{i} \equiv  \sum_{ \cL \subseteq \cG_k : |\cL| \geq u_{ik} } & \Bigg( \prod_{j \in \cG_k \backslash \cL}  \tfrac{ \pi_{ \tilde{\zetabf}_{j}}(x_{ij}) }{ f_{ \tilde{\zetabf}_j, \tilde{\betabf}_j}(\tilde{p}_{ij}|x_{ij} )  }  \cdot  \prod_{j \in \cL}  \tfrac{ (1 -\pi_{ \tilde{\zetabf}_{j}}(x_{ij}) ) \cdot f_{\tilde{\betabf}_{j}}(\tilde{p}_{ij}|x_{ij}) }{ f_{  \tilde{\zetabf}_j, \tilde{\betabf}_j}(\tilde{p}_{ij}|x_{ij} )  } \Bigg). \nonumber
\end{align*}
Obviously, $ \tilde{\Clfdrc}^{\uiG}_{i}$ only depends on $(\mathbf{P}_{(-\cS_{k}) \cG_{k}}, \mathbf{X}, \cS_k)$  because all of $\tilde{\zetabf}_{j}, \tilde{\betabf}_{j}, \tilde{p}_{ij}$ are so. 
\end{example}

Lastly, we re-emphasize that the Bayesian working model \eqref{pmod} is only a device for   constructing sensible local PC weights as in \eqref{suggested_feature_weights}, and is not assumed to hold true. As long as \condref{nu}$(a)$ or $(b)$ is satisfied, our theoretical $\FDR_\rep$ guarantees  in   \secref{main_results} can be established in the pure frequentist sense where the base hypothesis statuses are held fixed.

\subsection{Determining the thresholds}\label{sec:PF}
For each group $k \in [K]$, let    
\begin{equation} \label{group_pi}
    \text{$\pi_{k} \equiv \frac{\sum_{ i \in \cS_{k} \cap   \cH_k} \nu_{ik} }{|\cS_{k}| \vee 1}$} \in [0,1]
\end{equation}
be the \textit{(post-selection) weighted proportion of true local PC nulls}.
Moreover, for a candidate rejection set $\cR(\tvec)$, let  
\begin{align*}
    \FDP_{k}(\tvec) \equiv  \frac{ \sum_{i \in  \cH_k }   \1 \left\{ i \in \cR(\tvec) \right\} }{ |\cR (\tvec)  | \vee 1 } 
\end{align*}
be the \textit{false discovery proportion (FDP)} of local PC nulls in group $k$. 
A natural data-driven estimator for $\FDP_{k}(\tvec)$ can be taken as
\begin{equation}\label{FDPhat_repk}
    \FDPhat_{k}(\tvec) \equiv \frac{|\cS_{k}| \cdot \hat{\pi}_{k} \cdot t_k}{|\cR(\tvec)| \vee 1}, 
\end{equation}
where $t_k$ is the $k$-th coordinate of $\tvec$, and $\hat{\pi}_{k}  \equiv f_{\hat{\pi}_{k}}(\mathbf{P}; \mathbf{X}, \lambda_k ) \in (0,\infty)$ is an estimator for $\pi_k$; possible choices for the function $f_{\hat{\pi}_k}$ are discussed in \secref{main_results}. Heuristically, \eqref{FDPhat_repk} provides a conservative estimate for $\FDP_{k}(\tvec)$ because
\begin{align}
     \FDP_{k}(\tvec) \equiv  \frac{ \sum_{i \in \cH_k}  \1 \left\{ i \in \cR(\tvec) \right\} }{ |\cR(\tvec)  | \vee 1 }  &\leq  \frac{ \sum_{i \in  \cS_k \cap \cH_k }  \1 \left\{ p^{\uiG}_{i} \leq \nu_{ik} \cdot t_{k} \right\} }{ | \cR(\tvec)  | \vee 1}  \nonumber \\
     &\lesssim \frac{  |\cS_k| \cdot {\pi}_{k}  \cdot t_{k} }{ | \cR(\tvec)  | \vee 1}  \approx  \frac{  |\cS_k| \cdot \hat{\pi}_{k}  \cdot t_{k} }{ | \cR(\tvec)  | \vee 1} \equiv \FDPhat_{k}(\tvec). \nonumber 
\end{align}
Above, the approximate inequality ``$\lesssim$'' follows from the fact that, under conditionally valid local PC $p$-values (\condref{valid_monotone_local_PC_pv}$(a)$), the number of rejected true local PC nulls is approximately $\sum_{ i \in \cS_{k} \cap   \cH_k}( \nu_{ik} t_k) \approx |\cS_k| \cdot \pi_k \cdot t_k$  from \eqref{group_pi}. As mentioned in our overview, the \methodname aims to control $\FDR_{\rep}$ under level $q$, by making a data-driven choice of $\tvec$ for $\cR (\tvec)$ in \eqref{RrepEasy} that controls each $\FDR_{k}$  under level $w_k \cdot q$. As such, it  computes a data-driven threshold vector $\hattvec$ that satisfies 
\begin{equation}\label{desired_inequalities}
    \FDPhat_{k}(\hattvec) \leq w_k \cdot q \text{ for all }k \in [K],
\end{equation}
because one can then expect that
\begin{equation*}
   \FDR_{k}(\cR(\hattvec)) \equiv \mathbb{E}\left[ \FDP_{k}(\hattvec)  \right] \lesssim \mathbb{E}\left[ \FDPhat_{k}(\hattvec)  \right] \leq w_k \cdot q \text{ for each }  k . 
\end{equation*}
Rearranging \eqref{desired_inequalities} into
\begin{equation*}
   \hat{t}_k \leq \frac{ |\cR(\hattvec)| \vee 1 }{|\cS_k| \cdot \hat{\pi}_{k}} \cdot w_k \cdot q  
\end{equation*}
highlights how, as a result of filtering, the multiplicity correction associated with $\hatt_k$ scales with $|\cS_k|$, instead of with the possibly much larger $m$. Below, we define the set of all feasible values for  $\hattvec$ that satisfy all the constraints in \eqref{desired_inequalities}: 

\begin{definition}[Set of all feasible values for $\hattvec$]\label{def:set_of_feasible_thresholds}
Given the false discovery proportion estimator in \eqref{FDPhat_repk} for each group $k \in [K]$, define
\begin{equation*} \label{set_of_feasible_thresholds}
    \cT \equiv \Big\{  \tvec \in [0,\infty)^K: \FDPhat_{k} (\tvec) \leq w_k \cdot q \text{ for all } k \in [K] \Big\}
\end{equation*}
to be the set of possible values for $\hattvec$ for which the constraints in \eqref{desired_inequalities} are all satisfied.
\end{definition}

Now, among all the vectors in $\cT$, which one should $\hattvec$ be? Since the number of rejections $|\cR (\tvec)|$ is a non-decreasing function in $\tvec$, the intuition is to choose the coordinates of $\hattvec$ to be as large as possible. Turns out, an unequivocally best choice of $\hattvec$ exists:

\begin{proposition}[Existence of an optimal vector $\hattvec$ in $\cT$]\label{prop:optimal_thresholds}
   Let $\cT$ be   as in \defref{set_of_feasible_thresholds}, and also define
    \begin{equation}\label{optimal_tau_repk} 
       \hat{t}_{k} \equiv \max\left\{ t_k \in [0,\infty) : \exists (\tvec_{[1:(k-1)]}, t_k,\tvec_{ [(k+1):K]}) \in \cT \right\}
    \end{equation}
    for each $k\in [K]$. 
     Then $\hattvec \equiv (\hat{t}_1,\dots,\hat{t}_K) \in \cT$.
In particular, for any ${\bf t}= (t_1, \dots, t_K)  \in \cT$, we have that $t_k \leq \hatt_k$ for all $k \in [K]$; 
in other words, $\hattvec$ is the largest element-wise vector in $\cT$.
\end{proposition}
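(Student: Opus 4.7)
The plan is to exploit a lattice-theoretic property of $\cT$: it is closed under coordinate-wise maximum. Once this is established, the vector $\hattvec$ in \eqref{optimal_tau_repk} can be realized as the coordinate-wise maximum of $K$ cleverly chosen elements of $\cT$, and hence lies in $\cT$ itself. The element-wise maximality then follows immediately from the definition of each $\hat{t}_k$.

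The first step I would carry out is a short lemma: if $\tvec, \tvec' \in \cT$ and $\tvec'' \equiv \tvec \vee \tvec'$ (coordinate-wise maximum), then $\tvec'' \in \cT$. This uses the fact that $\cR(\cdot)$ is monotone non-decreasing in its argument because each of the rejection conditions $p^{\uiG}_i \leq (\nu_{ik} t_k) \wedge \lambda_k$ is weakly relaxed as $t_k$ grows; hence $|\cR(\tvec'')| \geq |\cR(\tvec)| \vee |\cR(\tvec')|$. For each $k \in [K]$, assume without loss of generality that $t''_k = t_k$; then
\[
\FDPhat_k(\tvec'') = \frac{|\cS_k| \cdot \hat{\pi}_k \cdot t_k}{|\cR(\tvec'')| \vee 1} \leq \frac{|\cS_k| \cdot \hat{\pi}_k \cdot t_k}{|\cR(\tvec)| \vee 1} = \FDPhat_k(\tvec) \leq w_k \cdot q,
\]
so all $K$ constraints defining $\cT$ are preserved.

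Second, I would verify that the suprema in \eqref{optimal_tau_repk} are genuinely attained. Since $\cR(\tvec) \subseteq \cS_k$ for every $\tvec$, we have $\FDPhat_k(\tvec) \geq \hat{\pi}_k \cdot t_k$ whenever $\hat{\pi}_k > 0$, which bounds $t_k \leq w_k q / \hat{\pi}_k$ for any $\tvec \in \cT$; the degenerate case $\hat{\pi}_k = 0$ needs a brief separate handling via the saturating cap imposed by $\lambda_k$ in \eqref{RrepEasy}. Closedness of $\cT$ follows because $|\cR(\tvec)|$, being a finite sum of indicators $\1\{p^{\uiG}_i \leq \nu_{ik} t_k\}$ of right-closed sets, is upper semicontinuous in $\tvec$, so each $\FDPhat_k$ is lower semicontinuous and its sublevel set $\{\tvec : \FDPhat_k(\tvec) \leq w_k q\}$ is closed. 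Compactness of $\cT$ then yields attainment of each $\hat{t}_k$.

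For the final step, I would pick, for each $k \in [K]$, some $\tvec^{(k)} \in \cT$ whose $k$-th coordinate equals $\hat{t}_k$, and then set $\tvec^\star \equiv \tvec^{(1)} \vee \cdots \vee \tvec^{(K)}$. Iterating the Step-1 lemma, $\tvec^\star \in \cT$. Its $k$-th coordinate is at least $(\tvec^{(k)})_k = \hat{t}_k$ (since it is a max over vectors containing $\tvec^{(k)}$), and at most $\hat{t}_k$ by the very definition of $\hat{t}_k$ as the largest feasible $k$-th coordinate; hence it equals $\hat{t}_k$, so $\tvec^\star = \hattvec \in \cT$. The element-wise dominance $t_k \leq \hat{t}_k$ for any $\tvec \in \cT$ then falls out of the definition \eqref{optimal_tau_repk}. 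I expect the only technical obstacle to be the careful verification of semicontinuity and the edge case $\hat{\pi}_k = 0$ in Step 2; the lattice closure in Step 1, though the conceptual heart of the argument, is a one-line monotonicity calculation.
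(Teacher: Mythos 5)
Your proposal is correct, and it reaches the conclusion by a genuinely different route from the paper at the one step that requires real work: attainment of each $\hat{t}_k$. The paper proves this (its \claimref{t_hat_k_achieved}) by taking a sequence in $\cT$ whose $k$-th coordinates increase to $\hat{t}_k$, substituting $\hat{t}_k$ into the $k$-th slot, and deriving a contradiction via $\liminf$ comparisons that exploit the fact that $|\cR(\cdot)|\vee 1$ is integer-valued; it then concludes $\hattvec\in\cT$ directly from the monotonicity of $\cR$ (\lemref{Rrep}) together with the observation that the numerator of $\FDPhat_k$ involves only the $k$-th coordinate, i.e. $\FDPhat_k(\hattvec)\le\FDPhat_k(\tvec^{\angleb{k}})\le w_k\cdot q$. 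You instead establish closure of $\cT$ under coordinate-wise maxima (the same monotonicity-plus-numerator observation, packaged as a lattice lemma) and obtain attainment topologically: $|\cR(\cdot)|$ is upper semicontinuous because each rejection region is a closed, upward-closed set in $\tvec$, so each $\FDPhat_k$ is lower semicontinuous, $\cT$ is closed, and with your boundedness bound it is compact, so the projection onto the $k$-th coordinate attains its maximum; the join of the $K$ attaining vectors is then forced to equal $\hattvec$. Both arguments are sound; yours buys a more standard, reusable compactness argument in place of the paper's bespoke $\liminf$ bookkeeping, while the paper's is more elementary and makes explicit the integrality trick that it reuses in the proof of \propref{tau_correctness}. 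One caveat applies to both treatments rather than to yours alone: your boundedness bound $\FDPhat_k(\tvec)\ge\hat{\pi}_k\cdot t_k$ needs $|\cS_k|\ge 1$ (with the paper's choices $\hat{\pi}_k>0$ always holds, so the genuinely degenerate case is $\cS_k=\emptyset$), and in that case the feasible set of $k$-th coordinates is all of $[0,\infty)$, so the maximum in \eqref{optimal_tau_repk} does not literally exist and the $\lambda_k$ cap does not repair the formal statement; the paper's own proof also tacitly assumes $\hat{t}_k<\infty$, so this is a shared implicit nondegeneracy assumption, not a gap specific to your argument.
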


To find this optimal $\hattvec$ efficiently, we recommend using our iterative procedure in  \algref{tau} below. Its correctness and termination are ensured by the following proposition.

\begin{proposition}[Correctness and Termination  of \algref{tau}]\label{prop:tau_correctness}
\algref{tau} will
\begin{inparaenum}[(i)]
\item
output the optimal vector of thresholds $\hat{\tvec}$   in \propref{optimal_thresholds}, and
\item 
 terminate in finitely many steps.
  \end{inparaenum}
\end{proposition}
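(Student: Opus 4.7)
The plan is to exploit the piecewise-constant structure of $\FDPhat_k$ together with the monotonicity of $|\cR(\tvec)|$ in $\tvec$, and then show that Algorithm \ref{alg:tau} is a monotone shrinkage procedure that converges to the unique element-wise maximum $\hattvec$ of $\cT$ characterized by Proposition \ref{prop:optimal_thresholds}. First I would observe that, for each group $k \in [K]$, the map $t_k \mapsto |\cR(\tvec_{[1:(k-1)]}, t_k, \tvec_{[(k+1):K]})|$ is a right-continuous non-decreasing step function whose jumps occur only at the finitely many values $p^{u_{ik}/\cG_k}_i/\nu_{ik}$ for $i \in \cS_k$, and is eventually constant once $t_k \geq \lambda_k/\nu_{ik}$ for all $i \in \cS_k$. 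Hence the map $t_k \mapsto \FDPhat_k(\tvec_{[1:(k-1)]}, t_k, \tvec_{[(k+1):K]})$ is a right-continuous piecewise-rational function with jumps at the same finite set of breakpoints, so the coordinate-wise supremum in \eqref{optimal_tau_repk} is always attained and always lies in this finite candidate set.

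For termination, I would track the sequence of iterates produced by Algorithm \ref{alg:tau}, which I expect starts from an upper bound (e.g., setting each coordinate so that the cap $\lambda_k$ is active everywhere) and then repeatedly shrinks a violating coordinate to the largest feasible candidate value given the current state of the other coordinates. Because every update keeps each coordinate within the finite candidate set described above, and because the sequence of iterates is coordinate-wise non-increasing, only finitely many distinct vectors can be visited before the iteration stabilizes; this yields termination in at most a polynomial-in-$m, K$ number of steps.

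For correctness, I would prove that the output $\hattvec_{\mathrm{alg}}$ satisfies $\hattvec_{\mathrm{alg}} = \hattvec$. Feasibility $\hattvec_{\mathrm{alg}} \in \cT$ is immediate from the algorithm's stopping criterion, which enforces $\FDPhat_k(\hattvec_{\mathrm{alg}}) \leq w_k q$ for every $k \in [K]$. Component-wise maximality is the main obstacle: I would argue by contradiction, assuming some $\tvec' \in \cT$ has $t_k' > \hattvec_{\mathrm{alg}, k}$ for some $k$, and then using the fact that $|\cR(\tvec)|$ is coordinate-wise non-decreasing in $\tvec$ to show that the rejection count at $(\tvec'_{[1:(k-1)]}, t_k', \tvec'_{[(k+1):K]})$ dominates the rejection count at any iterate of the algorithm with $k$-th coordinate equal to $t_k'$ and remaining coordinates at their algorithm-output values. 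This would force the algorithm's $k$-th coordinate update rule (which selects the largest feasible candidate value) to produce something at least $t_k'$, contradicting $t_k' > \hattvec_{\mathrm{alg}, k}$. The subtle point where I expect most of the technical effort will be in carefully arguing the monotonic coupling across coordinates, since increasing one $t_k$ can help other constraints by enlarging $|\cR(\tvec)|$, and the argument must be robust to the specific order in which Algorithm \ref{alg:tau} processes the $K$ coordinates.
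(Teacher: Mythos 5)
Your overall architecture (attainment of the coordinate-wise maxima, feasibility of the output, element-wise maximality, and termination via monotone iterates in a finite set) is reasonable, but the key step is not closed. For maximality your contradiction argument runs the inequality in the wrong direction: to force the Step-2 update to accept $t_k'$ you must \emph{lower}-bound the rejection count $\big|\cR\big((\hattvec_{\mathrm{alg},[1:(k-1)]},t_k',\hattvec_{\mathrm{alg},[(k+1):K]})\big)\big|$ by $|\cR(\tvec')|$, and by \lemref{Rrep} this requires $\hattvec_{\mathrm{alg},\ell}\ge t_\ell'$ for all $\ell\neq k$ --- which is exactly the element-wise maximality you are trying to prove, so the final-state comparison is circular. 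The missing ingredient (the ``monotonic coupling'' you flag but do not supply) is a run-long inductive invariant: starting from $\hattvec^{(0)}=(\infty,\dots,\infty)$, every partially updated vector $(\hattvec^{(s)}_{[1:(k-1)]},\hatt_k,\hattvec^{(s-1)}_{[(k+1):K]})$ dominates $\hattvec$ coordinate-wise, hence by \lemref{Rrep} its $\FDPhat_k$ at $t_k=\hatt_k$ is at most $\FDPhat_k(\hattvec)\le w_k q$, so each update returns a value $\ge\hatt_k$; thus $\hattvec^{(s)}\matgeq\hattvec$ for all $s$. This is how the paper argues (one can equivalently show $\hattvec^{(s)}\matgeq\tvec'$ for every $\tvec'\in\cT$). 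Combined with feasibility of the output at termination --- which itself needs that the maximum in Step 2 is attained and satisfies its own constraint (the paper proves this by a liminf argument; your right-continuity/step-function observation can serve the same purpose if spelled out) --- \propref{optimal_thresholds} then yields $\hattvec_{\mathrm{alg}}=\hattvec$. Without this invariant the correctness proof does not close.

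On termination, your finite candidate set is misidentified: neither $\hatt_k$ nor the iterates are in general breakpoints $p^{\uiG}_i/\nu_{ik}$; indeed \lemref{tvechat_equality} shows $\hatt_k=|\cR(\hattvec)|\,w_k q/(|\cS_k|\hat{\pi}_k)$ when $\cR(\hattvec)\neq\emptyset$, and each update returns either the previous value or a ``crossing'' value of the form $r\,w_k q/(|\cS_k|\hat{\pi}_k)$ with $r\in\{0,1,\dots,m\}$ (on a flat piece of the step function the binding constraint is the linear term, not a jump point). The monotone-iterates-in-a-finite-set idea survives once the candidate set is corrected, and would in fact give a more elementary termination proof than the paper's, which instead argues by contradiction: if some coordinate $\kappa$ never stabilizes, the convergence $\hattvec^{(s)}\searrow\hattvec^{(\infty)}$ plus a minimal-gap $\epsilon$ argument shows the other coordinates' contributions to the rejection set eventually freeze, so the $\kappa$-th update becomes a fixed optimization and must stabilize. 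As written, however, your finiteness claim is false and the stated termination bound does not follow from it.
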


Both the proofs of \propsref{optimal_thresholds} and  \propssref{tau_correctness} involve interesting limiting arguments,   and are provided in \appsref{optimal_thresholds} and \appssref{tau_correctness} respectively. Once $\hattvec$ is determined, the \methodname outputs $\widehat{\cR} = \cR(\hattvec)$ as its final rejection set.  We    provide a full recap of the entire procedure  in \algref{PF}. In \appref{BogHellerEquiv}, we demonstrate that the ParFilter, with particular tuning settings, can exactly recover the $\FDR_\rep$ controlling procedures of \citet{Bogolomov2018} when  $n = u = 2$, as noted in \secref{intro}.

\ \


\begin{algorithm}[H]
\caption{Computing the optimal vector of thresholds $\hattvec$}\label{alg:tau}

    Set $s = 1$ and let $\hattvec^{(0)} = (\hatt^{(0)}_1, \dots, \hatt^{(0)}_K) = (\infty, \dots, \infty)$. 

 Let $\hattvec^{(s)} \equiv (\hatt^{(s)}_1, \dots, \hatt^{(s)}_K)$ and for each $k \in [K]$, define its component recursively as 
    \begin{align*}
       \hatt^{(s)}_{k} \leftarrow \max \Big\{ t_k \in [0, \hatt_k^{(s-1)} ] : \FDPhat_{k} \left( (\hattvec^{(s)}_{[1:(k-1)]},t_{k},\hattvec^{(s-1)}_{[(k+1):K]}) \right) \leq w_k \cdot q \Big\}.
    \end{align*}
    If $\hattvec^{(s)} = \hattvec^{(s-1)}$, where the equality is understood to hold element-wise, move to the output step. Otherwise, let $s \leftarrow s + 1$ and go back to Step 2.

    \KwOut{$\hattvec = \hattvec^{(s)}.$}
\end{algorithm}

\vspace{0.6cm}

\begin{algorithm}[H] 
\caption{\methodname at $\FDR_{\rep}$ target level $q$ }
\label{alg:PF}
\KwData{ $\mathbf{P}$, $m \times n$ matrix of base $p$-values;  $\mathbf{X}$, $m \times n$ matrix of covariates.}

\KwIn{
 $\cG_k, w_k, \mathbf{u}_i, \lambda_k, f_{ik}, f_{\cS_k}, f_{\nubf_{k}}, f_{\hat{\pi}_k}$ for $k \in [K]$ and $i \in [m]$; \\ 
\qquad \qquad $\FDR_{\rep}$ target $q \in (0,1)$.} 

Compute the local PC $p$-value $p^{\uiG}_i$ for each $(i,k) \in [m]\times [K]$ according to \eqref{generic_local_pv}.

Compute the set of selected features $\cS_k$ for each $k \in [K]$ according to \eqref{selected_set}.


Compute the local PC weights $\nubf_k$ for each $k \in [K]$ according to \eqref{nufun}.

Compute the weighted null proportion estimate $\hat{\pi}_{k} \equiv f_{\hat{\pi}_{k}}(\mathbf{P}; \mathbf{X}, \lambda_k )$ for each  $k \in [K]$. Options for $f_{\hat{\pi}_{k}}$ are provided in \secref{main_results}.

Compute  the optimal threshold vector $\hattvec \in \cT$ implied by  \propref{optimal_thresholds} with \algref{tau}.

\KwOut{Take the rejection set as $\widehat{\cR} = \cR(\hattvec)$ based on   \eqref{RrepEasy}.} 
\end{algorithm}

\subsection{Main theoretical results}\label{sec:main_results}
The theorem below establishes the finite-sample $\FDR_{\rep}$ control guarantees of the \methodname when the base $p$-values are conditionally independent across the $m$ features given $\mathbf{X}$. 

\begin{theorem}[$\FDR_{\rep}$ control under independence]\label{thm:error_control}
    Suppose \assumpref{indepstudies} is true, and the components of  \methodname  (\algref{PF}) meet \condsref{valid_monotone_local_PC_pv}, \ref{cond:indep}, \ref{cond:stable}, and \ref{cond:nu}$(a)$, i.e., 
       \begin{itemize}
    \item Each local PC $p$-value is conditionally valid;
    \item For each $k$, $\cS_k$ is a function only of  $\mathbf{P}_{\cdot (-\cG_k)}$ and $\mathbf{X}$;
    \item Each selection rule $f_{\cS_k}$ is stable;
    \item Each local PC weight $\nu_{ik}$ is a function of $\mathbf{P}_{(-\cS_{k}) \cG_{k}}$, $\mathbf{X}$, and $\cS_k$ only.
    \end{itemize}
    If the base $p$-values $p_{1j},p_{2j},\dots,p_{mj}$ are independent conditional on $\mathbf{X}$ for each study $j \in [n]$, then the \methodname at level $q \in (0,1)$  has the  false discovery rate controlling properties
    \begin{equation}\label{FDR_control}
        \FDR_{k} (\cR(\hattvec)) \leq  w_k \cdot q  \;\ \text{ for each } k \in [K] \;\ \text{ and } \;\  \FDR_{\rep}(\cR(\hattvec)) \leq q,
    \end{equation}
provided that  the tuning parameters $\{\lambda_k\}_{k \in [K]}$ and weighted null proportion estimators $\{\hat{\pi}_{k}\}_{k \in [K]}$ are taken as one of the following: 
     \begin{enumerate}[(i)]
        \item  For each group $k \in [K]$, $\lambda_k = 1$ and $\hat{\pi}_k = 1$.\\
        \item For each group  $k \in [K]$ and some fixed choice of $\lambda_k \in (0,1)$,       
         \begin{equation}\label{pi_hat_adaptive}
            \hat{\pi}_k = \frac{ \max_{i \in \cS_k} \nu_{ik} + \sum\limits_{i \in \cS_k} \nu_{ik} \cdot \1 \left\{ p^{\uiG}_{i} > \lambda_{k} \right\} }{(1-\lambda_{k}) \cdot |\cS_k|}  \in (0, \infty).
        \end{equation} 

     \end{enumerate}
\end{theorem}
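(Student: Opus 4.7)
The overall structure of the proof has two high-level stages. First, apply Lemma \ref{lem:FDR_rep_control_from_partition} to reduce the conclusion $\FDR_{\rep}(\cR(\hattvec)) \leq q$ to the per-group bounds $\FDR_k(\cR(\hattvec)) \leq w_k q$ for each $k \in [K]$. Fixing such a $k$, Proposition \ref{prop:optimal_thresholds} guarantees $\hattvec \in \cT$, so $\FDPhat_k(\hattvec) \leq w_k q$ rearranges to $|\cR(\hattvec)| \vee 1 \geq |\cS_k| \hat{\pi}_k \hatt_k / (w_k q)$ on the event $\{\hat{\pi}_k \hatt_k > 0\}$, whereas on its complement the wedge $p^{\uiG}_i \leq \nu_{ik} \hatt_k$ (almost surely) forces the rejection indicator to vanish. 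Since any $i \in \cR(\hattvec)$ must lie in $\cS_k$ with $p^{\uiG}_i \leq (\nu_{ik} \hatt_k) \wedge \lambda_k$, substituting into the definition of $\FDR_k$ reduces the theorem to proving
\[
\mathbb{E}\left[\sum_{i \in \cH_k} \frac{\1\{i \in \cS_k,\, p^{\uiG}_i \leq (\nu_{ik} \hatt_k) \wedge \lambda_k\}}{|\cS_k|\,\hat{\pi}_k\,\hatt_k}\right] \leq 1.
\]

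To establish this inequality I would implement a leave-one-out substitution in the spirit of the dependency-control arguments of \citet{pfilter2019} and \citet{Katsevich2023}. For each $i \in \cH_k$, let $\mathbf{P}^{*,i}$ be the matrix obtained from $\mathbf{P}$ by overwriting $p^{\uiG}_i$ with $0$, and let $\hattvec^{(*,i)}$, $\hat{\pi}_k^{(*,i)}$ be the corresponding outputs when the ParFilter is fed $\mathbf{P}^{*,i}$. Three structural properties should then follow. (a) Conditions \ref{cond:indep}, \ref{cond:stable}, and \ref{cond:nu}(a) together force $\cS_k$, every $\nu_{jk}$ for $j \in \cS_k$, and all local PC $p$-values outside group $k$ (together with the associated $\cS_\ell$ and $\nu_{j\ell}$ for $\ell \neq k$) to be invariant under this perturbation. (b) Decreasing $p^{\uiG}_i$ to $0$ can only add $i$ to $\cR(\tvec)$, so the feasibility region enlarges, $\cT \subseteq \cT^{*,i}$, and $\hattvec^{(*,i)} \matgeq \hattvec$ coordinatewise by Proposition \ref{prop:optimal_thresholds}. (c) In case (ii), the buffer $\max_{j \in \cS_k} \nu_{jk}$ appearing in the adaptive estimator \eqref{pi_hat_adaptive}, combined with the wedge $p^{\uiG}_i \leq \lambda_k$ enforced on the rejection event, yields $\hat{\pi}_k = \hat{\pi}_k^{(*,i)}$ whenever the summand indicator is nonzero; case (i) makes this equality trivial.

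The main obstacle — and where I would expect to spend the most care — is converting the displayed expectation into one with an $\mathcal{F}_{-i,k}$-measurable denominator, where $\mathcal{F}_{-i,k} \equiv \sigma(\mathbf{X}, \mathbf{P}_{\cdot (-\cG_k)}, \mathbf{P}_{(-i)\cG_k})$. Under \assumpsref{indepstudies}--\assumpssref{pxindependence} and the within-study independence hypothesized in the theorem, the local PC $p$-value $p^{\uiG}_i$ is independent of $\mathcal{F}_{-i,k}$ and remains (conditionally) super-uniform under $H_i^{\uiG}$ by Condition \ref{cond:valid_monotone_local_PC_pv}, while $\cS_k$, $\nu_{ik}$, $\hattvec^{(*,i)}$, and $\hat{\pi}_k^{(*,i)}$ are all $\mathcal{F}_{-i,k}$-measurable; however, $\hat{\pi}_k \hatt_k$ itself is not. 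To make the swap, I would recast $1/\hatt_k$ as a Fubini-type integral (as in the Katsevich–Ramdas super-uniformity lemma) so that the troublesome denominator can be bounded by its $\mathcal{F}_{-i,k}$-measurable leave-one-out counterpart $\hat{\pi}_k^{(*,i)} \hatt_k^{(*,i)}$; the adaptive case additionally invokes property (c) to swap $\hat{\pi}_k$ for $\hat{\pi}_k^{(*,i)}$ on the rejection event. Once this is done, applying $\Pr(p^{\uiG}_i \leq c \mid \mathcal{F}_{-i,k}) \leq c$ inside the conditional expectation and summing over $i \in \cH_k$ produces a bound proportional to $\sum_{i \in \cS_k} \nu_{ik}/|\cS_k| = 1$ via the weight-sum constraint in \eqref{nufun}, simultaneously establishing both claims in \eqref{FDR_control}.
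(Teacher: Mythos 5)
Your high-level skeleton (reduce to per-group $\FDR_k$ control via \lemref{FDR_rep_control_from_partition}, then a leave-one-out argument plus conditional super-uniformity) matches the paper's, but the execution of the leave-one-out step has a genuine gap. Your claim (a) — that \condsref{indep}, \ref{cond:stable}, \ref{cond:nu}$(a)$ force $\cS_\ell$ and $\nu_{j\ell}$ for $\ell\neq k$ to be invariant when $p^{\uiG}_i$ is overwritten by $0$ — does not follow from those conditions: for $\ell\neq k$, $\cS_\ell$ is allowed to depend on the group-$k$ columns $\mathbf{P}_{\cdot\cG_k}$, and \condref{stable} only protects $\cS_\ell$ when the change keeps $i\in\cS_\ell$, which is neither guaranteed by the perturbation nor even relevant on your weakened event $\{i\in\cS_k,\,p^{\uiG}_i\le(\nu_{ik}\hatt_k)\wedge\lambda_k\}$, where $i$ need not belong to $\cS_\ell$ at all (e.g.\ under the rule of \exref{suggested_select_rules}, setting the group-$k$ $p$-values to $0$ can add $i$ to every $\cS_\ell$, changing $|\cS_\ell|$, the renormalized weights and $\hat\pi_\ell$). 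Consequently claim (b), $\cT\subseteq\cT^{*,i}$ and $\hattvec^{(*,i)}\matgeq\hattvec$, is unjustified — the other groups' $\FDPhat_\ell$ can move in either direction. Worse, even if $\hattvec^{(*,i)}\matgeq\hattvec$ held, it points the wrong way: to bound $\1\{p^{\uiG}_i\le\nu_{ik}\hatt_k\wedge\lambda_k\}/(\hat\pi_k\hatt_k)$ by its leave-one-out counterpart you need $\hat\pi_k\hatt_k\ge\hat\pi^{(*,i)}_k\hatt^{(*,i)}_k$ on the event, i.e.\ effectively equality of thresholds, not a one-sided inequality. The paper obtains exactly such an identity, but only on the event $\{i\in\cR(\hattvec)\}$, by proving stability of the \emph{entire rejection set} (its \lemref{stable}), and it cancels $|\cR(\hattvec)|\vee 1$ rather than $\hatt_k$; your reduction discards the event $\{i\in\cR(\hattvec)\}$ too early, which destroys precisely the conditioning under which any leave-one-out identity can be established at the stated level of generality.

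A second, more minor omission: in case $(ii)$ your closing step "summing gives $\sum_{i\in\cS_k}\nu_{ik}/|\cS_k|=1$" only closes case $(i)$. After the conditional super-uniformity step each null term carries a factor $1/\hat\pi^{(*,i)}_k$, and you still need the Storey-type expectation bound $\mathbb{E}\bigl[1/\hat\pi^{(-i)}_k\,\big|\,\cdot\bigr]\le |\cS_k|/\sum_{\ell\in\cS_k\cap\cH_k}\nu_{\ell k}=1/\pi_k$ (the paper's \lemref{storey}, from equation (5.4) of \citet{pfilter2019}, which again uses the conditional independence of the null local PC $p$-values within group $k$); the weight-sum constraint \eqref{nufun} alone does not deliver this. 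Note also that the role of the $\max_{j\in\cS_k}\nu_{jk}$ buffer in \eqref{pi_hat_adaptive} is to make this Storey bound valid, not to secure the invariance $\hat\pi_k=\hat\pi^{(*,i)}_k$ (that already follows from the $\lambda_k$ wedge). To repair the argument you would essentially have to follow the paper's route: keep the indicator $\1\{i\in\cR(\hattvec)\}$, prove stability of $\cR(\hattvec)$ under changes of $\mathbf{P}_{i\cdot}$ that preserve $i\in\cR(\hattvec)$, replace $|\cR(\hattvec)|$ and $\hat\pi_k$ by leave-one-out versions on that event, and only then invoke super-uniformity (Lemma \ref{lem:supert}$(ii)$) and the Storey bound.
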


The control of $\FDR_{\rep}$ at level $q$ in \thmref{error_control} follows from the control of $\FDR_{k}$ at level $w_k \cdot q$ for each $k \in [K]$ (\lemref{FDR_rep_control_from_partition}). Hence, its proof in \appref{main_theorem_pf} focuses on showing the latter result by extending the FDR control techniques developed by \citet{Katsevich2023}.  The estimator \eqref{pi_hat_adaptive} has its root in \citet[equation (2.2)]{pfilter2019}, where the choice of tuning parameter $\lambda_k$ entails a bias-variance trade-off in estimating $\pi_k \in [0,1]$; setting $(\lambda_1,\cdots,\lambda_K) = (0.5, \cdots, 0.5)$ usually strikes a good balance in our experience. We generally recommend using this  data-driven estimator for $\pi_k$
rather than setting $\hat{\pi}_k = 1$ as in \thmref{error_control}$(i)$, since the former is usually less conservative.

We now turn to the finite-sample $\FDR_{\rep}$ control guarantees of  the \methodname when the base $p$-values can be arbitrarily dependent across the $m$ features.

\begin{theorem}[$\FDR_{\rep}$ control under arbitrary dependence]\label{thm:error_control_dep}
    Suppose \assumpref{indepstudies} is true, and  the components of  \methodname   (\algref{PF}) meet \condsref{valid_monotone_local_PC_pv}, \ref{cond:indep}, \ref{cond:stable}, and \ref{cond:nu}$(b)$, i.e.,
    
\begin{itemize}
    \item Each local PC $p$-value is conditionally valid;
        \item For each $k$, $\cS_k$ is a function only of  $\mathbf{P}_{\cdot (-\cG_k)}$ and $\mathbf{X}$;
            \item Each selection rule $f_{\cS_k}$ is stable;
            \item Each local PC weight $\nu_{ik}$ is a function of $\mathbf{P}_{\cdot (-\cG_k)}$, $\mathbf{X}$, and $\cS_k$ only.
\end{itemize}
       Provided that for each $k \in [K]$, the tuning parameter is taken as $\lambda_k = 1$ and  the weighted null proportion estimator is taken as
        \begin{equation} \label{misnomer}
            \hat{\pi}_{k} =  \sum_{i \in [|\cS_{k}| \vee 1]}\frac{1}{i},
        \end{equation}
 the \methodname at level $q \in (0,1)$  has the false discovery rate controlling properties stated in \eqref{FDR_control}.
\end{theorem}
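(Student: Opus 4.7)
The plan is to follow the overall strategy for \thmref{error_control} but replace its within-study independence step with a Benjamini--Yekutieli-style dependency-control argument that exploits the harmonic-sum form of $\hat{\pi}_k$. First, by \lemref{FDR_rep_control_from_partition}, the claim $\FDR_{\rep}(\cR(\hattvec)) \leq q$ reduces to showing the per-group bounds $\FDR_k(\cR(\hattvec)) \leq w_k q$ for each $k \in [K]$, so I would fix such a $k$. Every $\hattvec$ output by \algref{tau} lies in $\cT$, so $\FDPhat_k(\hattvec) \leq w_k q$, which after rearranging \eqref{FDPhat_repk} gives the self-consistency bound $|\cR(\hattvec)| \vee 1 \geq |\cS_k|\hat{\pi}_k\hat{t}_k/(w_k q)$. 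Substituting this into the defining expectation of $\FDR_k$ and using $\cR(\hattvec) \subseteq \cS_k$ yields
\[
\FDR_k(\cR(\hattvec)) \;\leq\; \frac{w_k q}{|\cS_k|}\,\mathbb{E}\!\left[\,\frac{1}{\hat{\pi}_k}\sum_{i \in \cS_k \cap \cH_k}\frac{\1\{p^{\uiG}_i \leq \nu_{ik}\hat{t}_k\}}{\hat{t}_k}\right].
\]

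Next, I would condition on the $\sigma$-algebra generated by $(\mathbf{X},\mathbf{P}_{\cdot(-\cG_k)})$. By \condref{indep} and \condref{nu}$(b)$, both $\cS_k$ and $\nubf_k$ are measurable with respect to this $\sigma$-algebra, and since $\hat{\pi}_k = \sum_{i \in [|\cS_k| \vee 1]} 1/i$ depends on $\mathbf{P}$ only through $|\cS_k|$, $\hat{\pi}_k$ is measurable as well. By \assumpref{indepstudies}, the conditional law of $\mathbf{P}_{\cdot\cG_k}$ given this enlarged information coincides with its law given $\mathbf{X}$ alone, so \condref{valid_monotone_local_PC_pv} (together with \lemref{construct_local_PC} in the typical use case) ensures each local PC $p$-value $p^{\uiG}_i$ with $i \in \cH_k$ remains conditionally super-uniform, while the base $p$-values within $\mathbf{P}_{\cdot\cG_k}$ may still be arbitrarily dependent across features.

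The technical heart of the proof is the per-null Benjamini--Yekutieli-type bound
\[
\mathbb{E}\!\left[\,\frac{\1\{p^{\uiG}_i \leq \nu_{ik}\hat{t}_k\}}{\hat{t}_k}\,\Big|\,\mathbf{X},\mathbf{P}_{\cdot(-\cG_k)}\right] \;\leq\; \nu_{ik}\,\hat{\pi}_k
\]
for each $i \in \cS_k \cap \cH_k$. Once it is in hand, dividing through by $\hat{\pi}_k$, summing over $i \in \cS_k \cap \cH_k \subseteq \cS_k$, and invoking the normalization $\sum_{i \in \cS_k}\nu_{ik} = |\cS_k|$ from \eqref{nufun} cancels the factor $|\cS_k|\hat{\pi}_k$ in the denominator and collapses the outer expectation to $\FDR_k \leq w_k q$. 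To produce the per-null bound, I would exploit the self-consistency observation $\hat{t}_k \leq |\cR(\hattvec)|w_k q/(|\cS_k|\hat{\pi}_k) =: T_{|\cR(\hattvec)|}$, which pins $\hat{t}_k$ to (or below) the discrete ladder $\{T_r = rw_k q/(|\cS_k|\hat{\pi}_k) : r \in \{0,1,\dots,|\cS_k|\}\}$ with common spacing, and then telescope $\1\{p^{\uiG}_i \leq \nu_{ik}\hat{t}_k\}/\hat{t}_k$ across the at most $|\cS_k|$ events $\{\hat{t}_k \geq T_j\}$ in the spirit of the classical Benjamini--Yekutieli proof and of \citet{Katsevich2023}; this trades the union bound over the rungs against the harmonic weight $\sum_{j=1}^{|\cS_k|} 1/j = \hat{\pi}_k$, matching precisely the BY-style deflation $\hat{\pi}_k$ prescribed by the theorem.

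The hardest step will be rigorously executing the telescoping in the presence of the interlocked fixed-point recursion of \algref{tau}, which couples $\hat{t}_k$ to the other thresholds $\{\hat{t}_\ell\}_{\ell \neq k}$ and forbids a direct importation of a standalone BH-type proof. I plan to navigate this by working only with the one-sided inequality $\hat{t}_k \leq T_{|\cR(\hattvec)|}$ -- which holds regardless of any additional restrictions the other groups may impose, because it follows purely from the $\FDPhat_k$ constraint -- and by adapting the super-martingale/union-bound device of \citet{Katsevich2023}. The specific form $\hat{\pi}_k = \sum_{i \in [|\cS_k| \vee 1]}1/i$ in the theorem is then exactly what is needed to cancel the harmonic factor produced by the telescoping, while the requirement $\lambda_k = 1$ ensures that no extra $\lambda_k$-thresholding distorts the ladder structure of $\hat{t}_k$ on which the argument relies.
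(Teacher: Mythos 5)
Your proposal follows essentially the same route as the paper's proof: reduce to per-group control via \lemref{FDR_rep_control_from_partition}, use the self-consistency implied by $\FDPhat_k(\hattvec)\le w_k\cdot q$, condition on $\{\mathbf{P}_{\cdot(-\cG_k)},\mathbf{X}\}$ (under which $\cS_k$, $\nubf_k$ and $\hat\pi_k$ are fixed by \condsref{indep} and \ref{cond:nu}$(b)$, and the null local PC $p$-values remain superuniform, cf.\ \lemref{supertX}), and finish with a Benjamini--Yekutieli-type dependency-control bound whose harmonic factor is exactly $\hat\pi_k$; the paper simply packages that last step as \lemref{shape} (Proposition 3.7 of \citet{Blanchard2008}) instead of re-deriving it by telescoping. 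One point needs tightening: because you moved the random denominator from $|\cR(\hattvec)|\vee 1$ to $\hat t_k$, the phrase ``pins $\hat t_k$ to (or below) the discrete ladder'' is not sufficient --- for a threshold strictly between rungs the ratio $\1\{p^{\uiG}_i\le\nu_{ik}\hat t_k\}/\hat t_k$ is not controlled (it can be arbitrarily large in expectation) --- so you must either invoke the exact identity $\hat t_k=(|\cR(\hattvec)|\vee 1)\,w_k q/\big((|\cS_k|\vee 1)\hat\pi_k\big)$ (i.e.\ \lemref{tvechat_equality}, plus the observation that $\hat t_k$ equals the first rung when $\cR(\hattvec)=\emptyset$), or keep $|\cR(\hattvec)|\vee 1$ as the denominator as the paper does via \lemref{gw_sc} and apply the shape-function bound with the arbitrary random variable $V=|\cR(\hattvec)|\vee 1$. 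With that repair, your telescoping argument is precisely the proof of the cited Blanchard--Roquain lemma, so the two proofs coincide in substance.
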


The proof of \thmref{error_control_dep} given in  \appref{main_theorem_dep_pf} relies on showing the $\FDR_k$ control using results similar to the ``dependency control conditions'' originating from the important work of \citet{Blanchard2008}. \thmref{error_control_dep} mainly differs from \thmref{error_control} by requiring the local PC weights to satisfy a different set of constraints -- \condref{nu}$(b)$ instead of $(a)$ -- and by adopting a more conservative null proportion estimator -- \eqref{misnomer} instead of \eqref{pi_hat_adaptive} or $\hat{\pi}_k = 1$. We note that calling the form in \eqref{misnomer} a weighted ``null proportion'' estimator is somewhat of a misnomer, as a harmonic number cannot be less than 1. Rather, \eqref{misnomer} serves to suitably inflate the FDP estimator in \eqref{FDPhat_repk} to account for the arbitrary dependence among the base $p$-values.


\section{Simulation Studies}\label{sec:sim}
For $n \in \{2, 3, 4, 5\}$ studies, 
we conduct simulations to assess the performance of the \methodname  in a maximum replicability analysis with $u=n$, where $H^{n/[n]}_1,\dots,H^{n/[n]}_m$ are tested simultaneously for $m = 5000$ features. The data are generated  with the steps below:

\begin{enumerate}[(i)]
\item We first let $x_i  \overset{\tiny \text{iid}}{\sim} \text{N}(0,1)$ for $i = 1, \dots, 5000$, and set 
 \begin{equation}\label{same_covar_in_sim}
    x_{i1} = x_{i2} = \cdots = x_{in} = x_i.
\end{equation}
In other words, for each feature $i$,  the covariates $x_{i1}, \dots, x_{in}$ are all equal and generated from the same standard normal distribution.

\item Next, given all the  covariates, the statuses of the base null hypotheses are decided as follows: For each $(i,j) \in [5000] \times [n]$, its corresponding base hypothesis only depend on $x_{ij}$ and is generated as 
\begin{equation*}
    \1 \{ H_{ij} \text{ is true} \} | x_{ij} \sim  \text{Bernoulli} \left( \frac{\exp(\gamma_0(n) + \gamma_1 \cdot x_{ij})}{1 + \exp(\gamma_0(n) + \gamma_1 \cdot x_{ij})}  \right),
\end{equation*}
with the parameters $\gamma_0(n) = \log( (0.01/n)^{-1/n} - 1)$ and $\gamma_1 \in \{ 0, 1, 1.5 \}$. 

To shed light on our choices for the simulation parameters, note that when $x_{ij} = 0$ for all $(i,j) \in [5000] \times [n]$, the value of $\gamma_0(n)$ ensures the proportion of false PC nulls  is approximately $0.5 \%$, $0.3 \%$, $0.25 \%$ and $0.2 \%$ for $n = 2,3,4$, and $5$  respectively, as the probability of any $H_i^{n/[n]}$ being false can be computed as  $1/(1+\exp(\gamma_0(n)))^n$.  When $\gamma_1 = 0$, $x_{ij}$ has no influence on $H_{ij}$, but as $\gamma_1$ increases above 0, $x_{ij}$ becomes more correlated with $H_{ij}$.
\item Lastly, given all the covariates and hypothesis statuses, the base $p$-values are independently generated as
\begin{equation*}
    p_{ij} \sim
    \begin{cases}
        \text{Unif}(0,1) &\text{if $H_{ij}$ is true} \\
        \text{Beta}(1 - \xi,7) &\text{if $H_{ij}$ is false}
    \end{cases},
\end{equation*}
where $\xi$ is a signal size parameter ranging in $\{ 0.74, 0.76, 0.78, 0.80, 0.82 \}$. As $\xi$ increases from $0.74$ to $0.82$, $\text{Beta}(1 - \xi,7)$ becomes increasingly right-skewed and stochastically smaller than a uniform distribution.
\end{enumerate}

Under this data generating mechanism, the base $p$-values across features are independent given ${\bf X}$ and \thmref{error_control} applies. 
For each feature $i \in [5000]$, we also let 
\begin{equation}\label{maxPCp}
p^{n/[n]}_i = \max(\mathbf{P}_{i \cdot}),
\end{equation}
be the PC $p$-value for testing $H_i^{n/[n]}$, 
formed using the GBHPC method in \defref{gbhpc}, where $f_{\cJ}$ is taken as either the Bonferroni, Stouffer, Fisher, or Simes combining function, as they all lead to \eqref{maxPCp} when $u=n$.


\subsection{Compared methods}\label{sec:sim_methods}
We compare the following methods designed for multi-study replicability analysis, including two versions of our ParFilter methodology:

\begin{enumerate} [(a)]
 \item \textbf{\methodnamens}: This is \algref{PF}, using 
 \begin{itemize}
 \item the testing configuration suggested in \exref{suggested_testing_config_ueqn},
 
 \item the local PC $p$-values in \eqref{local_GBHPC}  with all the $f_{i, \cJ}$'s taken as Stouffer's combining function,\footnote{Under the testing configuration suggested in \exref{suggested_testing_config_ueqn}, these local PC $p$-values are simply the base $p$-values.}
 \item the selection rules  in \exref{suggested_select_rules},
 \item  the local PC weights suggested in \exref{suggested_PC_weights},
 \item and  the weighted null proportion estimators for each $k \in [K]$ as  in \eqref{pi_hat_adaptive}, with tuning parameters $(\lambda_1,\dots,\lambda_K) = (0.5,\cdots,0.5)$. 
 
 \end{itemize}
            \item \textbf{No-Covar-\methodnamens}: An implementation of \algref{PF} almost exactly the same as item $(a)$ above, except that the local PC weights are fixed at $\nu_{ik} = 1$ for all $(i, k)$. Consequently, it does not adapt to the available covariate information.

\item      \textbf{AdaFilter-BH}: The procedure presented in \citet[Definition 3.2]{adafilter}.

  \item \textbf{Inflated-AdaFilter-BH}: The procedure presented in   \citet[Definition 3.2]{adafilter} but with the $\FDR_{\rep}$ target multiplied by $(\sum_{i \in [m]} \frac{1}{i})^{-1} < 1$.    
      \item \textbf{CoFilter-BH}: The CoFilter procedure described in  \citet[Algorithm 2.1]{Dickhaus2024} with the BH procedure implemented in step $(iii)$ of the algorithm.
\end{enumerate}

We note that (c)-(e) make use of  the PC $p$-values defined in \eqref{maxPCp} for  $H_i^{n/[n]}$ in their steps. Next we describe four other methods originally intended for single-study FDR analysis that are applicable in the context of this simulation study. These methods all make use of the PC $p$-values in \eqref{maxPCp} and  selected because of their state-of-the-art performance and/or widespread use: \begin{enumerate} [(a)]\setcounter{enumi}{5}
    \item  \textbf{BH}: The BH procedure by \citet{Benjamini1995}.
    \item \textbf{AdaPT}: The adaptive $p$-value thresholding method by \citet{Lei2018}, with default settings based on the R package \texttt{adaptMT}.

    \item \textbf{CAMT}: The covariate-adaptive multiple testing method by \citet{Zhang2022} with default settings based on their R package \texttt{CAMT}.

    \item \textbf{IHW}: The independent hypothesis weighting framework by \citet{Ignatiadis2021} with default settings based on the R package \texttt{IHW}. Note that this method supports only univariate covariates.
   
       \end{enumerate}
       
Except for the BH method, all other methods in (g)-(h) are covariate-adaptive and make use of the covariates $\{x_i\}_{i \in [5000]}$ from \eqref{same_covar_in_sim}. Note that this  setting with the same covariate for each feature across studies makes $x_i$ a rather easy covariate choice for these methods. In other settings with different covariates across studies, these methods may become less applicable, particularly for IHW which only supports univariate covariates. In \tabref{method_summary}, we summarize the theoretical $\FDR_{\rep}$ guarantees of all  methods under the data-generating mechanism of this simulation study (i.e. independent base $p$-values across features) and their covariate-adaptiveness. 

\begin{table}[h]
\small
\begin{tabular}{|l|c|l|l|}
\hline
Method                                               & Covariate-adaptive? & $\FDR_{\rep}$ control \\ \hline
{\methodname}                & Yes              & Finite  (\thmref{error_control}$(ii)$)       \\
{No-Covar-\methodname} & No & Finite (\thmref{error_control}$(ii)$)  \\
{AdaFilter-BH}               & No               & Asymptotic (Theorem 4.4, \citet{adafilter})  \\
{Inflated-AdaFilter-BH}      & No               & Finite  (Theorem 4.3, \citet{adafilter})     \\
{CoFilter-BH}                       & No              & Finite (Proposition 4.1, \citet{Dickhaus2024}) \\
{BH}                & No               & Finite (Theorem 1, \citet{Benjamini1995})          \\
{AdaPT}                      & Yes              & Finite (Theorem 1, \citet{Lei2018})         \\
{CAMT}                       & Yes              & Asymptotic (Theorem 3.8, \citet{Zhang2022}) \\
{IHW}                        & Yes              & Asymptotic (Proposition 1 (b), \citet{Ignatiadis2021})  \\

 \hline
\end{tabular}
\caption{Summary of  methods compared in the simulation study of \secref{sim}. Specifically, whether the method is covariate-adaptive and the type of $\FDR_{\rep}$ control it possesses, i.e. finite-sample or asymptotic as $m \longrightarrow \infty$.}\label{tab:method_summary}
\end{table}

\subsection{Simulaton results}
Each method in \secref{sim_methods} is applied at the nominal $\FDR_{\rep}$ level $q = 0.05$ and their power is measured by the \textit{true positive rate}, 
\begin{equation*}
    \text{TPR}_{\rep}  \equiv  \text{TPR}_{\rep}(\widehat{\cR}) = \mathbb{E} \left[ \frac{\sum_{i \in [m] \backslash \cH^{u/[n]} }  \1 \left\{ i \in \widehat{\cR} \right\} }{ |[m] \backslash \cH^{u/[n]}| \vee 1} \right],
\end{equation*}
where $\widehat{\cR}$ is the method's rejection set and $[m] \backslash \cH^{u/[n]}$ is the set of false PC nulls. The simulation results are presented in \figref{Indepueqn} where the  empirical $\FDR_{\rep}$ and $\text{TPR}_{\rep}$ levels are computed based on $500$ repetitions. The following observations can be made:
\begin{itemize}
    \item \textbf{Observed $\FDR_{\rep}$ levels:} $\FDR_{\rep}$ control at $q = 0.05$ is generally observed for most methods. A notable exception is AdaFilter-BH (\AdaFilterBHCol line) under the setting $(n, \gamma_1) = (5, 0)$ (i.e., $5/[5]$ replicability with uninformative covariates) as its $\FDR_{\rep}$ levels exceeds $q = 0.05$ across most values of $\xi$. As noted in \tabref{method_summary}, AdaFilter-BH provides only asymptotic $\FDR_{\rep}$ guarantees.
    
    \item \textbf{Power when covariates are uninformative ($\gamma_1 = 0$):} When the covariates are uninformative, AdaFilter-BH outpowers the other methods across all settings of $n$ and $\xi$. 
    The  ParFilter and No-Covar-ParFilter methods (\ParFilterCol and \NoCovarParFilterCol lines), which visually overlap each other, are the most powerful methods after AdaFilter-BH. The other covariate-adaptive methods -- AdaPT (\AdaPTCol line), CAMT (\CAMTCol line), and IHW (\IHWCol line) -- perform worse than the covariate-free methods CoFilter-BH (\CoFilterBHCol line) and Inflated-AdaFilter-BH (\InflatedAdaFilterBHCol line).
   
    \item \textbf{Power when covariates are informative ($\gamma_1 = 1.0$ or $1.5$):}  The \methodname generally achieves the highest power when the covariates are informative. Its power advantage over competing methods is most pronounced when $n$ is small or when the signal size $\xi$ is large. For instance, in the setting of $2/[2]$ replicability with highly informative covariates ($\gamma_1 = 1.5$), the \methodname achieves over 0.10 higher $\text{TPR}_{\rep}$ than the next best method, AdaFilter-BH, for $\xi \in \{0.76, 0.78, 0.80, 0.82 \}$. 
    CAMT is consistently more powerful than the covariate-adaptive methods AdaPT and IHW but still, in general, fails to be more competitive than AdaFilter-BH, a covariate-free method dedicated for replicability analyses, even when the covariates are highly informative.
    
\end{itemize}

For maximal replicability analyses, our simulations highlight the competitiveness of the ParFilter, particularly when covariates are moderately to highly informative and the signal size is not too low. In simulation scenarios with uninformative covariates, AdaFilter-BH is the most powerful method but may violate $\FDR_{\rep}$ control. Therefore, the \methodname may be preferable if ensuring $\FDR_{\rep}$ control is a high priority. In \appref{additional_sim}, we explore further simulation setups, including the case of $u < n$ and the case of autoregressively correlated base $p$-values, as well as more methods, including alternative implementations of the \methodname with different algorithmic settings.
\begin{figure}[H]
    \centering
    \includegraphics[width=0.9\textwidth]{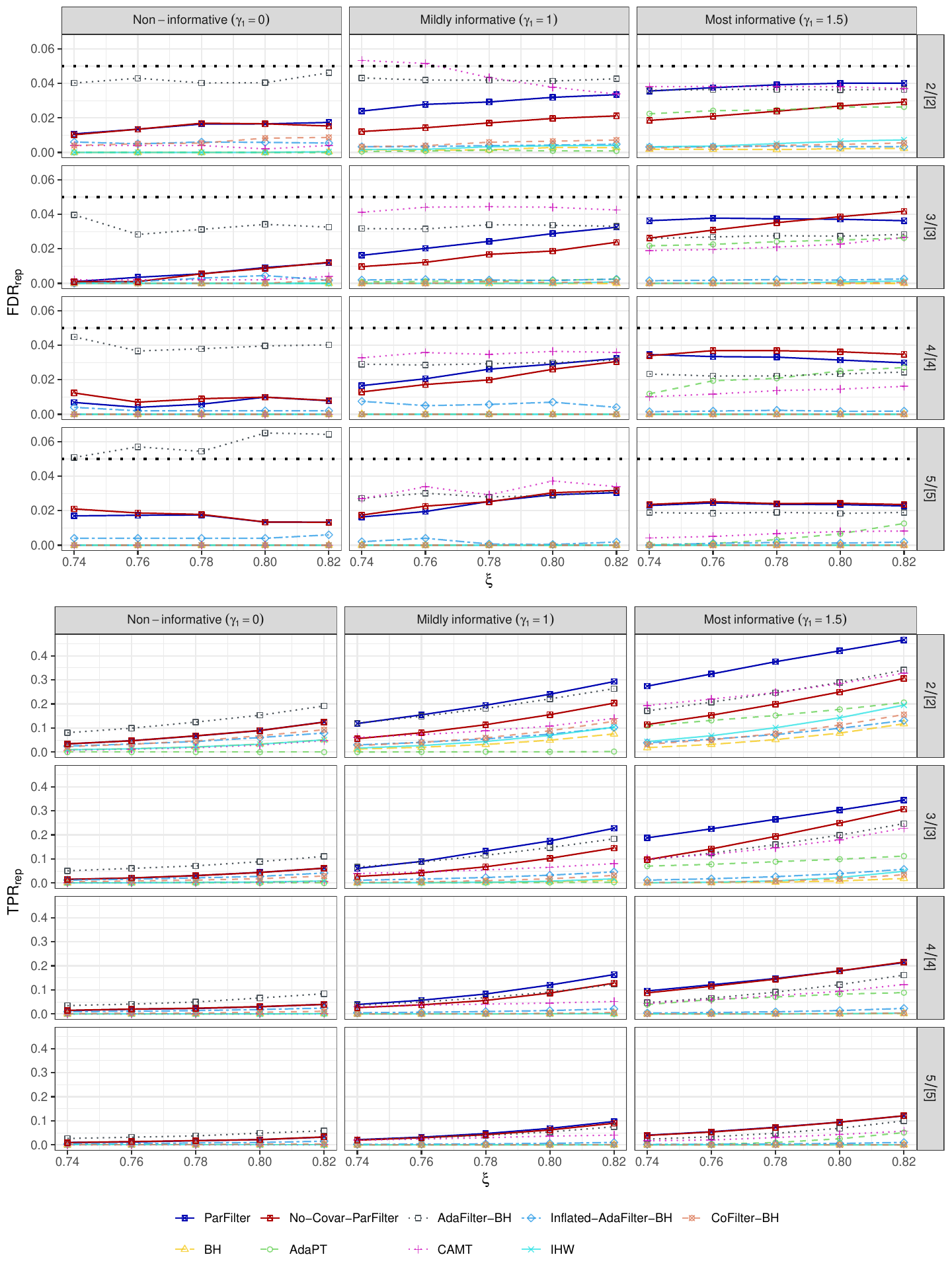}
    \caption{Empirical $\text{FDR}_{\rep}$ and $\text{TPR}_{\rep}$ of the compared methods for the simulations in \secref{sim} based on $500$ simulated datasets, for testing $2/[2]$, $3/[3]$, $4/[4]$, $5/[5]$ replicability under  independence of base $p$-values across features. Logistic parameters $\gamma_1 = 0, 1, 1.5$, denoted as non-informative, mildly informative, and most informative respectively. Effect size parameters $\xi = 0.74, 0.76, 0.78, 0.80, 0.82$.}
    \label{fig:Indepueqn}
\end{figure}

\newpage

\section{Case Study}\label{sec:realdata}
The immune system mounts an immune response to defend against foreign substances in the body (e.g., viruses and bacteria). An equally important task, called self-tolerance, is to recognize substances made by the body itself to prevent autoimmunity, where the immune system mistakenly attacks healthy cells and tissues. In medullary thymic epithelial (mTEC) cells expressing high Major Histocompatibility Complex class (MHC) II levels, commonly known as mTEChi cells, a protein called the autoimmune regulator (AIRE) drives the expression of genes to synthesize proteins critical for self-tolerance \citep{Derbinski2005}. Impaired AIRE transcription disrupts self-tolerance and leads to autoimmune diseases \citep{Anderson2002}.

In this case study, we conduct a maximum replicability analysis to identify genes implicated in autoimmune diseases arising from defective AIRE transcription. For our analysis, we found $n=3$ RNA-Seq datasets from the GEO that compared control and Aire\footnote{The Aire gene in mice is equivalent to the AIRE gene in humans. It is the convention for mouse gene symbols to have the first letter capitalized, followed by lowercase letters.}-inactivated mice mTEChi cells: GSE221114 \citep{GSE221114}, GSE222285 \citep{GSE222285}, and GSE224247 \citep{GSE224247}. We indexed these datasets as study $1$, $2$, and $3$ respectively. After removing genes with overly low read counts (fewer than 10), we were left with $m = 6,587$ genes for our analysis, each of which was arbitrarily assigned a unique index $i \in [6587]$.

For each $(i,j) \in [6587] \times [3]$, we set the base null hypothesis to be $H_{ij}: \mu_{ij} = 0$ against the alternative $\mu_{ij} \neq 0$, where $\mu_{ij} \in \bR$ denotes the difference in mean expression levels between control and Aire-inactivated mice. Using the \texttt{limma} package \citep{Ritchie2015} in R, we applied the \texttt{eBayes} function to the RNA-Seq data to compute \textit{moderated} $t$-statistics \citep{Smyth2004}, which were then transformed into the base $p$-values $\mathbf{P} = (p_{ij})_{i \in [6587], j \in [3]}$ for this replicability analysis.

We now turn to the task of constructing a covariate $x_{ij}$ for each base hypothesis $H_{ij}$. Consider the following side information on self-tolerance and cells surrounding mTEChi:
\begin{itemize}
    \item The expression of the gene KAT7 supports AIRE in driving the expression of nearby genes to produce proteins essential for self-tolerance \citep{KAT72022}.
    \item Cortical thymic epithelial cells (cTECs), located in the thymus cortex, play a role in helping the body recognize its own proteins without attacking them \citep{Klein2014}.
    \item mTEC cells expressing low MHC II levels, i.e., mTEClo cells, develop into mTEChi cells over time as they mature within the thymus \citep{Akiyama2013}.
\end{itemize}
Given the side information above, consider the following null hypotheses:
\begin{equation}\label{aux_hyp}
    H^{c}_i : \mu^{c}_i = 0 \text{ for each $c \in \{ \text{cTEC}, \text{mTEChi}, \text{mTEClo} \}$ and $i \in [6587]$},
\end{equation}
where $\mu^{c}_i \in \mathbb{R}$ is the mean difference in expression levels between control and Kat7-inactivated mice for gene $i$ in cell $c$. From the GEO, we extracted RNA-Seq data relevant for testing \eqref{aux_hyp} from dataset GSE188870 \citep{KAT72022}. Then, for each $i \in [6587]$, we used \texttt{limma} again to form a vector $\mathbf{y}_i =(y^{\text{cTEC}}_{i},y^{\text{mTEChi}}_{i},y^{\text{mTEClo}}_{i}) \in \bR^3$ where $y^{c}_i \in \mathbb{R}$ is the moderated $t$-statistic for testing $H^{c}_i$. We transform $\mathbf{y}_i$ using a natural cubic spline basis expansion with four interior knots (using the \texttt{ns} function in the R package \texttt{spline} with the argument \texttt{df = 6}) to create a vector $\dot{\mathbf{y}}_i \in \bR^{18}$. The covariates $\mathbf{X} = (x_{ij})_{i \in [6587], j \in [3]}$ for this replicability analysis are then set by letting $x_{ij} = \dot{\mathbf{y}}_i \text{ for each $(i,j) \in [6587] \times [3]$}$. 
We apply the same $\FDR$ methods examined in \secref{sim} to this case study, at the same nominal $\FDR_{\rep}$ target of $q = 0.05$. Since the IHW procedure only supports univariate covariates, we use $=(y^{\text{cTEC}}_{i} + y^{\text{mTEChi}}_{i} + y^{\text{mTEClo}}_{i})/3 \in \bR$ as the covariate  for this method, whereas the other two covariate-adaptive methods for single-study FDR analysis (AdaPT and CAMT) use $\dot{\mathbf{y}}_i$.

\subsection{Replicability analysis results}
The number of genes identified as being $3/[3]$ replicated by each of the compared methods is presented in \figref{AIREanalysis}. There, we see that \methodname achieves a larger number of rejections than the remaining methods across. In particular, the \methodname achieves 11 more rejections than No-Covar-\methodname and 34 more than AdaFilter-BH -- the only other two procedures with more than 500 rejections. Together, the \methodname and  No-Covar-\methodname identified 32 genes that are not detected by any other method. We ranked these 32 genes by their PC $p$-values -- computed using \eqref{stouffer_GBHPC} -- from smallest to largest, and present the top fifteen in \tabref{GeneRanks}. The first three genes in this table are all related to autoimmunity according to the literature:
\begin{itemize}
	\item F11r: \citet{Bonilha2021} found links between F11r and the regulation of immune responses in pathological conditions such as autoimmune diseases and cancers;
	\item Mknk2: \citet{Karampelias2022} identified Mknk2 as a therapeutic target for Type 1 diabetes, a common autoimmune disease that affects the production of insulin;
	\item Mreg: \citet{DamekPoprawa2009} suggest that Mreg is required for lysosome maturation in cells. Lysosomal dysfunction has been implicated in autoimmune diseases \citep{Gros2023}.
\end{itemize}  
In \appref{real_data_extra}, we highlight seven additional genes from \tabref{GeneRanks} with known associations to autoimmune diseases.


\begin{figure}
  \begin{minipage}[b]{.50\linewidth}
    \centering
    \includegraphics[width=0.9\linewidth]{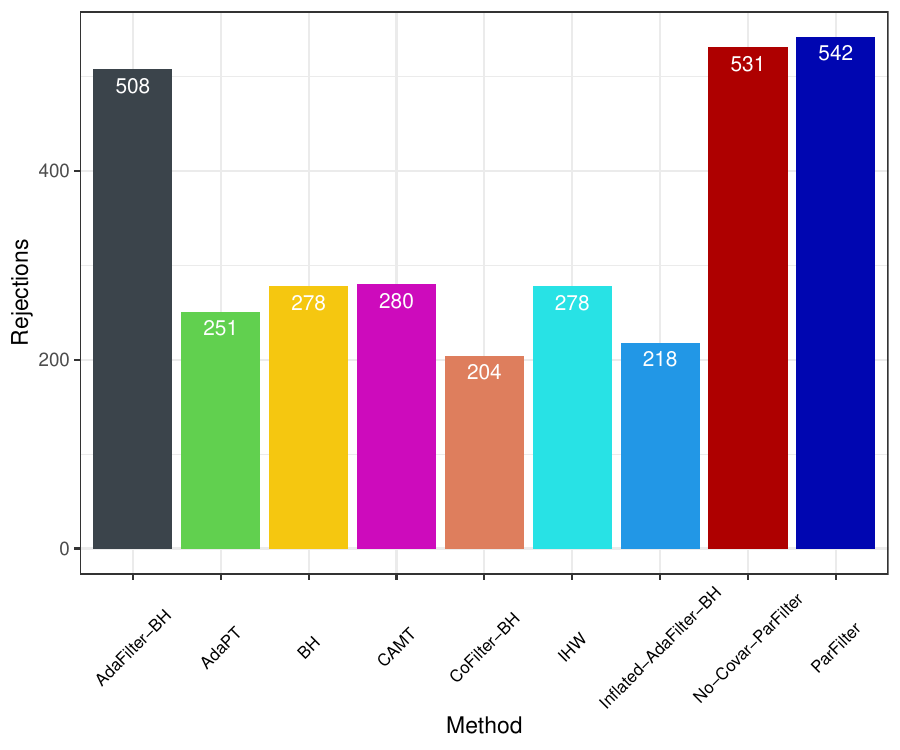}
    \captionsetup{width=0.9\linewidth}
    \captionof{figure}{Number of rejections across the compared methods in a $3/[3]$ replicability analysis.}\label{fig:AIREanalysis}
  \end{minipage}\hfill
  \begin{minipage}[b]{.50\linewidth}\footnotesize
    \centering
    \begin{tabular}{ *{2}{c} }
      Gene & PC $p$-value  \\
      \hline
	F11r    & 0.01259083 \\
	Mknk2   &  0.01260681  \\
	Mreg      & 0.01266433  \\
	Plac8     & 0.01267082  \\
	 Srsf3      & 0.01275518  \\
	 Ecscr     & 0.01278160  \\
	 Jarid2       & 0.01286667   \\
	 Ktn1        & 0.01309430  \\
	 Ncl      & 0.01313040  \\
	 Nhsl1  & 0.01320058 \\
	  Bcl2l2   & 0.01328083  \\
	  Sst   & 0.01329332  \\
	  Fxyd3  & 0.01336191  \\
	  Rell1    & 0.01344367  \\
	  Pnn   & 0.01347595  \\
    \end{tabular}
    \captionsetup{width=0.9\linewidth}
    \captionof{table}{Top 15 genes identified as $3/[3]$ replicated by the \methodname and No-Covar-ParFilter but not by other methods at $q = 0.05$.}\label{tab:GeneRanks}
  \end{minipage}
\end{figure}

\section{Discussion}\label{sec:conclusion}

It may be of interest to perform   \emph{directional} inference  in some applications involving two-sided hypotheses where $\cA_i  = \{0\}$ for all $i \in [m]$ in \eqref{individual_hypothesis}.  Formally, one can  define two  base null hypotheses
\begin{equation*}
    \text{$H^{-}_{ij}: \mu_{ij} \geq 0 \quad$ and $\quad H^{+}_{ij} : \mu_{ij} \leq 0 \quad$  for each $(i,j) \in [m] \times [n]$},
\end{equation*}
and subsequently the directional PC null hypotheses  
\begin{align*}
    H^{u/[n], -}_{i} : & | \{ j \in [n] : H^{-}_{ij} \text{ is false } \}| \geq n - u + 1 \quad \text{and} \\
    &\quad \quad 
     H^{u/[n], +}_{i} : | \{ j \in [n] : H^{+}_{ij} \text{ is false }  \}| \geq n - u + 1 \quad \text{ for each $i \in [m]$}.
\end{align*}
Rejecting $H^{u/[n], -}_{i}$ (resp. $H^{u/[n], +}_{i}$) declares there are at least $u$ negative (resp. positive) signals for feature $i$. We can apply the ParFilter  to test $\{H^{u/[n], -}_{i}\}_{i \in [m]}$ and $\{H^{u/[n], +}_{i}\}_{i \in [m]}$ under the $\FDR_\rep$  targets $q^-$ and $q^{+}$ separately, and the directional false discovery rate for the overall replicability analysis can be controlled under level $q^- + q^+$; we refer to \citet[Sec. 4.4.3]{adafilter} and \citet[Remark 5.1]{BogomolovHeller2023}  for more exposition on this approach.

In summary, the  ParFilter is a highly customizable framework (with flexible choices of testing configuration, selection rules, local replicability levels, etc.) to test for replicated signals in  large-scale settings with strong false discovery rate control in finite samples.  Its adaptiveness to auxiliary covariate information has opened up new avenues to tackle  partial conjunction testing, which is by nature very power-deficient for high replicability levels if the inference is based on $p$-values alone. There is certainly ample room for further improving our the current implementation, and  we  discuss some of these extensions and  their related issues below.


\subsection{Future  directions} \label{sec:future_directions}

One natural direction for future work is to develop more powerful data-driven strategies for constructing $\cG_k$, $w_k$, $\mathbf{u}_i$, $p^{{}_{\uiG}}_i$, $\cS_k$, and $\nu_{ik}$ than those proposed in this paper.  In particular, we suspect that $\nu_{ik}$ -- the local PC weight for feature $i$ in group $k$ -- may have an optimal construction under data generated from a two-group model \citep{Efron2008},  by using a weighting scheme similar to that in Proposition 3.2 of \citet{Roquain2008OptimalWF}. Subsequently, it may be possible to derive an “oracle inequality” for the optimal power; see  \citet[Theorem 4.2]{Roquain2008OptimalWF}.

Another direction is to convert the \methodname from a $p$-value-based procedure to an $e$-value-based procedure, in an analogous spirit to how \citet{WangRamdas2022} converts the BH procedure to an $e$-BH procedure. Interest in $e$-values has grown enormously in recent years as an alternative to $p$-values in hypothesis testing \citep{WangVovk2021}. Based on the theoretical underpinnings of the e-BH procedure introduced in \citet{WangRamdas2022}, we suspect that an explicit $e$-value version of the \methodname could offer improved power when the data among the $m$ features within a study are arbitrarily dependent, as in our \thmref{error_control_dep}; we are actively investigating this extension at the time of writing.

\subsection{Current limitations when $u < n$}

Throughout, we have largely emphasized  on the important case of maximum replicability level $u = n$. While this is  the most natural  level a practitioner may request, it is sometimes worthwhile to choose $u < n$, either because of power consideration or because maximum replicability is  not demanded for some applications.

When assessing  less-than-maximum replicability with $u <n$, our  simulation studies in \appref{uleqnsim} suggest that our implementation of the ParFilter therein is less competitive than its performance demonstrated  in \secref{sim} for testing maximum replicability. This is mostly due to the difficulty of selecting appropriate local replicability levels $\mathbf{u}_i$ for $i \in [m]$ in a way that makes $u/[n]$ replicated features more detectable. In  \appref{uleqn}, we have offered a  detailed discussion of this issue, which arises from the ``imbalance'' of a testing configuration with respect to some $u/[n]$ replicated features,  and why the \methodname does not encounter the same difficulty when  $u = n$. 

Given its state-of-the-art performance for $u=n$,  the ParFilter in its current form is a much-needed addition to practitioners'  toolkits  for conducting meta-analyses, and 
complements other existing $\FDR_\rep$ controlling methods, such as the AdaFilter-BH procedure \citep{adafilter}, that have exhibited stronger performance for  $u <n$ as per our simulations in \appref{uleqnsim}. In the same appendix, we have also discussed potential directions for future research aimed at improving the power of the \methodname when $u < n$.

\subsection{Post hoc inference for individual studies}

The \methodname can easily  incorporate a \textit{post hoc} FDR procedure for testing the base hypotheses in a given study $j$. Consider the following candidate rejection set
\begin{equation*}
	\cR_j(\tau_j) \equiv \cR(\hattvec) \cap \{ i \in [m]: p_{ij}  \leq \tau_j \} 
\end{equation*}
 for testing  $H_{1j}, \dots, H_{mj}$, 
where $\tau_j \in [0,1]$ is a rejection threshold for the base $p$-values in study $j$. Above, the base null $H_{ij}$ can only be rejected by $\cR_j(\tau_j)$ if the PC null $H^{u/[n]}_i$ is already rejected by $\cR(\hattvec)$, the output of the ParFilter (\algref{PF}). Thus, $\cR_j(\tau_j)$ provides \textit{post hoc} inferences by building upon the initial discoveries made by $\cR(\hattvec)$. Let
\begin{equation*}
\FDR_{j} \equiv \FDR_{j}(\tau_j) = \mathbb{E} \left[ \frac{\sum_{i \in \cR_j(\tau_j)} \1 \left\{ \text{$H_{ij}$ is true} \right\}   }{|\cR_j(\tau_j)| \vee 1} \right]
\end{equation*}
denote the false discovery rate associated with the base null hypotheses rejected by $\cR_j(\tau_j)$. $\FDR_j$ can be controlled by implementing the post hoc procedure outlined in the theorem below, whose proof provided in \appref{post_hoc} is based on \citet[Theorem 1]{Katsevich2023}:

\begin{theorem}[Post hoc FDR control for an individual study]\label{thm:post_hoc}
Suppose the assumptions and conditions of either \thmref{error_control}$(i)$ or $(ii)$  hold. For a given study $j \in [n]$ and an $\FDR_{j}$  target $q_j \in (0,1)$, let
\begin{equation*}
\hat{\tau}_j \equiv \max \left\{ \tau_j \in \{ p_{ij} : i \in [|\cS_{k(j)}|]  \} : \frac{ |\cS_{k(j)}| \cdot \tau_j}{|\cR_j(\tau_j)| \vee 1}  \leq q_j \right\},
\end{equation*}
where $k(j) \in [K]$ is the group in the testing configuration to which  study $j$ belongs. Then the post hoc rejection set $\cR_j(\hat{\tau}_j)$ has $\FDR_{j}(\hat{\tau}_j)$ control at level $q_j$, i.e., $\FDR_{j}(\hat{\tau}_j) \leq q_j$.
\end{theorem}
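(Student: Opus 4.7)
The plan is to reduce $\FDR_j(\hat{\tau}_j)$ to a bound on the proportion of null $p$-values in $\cS_{k(j)}$ falling at or below $\hat{\tau}_j$, and then invoke the backward-supermartingale machinery of \citet{Katsevich2023} as the authors suggest.

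Let $H_0^{(j)} \equiv \{i \in [m] : H_{ij} \text{ is true}\}$ and $V_j(\tau) \equiv \sum_{i \in \cS_{k(j)} \cap H_0^{(j)}} \1\{p_{ij} \leq \tau\}$. First I would observe that any feature in $\cR_j(\tau_j)$ necessarily lies in $\cR(\hattvec) \subseteq \cS_{k(j)}$, since the \methodname only rejects features that are selected by every group. This gives
\begin{equation*}
\FDP_j(\hat{\tau}_j) \;\leq\; \frac{V_j(\hat{\tau}_j)}{|\cR_j(\hat{\tau}_j)| \vee 1} \;=\; \frac{|\cS_{k(j)}|\cdot\hat{\tau}_j}{|\cR_j(\hat{\tau}_j)| \vee 1}\cdot\frac{V_j(\hat{\tau}_j)}{|\cS_{k(j)}|\cdot\hat{\tau}_j} \;\leq\; q_j\cdot\frac{V_j(\hat{\tau}_j)}{|\cS_{k(j)}|\cdot\hat{\tau}_j},
\end{equation*}
where the last step invokes the defining criterion of $\hat{\tau}_j$. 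Taking expectations reduces the theorem to proving $\mathbb{E}\bigl[V_j(\hat{\tau}_j)/(|\cS_{k(j)}|\cdot\hat{\tau}_j)\bigr] \leq 1$.

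Next, I would condition on the $\sigma$-algebra $\mathcal{F} \equiv \sigma(\mathbf{X},\mathbf{P}_{\cdot(-\cG_{k(j)})})$. By \condref{indep}, the selected set $\cS_{k(j)}$ is $\mathcal{F}$-measurable, so $\cS_{k(j)} \cap H_0^{(j)}$ is fixed given $\mathcal{F}$. Combining \assumpsref{indepstudies}--\assumpssref{pxindependence} with the within-study feature-wise independence imposed in \thmref{error_control}, the null $p$-values $\{p_{ij} : i \in \cS_{k(j)} \cap H_0^{(j)}\}$ are, conditional on $\mathcal{F}$, mutually independent and super-uniform. Running $\tau$ backwards from $1$ along the discrete candidate grid, $V_j(\tau)/\tau$ becomes the canonical reverse-time supermartingale studied in \citet{Katsevich2023}; applying his Theorem 1 conditionally on $\mathcal{F}$ yields $\mathbb{E}[V_j(\hat{\tau}_j)/\hat{\tau}_j \mid \mathcal{F}] \leq |\cS_{k(j)} \cap H_0^{(j)}| \leq |\cS_{k(j)}|$, which delivers the required inequality after dividing by $|\cS_{k(j)}|$ and taking outer expectations.

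The main obstacle will be verifying that $\hat{\tau}_j$ is an honest stopping time for the backward filtration that drives the \citet{Katsevich2023} argument. Although $|\cS_{k(j)}|$ is $\mathcal{F}$-measurable, the count $|\cR_j(\tau_j)|$ depends on $\cR(\hattvec)$, which in turn depends on the $p_{ij}$'s in group $k(j)$ through the local PC $p$-values, the other groups' selection sets, and the data-driven threshold vector $\hattvec$. The verification amounts to showing that the event $\{\hat{\tau}_j \geq \tau_j\}$ depends only on information revealed at or above level $\tau_j$ in reverse order, which will hinge on the monotonicity of the local PC combining functions (\lemref{construct_local_PC}) together with the stability of each $f_{\cS_k}$ (\condref{stable}), in parallel to the extension of the \citet{Katsevich2023} techniques already deployed in the proof of \thmref{error_control}.
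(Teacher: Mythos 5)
Your opening reduction is fine, but the heart of the argument is missing, and it is not the route the paper takes. The paper's proof contains no backward-supermartingale or optional-stopping step: it observes that $\cR_j(\hat{\tau}_j) = \cR(\hattvec) \cap \{ i \in \cS_{k(j)} : p_{ij} \leq \hat{\tau}_j \}$ is exactly the Focused BH procedure of \citet{Katsevich2023} applied to $\{H_{ij} : i \in \cS_{k(j)}\}$, with the \methodname output $\cR(\hattvec)$ playing the role of the filter; it then verifies via \lemref{stable} that this filter is \emph{stable} in the sense required by Theorem 1$(ii)$ of \citet{Katsevich2023}, applies that theorem conditionally on the information that fixes $\cS_{k(j)}$, and takes expectations. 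Your plan instead treats that theorem as if it were a supermartingale bound and tries to obtain $\mathbb{E}\bigl[V_j(\hat{\tau}_j)/\hat{\tau}_j \mid \mathcal{F}\bigr] \leq |\cS_{k(j)} \cap H_0^{(j)}|$ by optional stopping. That is precisely where the difficulty sits, and you leave it unresolved: $\hat{\tau}_j$ is not a stopping time of the backward filtration generated by the indicator processes of the group-$k(j)$ $p$-values, because $|\cR_j(\tau)|$ involves $\cR(\hattvec)$, which depends on the realized values of the group-$k(j)$ $p$-values through the local PC weights $\nu_{ik}$ (built from $\mathbf{P}_{(-\cS_k)\cG_k}$ under \condref{nu}$(a)$), through $\hat{\pi}_{k(j)}$ in \eqref{eq:pi_hat_adaptive} when it is used, and through the fixed-point thresholds $\hattvec$ — none of which are functions only of information ``at or above level $\tau$.'' Monotonicity of the combining functions and stability of $f_{\cS_k}$ do not rescue this: stability is a leave-one-out invariance property, not a backward-measurability property, and the entire purpose of the stability route in \citet{Katsevich2023} (and of \lemref{stable} here) is to handle a data-dependent filter \emph{without} any stopping-time requirement.

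There is also a circularity in the way you invoke the reference: if you are allowed to ``apply Theorem 1 of \citet{Katsevich2023} conditionally on $\mathcal{F}$,'' that theorem is itself an FDR statement about the filtered rejection set, so it already delivers $\FDR_{j}(\hat{\tau}_j) \leq q_j$ and your reduction to $\mathbb{E}\bigl[V_j(\hat{\tau}_j)/(|\cS_{k(j)}|\hat{\tau}_j)\bigr] \leq 1$ is superfluous; if you are not, your key inequality is unproved. To repair the argument you should do what the paper does: check that, conditional on the data fixing $\cS_{k(j)}$, the null $p_{ij}$'s for $i \in \cS_{k(j)}$ are independent and super-uniform (using \assumpsref{indepstudies}--\assumpssref{pxindependence} and the within-study independence assumed in \thmref{error_control}), establish that the screening set $\cR(\hattvec)$ is stable via \lemref{stable} (this is where \condref{stable}, \condref{nu}$(a)$, and the admissible forms of $\hat{\pi}_k$ are genuinely needed), and then conclude by the stable-filter case of \citet{Katsevich2023}, Theorem 1, followed by the tower property.
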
 

We note that, if each group in  the testing configuration contain only a single study as in \exref{suggested_testing_config_ueqn}, such post hoc FDR inference  is only meaningful if $q_j \leq w_{k(j)} q$, because \thmsref{error_control} and \thmssref{error_control_dep} already ensure that each group-wise $\FDR_{k(j)}$ can  be controlled under level $w_{k(j)} \cdot q$.

\bibliographystyle{apalike}
\bibliography{references}

\appendix

\section{Miscellaneous implementational details} \label{app:implementation_other}

\subsection{Setting the testing configuration when $u < n$}\label{app:uleqn}
Generally speaking, one should aim to choose a testing configuration that is \textit{balanced} for the vast majority of features. In other words, one should avoid choosing a configuration that is \textit{imbalanced} for most features.
\begin{definition}[Imbalance and Balance]\label{def:imbalance}
Let $\{ \cG_k, w_k, \mathbf{u}_{\ell} \}_{k \in [K], \ell \in [m]}$ be a testing configuration for $u/[n]$ replicability, and $i$ be a given $u/[n]$ replicated feature, i.e., the PC null $H^{u/n}_i$ is false. Then the testing configuration is said to be:
\begin{itemize}
    \item imbalanced for feature $i$ if a local PC null $H^{\uiG}_i$ is true for some $k \in [K]$. 
    \item  balanced for feature $i$ if the local PC null $H^{\uiG}_i$ is false for all $k \in [K]$.
\end{itemize}
\end{definition}
To illustrate why minimizing imbalance is important, suppose \emph{every} $u/[n]$ replicated feature is imbalanced because of a chosen testing configuration.  Thus, by \defref{imbalance}, the set of non-nulls $[m] \backslash \cH^{u/[n]}$ is also the set of all imbalanced features and  satisfies
\begin{align}\label{I_inclusion}
[m] \backslash \cH^{u/[n]} \subseteq \bigcup_{{}_{k \in [K]}} \cH_k.
\end{align}
For each $k \in [K]$, the number of false discoveries made by the \methodname among the local PC nulls in group $k$ is ${|\tiny \widehat{\cR} \cap \cH_k|}$, where $\widehat{\cR} = \cR(\hattvec)$ is the output of \algref{PF}. Since the \methodname controls $\FDR_k(\widehat{\cR})$ at level $w_k \cdot q$, where $q$ is typically small (e.g., $q = 0.05$), $|\widehat{\cR} \cap \cH_k|$ is generally small for all $k$. It follows from \eqref{I_inclusion} that the number of correct rejections $|\widehat{\cR} \cap ([m] \backslash \cH^{u/[n]} )|$ is small too, which means that $\widehat{\cR}$ is unlikely to include many $u/[n]$ replicated features.

When $u = n$, it is impossible for a testing configuration to be imbalanced for any feature. Hence, imbalance is not a concern for maximum replicability analyses, which explains the strong performance of the ParFilter in the simulations of \secref{sim}. Unfortunately, the same cannot be said for the case of $u < n$. For a testing configuration in the latter case, our recommendation  is to use $K = 2$ groups with approximately equal sizes and local error weights, as using less groups makes the testing configuration less likely to be imbalanced for any $u/[n]$ replicated feature.  For instance, if  $\text{mod}(a,b)$ denotes the operation $a$ modulo $b$ for integers $a,b \in \bN$, letting
\begin{equation}\label{uleqnconfig}
    \cG_k = \{ j \in [n] : \text{mod}(j,2) + 1 = k\} \quad \text{and} \quad w_k = \frac{|\cG_k|}{n} \quad \text{for each $k \in [2]$}
\end{equation}
assigns the studies $\{1, 3, 5, \dots\}$ to $\cG_2$  and the studies $\{2, 4, 6, \dots\}$ to $\cG_1$, with the  two local error weights proportional to their group sizes. If possible, we recommend using expert knowledge to construct the local replicability levels by analyzing the covariates $x_{i1}, \dots, x_{in}$ to identify which base hypotheses among $H_{i1}, \dots, H_{in}$ are most likely false, and assigning $\mathbf{u}_i \equiv (u_{i1}, \dots, u_{iK})$ accordingly, while ensuring the constraints in \eqref{uvec} are satisfied. 

It may be true that several testing configurations are  considered equally effective in mitigating imbalance. In such scenarios, we recommend  randomly sampling one such configuration to be used in the \methodname  to hedge against overly relying on a single configuration that may not yield powerful results. It is immediately clear that as long as the testing configuration is sampled  independently of $\mathbf{P}$ and $\mathbf{X}$, the $\FDR_{\rep}$ guarantees provided by \thmsref{error_control} and \ref{thm:error_control_dep}  still hold; we state this formally as a theorem:

\begin{theorem}[$\FDR_{\rep}$ control under random testing configurations]\label{thm:random_error_control}
    Suppose 
    \begin{equation}\label{L}
	\{ \cG^{(b)}_k, w^{(b)}_{k}, \mathbf{u}^{(b)}_i \}_{k \in [K^{(b)}], i \in [m]},  \quad \text{$b \in [B]$},
\end{equation}
are  $B \in \mathbb{N}$ testing configurations, and the testing configuration used in the \methodname (\algref{PF}) is $ \{ \cG^{b^*}_k, w^{b^*}_{k}, \mathbf{u}^{b^*}_i \}_{k \in [K_{b^*}], i \in [m]}$, where $b^* \in [B]$ is a random variable that is independent of $\mathbf{P}$ and $\mathbf{X}$. If the assumptions and conditions of either \thmref{error_control} or \ref{thm:error_control_dep} hold, then the \methodname at level $q \in (0,1)$  returns a rejection set $\cR(\hattvec)$ satisfying 
 \begin{equation*}
        \FDR_{k} (\cR(\hattvec)) \leq  w_k \cdot q  \;\ \text{ for each } k \in [K] \;\ \text{ and } \;\  \FDR_{\rep}(\cR(\hattvec)) \leq q.
    \end{equation*}
\end{theorem}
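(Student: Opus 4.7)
The plan is to reduce the statement to the already-proved deterministic-configuration case by conditioning on $b^*$. Since $b^*$ is independent of $(\mathbf{P}, \mathbf{X})$, on the event $\{b^* = b\}$ the testing configuration is the fixed object $\{\cG^{(b)}_k, w^{(b)}_k, \mathbf{u}^{(b)}_i\}_{k \in [K^{(b)}], i \in [m]}$, which trivially satisfies Definition 3.1 (it is constant given $b$, hence does not depend on $\mathbf{P}$). Moreover, because $b^*$ is independent of $(\mathbf{P}, \mathbf{X})$, the conditional distribution of $(\mathbf{P}, \mathbf{X})$ given $\{b^* = b\}$ equals its unconditional distribution, so every assumption imposed on $\mathbf{P}$ in \assumpref{indepstudies}--\assumpref{pxindependence} still holds under this conditional law.

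Next, I would check that the downstream ingredients of the ParFilter remain well-behaved conditional on $\{b^* = b\}$. The local PC $p$-values $p^{u_{ik}/\cG_k}_i$, selection rules $f_{\cS_k}$, local PC weight functions $f_{\nubf_k}$, and null-proportion estimators $f_{\hat\pi_k}$ are all specified in terms of the (now fixed) configuration, so each of the hypotheses required by \thmref{error_control} (or \thmref{error_control_dep}) -- conditional validity of $p_i^{u_{ik}/\cG_k}$, \condref{indep}, \condref{stable}, and \condref{nu}$(a)$ or $(b)$ -- continues to hold on the event $\{b^* = b\}$. Applying the relevant theorem in its conditional form therefore yields, for every $k \in [K^{(b)}]$,
\begin{equation*}
\mathbb{E}\!\left[\left. \tfrac{\sum_{i \in \cH_k^{(b)}} \1\{i \in \cR(\hattvec)\}}{|\cR(\hattvec)| \vee 1} \,\right|\, b^* = b \right] \leq w_k^{(b)} \cdot q,
\end{equation*}
and analogously $\mathbb{E}[\FDP_{\rep}(\cR(\hattvec)) \mid b^* = b] \leq q$, where $\FDP_{\rep}$ denotes the replicability FDP.

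The unconditional bounds then follow immediately from the tower property. For the aggregate statement,
\begin{equation*}
\FDR_{\rep}(\cR(\hattvec)) \;=\; \mathbb{E}\!\left[\mathbb{E}[\FDP_{\rep}(\cR(\hattvec)) \mid b^*]\right] \;\leq\; \mathbb{E}[q] \;=\; q,
\end{equation*}
and the group-wise bound is obtained the same way (with $k$ understood to index the groups of the randomly chosen configuration indexed by $b^*$, i.e., $k \in [K^{(b^*)}]$, and $w_k = w_k^{(b^*)}$). I do not anticipate a serious obstacle in this argument; the only point requiring care is the interpretation of the notation ``$k \in [K]$'' in the statement, which should be read as ``for each group in the sampled configuration'' -- a clarification that can be made explicit in a single line before invoking the tower property. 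The entire argument reduces to verifying that independence of $b^*$ from $(\mathbf{P}, \mathbf{X})$ is enough to make the hypotheses of the deterministic theorem pass through conditioning, which is immediate.
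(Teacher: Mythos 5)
Your proposal is correct and follows essentially the same reasoning the paper intends: the paper treats the result as immediate from the independence of the sampled configuration from $(\mathbf{P},\mathbf{X})$, which is exactly your condition-on-$\{b^*=b\}$ and tower-property argument. No gaps; the only care needed is the notational point about $k$ ranging over the groups of the sampled configuration, which you already flag.
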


\subsection{Results for the working model \eqref{pmod} in \secref{powerful_weights}}\label{app:working_model}
\subsubsection{Derivation of $\Clfdrc^{\upsilon/\cJ}_i$ for any $\cJ \subseteq [n]$ and $\upsilon \in |\cJ|$ in \eqref{form_for_omega_ivcj}}\label{app:PCClfdrc}
Let $\Clfdrc_{ij} \equiv \Pr(\text{$H_{ij}$ is false} | \mathbf{P},\mathbf{X})$ be the probability that $H_{ij}$ is false given $\mathbf{P}$ and $\mathbf{X}$. 
We have under the working model that
\begin{align*}
\Clfdrc_{ij} &= \Pr(\text{$H_{ij}$ is false}|\mathbf{P},\mathbf{X}) \\
&= \Pr(\text{$H_{ij}$ is false}|p_{ij},x_{ij}) \\
&= \frac{\Pr(\text{$H_{ij}$ is false},p_{ij},x_{ij})}{ \Pr(p_{ij},x_{ij}) } \\
&= \frac{\Pr(p_{ij}| \text{$H_{ij}$ is false}, x_{ij}) \Pr(\text{$H_{ij}$ is false}| x_{ij})  \Pr(x_{ij}) }{ \Pr(p_{ij}|x_{ij}) \Pr(x_{ij})  } \\
&= \frac{ (1 - \pi_{\zetabf_j}(x_{ij})) \cdot  f_{\betabf_{j}}(p_{ij}|x_{ij}) }{  f_{\zetabf_j, \betabf_j}(p_{ij}|x_{ij}) }.
\end{align*}

It follows that, under the independence of $(p_{i1},H_{i1} ), \dots, (p_{in}, H_{in} )$ conditional on ${\bf X}$,
\begin{align*}
\Clfdrc^{\upsilon/\cJ}_i &= \Pr(\text{$H^{\upsilon/\cJ}_i$ is false}| \mathbf{P},\mathbf{X}) \\
&= \Pr(\text{At least $\upsilon$ null base hypotheses among $H_{ij}$ for $j \in \cJ$ are false}| \mathbf{P},\mathbf{X} ) \\
&=  \sum_{ \cL \subseteq \cJ : |\cL| \geq \upsilon }  \prod_{j \in \cJ \backslash \cL} \Pr(\text{$H_{ij}$ is true}| \mathbf{P}, \mathbf{X})  \prod_{j \in \cL} \Pr(\text{$H_{ij}$ is false}| \mathbf{P},\mathbf{X}) \\
&=  \sum_{ \cL \subseteq \cJ : |\cL| \geq \upsilon }  \prod_{j \in \cJ \backslash \cL} \Pr(\text{$H_{ij}$ is true}| p_{ij}, x_{ij})  \prod_{j \in \cL} \Pr(\text{$H_{ij}$ is false}| p_{ij}, x_{ij}) \\
&=  \sum_{ \cL \subseteq \cJ : |\cL| \geq \upsilon }  \prod_{j \in \cJ \backslash \cL} (1 - \Clfdrc_{ij})  \prod_{j \in \cL}  \Clfdrc_{ij}.
\end{align*}

\subsubsection{Derivation of $\mathbb{E}[p_{ij}|x_{ij}]$}\label{app:epx}
Under the working model, we have that
\begin{align*}
\mathbb{E}[p_{ij}|x_{ij}] &= \int^1_0  p \cdot \left(  \pi_{\zetabf_j}(x_{ij}) + (1 - \pi_{\zetabf_j}(x_{ij})) \cdot  (1 - k_{\boldsymbol{\beta}_j}(x_{ij}) ) \cdot p^{-k_{\boldsymbol{\beta}_j}(x_{ij})}) \right) \ dp \\
&= \left[ \frac{\left(1 - \pi_{\zetabf_j}(x_{ij}) \right) \left(1 - k_{\boldsymbol{\beta}_j}(x_{ij})  \right) p^{2 - k_{\boldsymbol{\beta}_j}(x_{ij}) }}{2 - k_{\boldsymbol{\beta}_j}(x_{ij}) } + \frac{p^{2} \pi_{\zetabf_j}(x_{ij}) }{2} \right]^1_0 \\
&= \frac{\left(1 - \pi_{\zetabf_j}(x_{ij}) \right) \left(1 - k_{\boldsymbol{\beta}_j}(x_{ij})  \right)}{2 - k_{\boldsymbol{\beta}_j}(x_{ij}) } + \frac{\pi_{\zetabf_j}(x_{ij})}{2} \\
&= \frac{2 \cdot (1-k_{{\betabf}_j}(x_{ij})) + k_{{\betabf}_j}(x_{ij}) \cdot \pi_{{\zetabf}_j}(x_{ij})  }{4 - 2 \cdot k_{{\betabf}_j}(x_{ij})}.
\end{align*}

\subsection{Forming $  \nu_{ik}$ that satisfies  \condref{nu}$(b)$}\label{app:localPCweights_b}

Let $r_{ik} \equiv u - u_{ik}$ and $\cC_k \equiv [n] \backslash \cG_k$, the set of studies not in group $k$. Let
\begin{equation*}
H^{\uiGc}_i: |\{ j \in \cC_k : \mu_{ij} \in \cA_{i} \}| \geq |\cC_k| - r_{ik} + 1  
\end{equation*}
denote a PC null hypothesis stating that there are no more than $r_{ik} - 1$ signals among the studies in $\cC_k$ for feature $i$. Moreover, let $\hat{\Clfdrc}^{{}_{\uiGc}}_{i}$ be an estimate of $\Clfdrc^{{}_{\uiGc}}_{i} \equiv \Pr ( \text{$H^{{}_{\uiGc}}_i$ is false} | \mathbf{P}, \mathbf{X} )$, probability that $H^{{}_{\uiGc}}_i$ is false given $\mathbf{P}$ and $\mathbf{X}$.


We will use the working model \eqref{pmod} in \secref{powerful_weights} to construct a $\hat{\Clfdrc}^{{}_{\uiGc}}_i$ that depends only on $(\mathbf{P}_{\cdot (- \cG_k)}, \mathbf{X}, \cS_k)$.  We propose substituting this $\hat{\Clfdrc}^{{}_{\uiGc}}_i$ in place of $\tilde{\Clfdrc}^{{}_{\uiG}}_i$ in \eqref{suggested_feature_weights} for each $i \in \cS_k$, so that the resulting local PC weight
\begin{equation*}
 \nu_{ik} = \frac{{\footnotesize \hat{\Clfdrc}^{\uiGc}_i} |\cS_k| }{ \sum_{\ell \in \cS_k} { \footnotesize \hat{\Clfdrc}^{r_{\ell k} / \cC_k}_{\ell} } }
\end{equation*}
satisfies \condref{nu}$(b)$. Provided that $\hat{\Clfdrc}^{{}_{r_{\ell k} / \cC_k}}_{\ell}$ is a good estimate for each $(\ell,k) \in [m] \times [K]$, the local PC weight design above allows $i$ to be more easily rejected by \eqref{RrepEasy} if $H^{{}_{u/[n]}}_i$ is false, since $\nu_{ik}$ is likely to be large  \emph{for all} $k \in [K]$.

 Let  
\begin{equation*}
    (\hat{\zetabf}_{j}, \hat{\betabf}_{j}) \equiv \argmax_{(\zetabf_{j}, \betabf_{j})} \sum_{i \in [m] } \log\left( f_{\zetabf_j, \betabf_j}(p_{ij}| x_{ij})  \right),
\end{equation*}
be estimates for $(\zetabf_{j}, \betabf_{j})$, obtained by maximizing the working likelihood on data from study $j$. Subsequently, let an estimator for $p_{ij}$ be
\begin{equation*}
\hat{p}_{ij} \equiv \frac{2 \cdot (1-k_{\hat{\betabf}_j}(x_{ij})) + k_{\hat{\betabf}_j}(x_{ij}) \cdot \pi_{\hat{\zetabf}_j}(x_{ij})  }{4 - 2 \cdot k_{\hat{\betabf}_j}(x_{ij})},
\end{equation*}
obtained by substituting $\zetabf_{j} = \hat{\zetabf}_j$ and  $\betabf_{j} = \hat{\betabf}_{j}$ into the expression of $\mathbb{E} \left[ p_{ij} | x_{ij} \right]$ in  \appref{epx} under the working model. Let 
\begin{align*}\label{PC_local_hatClfdrc}
    \hat{\Clfdrc}^{\uiGc}_{i} \equiv  \sum_{ \cL \subseteq \cC_k : |\cL| \geq r_{ik} } & \Bigg( \prod_{j \in \cC_k \backslash \cL}  \tfrac{ \pi_{\hat{\zetabf}_{j}}(x_{ij}) }{f_{\hat{\zetabf}_j, \hat{\betabf}_j}( \hat{p}_{ij}|x_{ij} ) }  \cdot  \prod_{j \in \cL}  \tfrac{ (1 -\pi_{\hat{\zetabf}_{j}}(x_{ij}) ) \cdot f_{\hat{\betabf}_{j}}(\hat{p}_{ij}|x_{ij}) }{f_{\hat{\zetabf}_j, \hat{\betabf}_j}(\hat{p}_{ij}|x_{ij} ) } \Bigg), 
\end{align*}
obtained by substituting $\zetabf_{j} = \hat{\zetabf}_j$,  $\betabf_{j} = \hat{\betabf}_{j}$, and $p_{ij} = \hat{p}_{ij}$, for each $j \in \cC_k$, into the expression \eqref{form_for_omega_ivcj} with $v = r_{ik}$ and $\cJ = \cC_k$. By construction, $\hat{\Clfdrc}^{{}_{\uiGc}}_{i}$ only depends on $(\mathbf{P}_{\cdot (- \cG_k)}, \mathbf{X}, \cS_k)$ because all of $\zetabf_j, \betabf_j, \hat{p}_{ij}$ are so.

\subsection{The \methodname as a generalization of \citet{Bogolomov2018}'s two procedures}\label{app:BogHellerEquiv}

For a replicability analysis where $n = 2$ and $u = 2$, we will show that the \methodnamens, under specific algorithm settings, and the adaptive procedure by \citet[Section 4.2]{Bogolomov2018} are equivalent. The equivalence between the \methodname and the non-adaptive procedure by \citet[Section 3.2]{Bogolomov2018} can be shown in a similar fashion, and so, we omit it for brevity. Here, the qualifier ``adaptive'' refers to  adaptiveness to the post-selection weighted proportions of local PC nulls in \eqref{group_pi} (as opposed to adaptiveness to the covariates ${\bf X}$)

Let the number of groups be $K = 2$, the groups be $\cG_{1} = \{ 1 \}$ and $\cG_{2} = \{ 2 \}$, and the local replicability levels be $u_{i1} = u_{i2} = 1$ and the local PC weights be $\nu_{i1} = \nu_{i2} = 1$ for each $i \in [m]$. Moreover, let the weighted null proportion estimator $\hat{\pi}_k$ be in the form of \eqref{pi_hat_adaptive} for each $k \in [2]$ and let the local PC $p$-values be defined as
\begin{equation*}
    p^{u_{i1}/ \cG_1 }_i = p^{1/\{1\}}_i = p_{i1} \quad \text{and} \quad p^{u_{i2}/ \cG_2 }_i =  p^{1/\{2\}}_i = p_{i2} \quad \text{for each $i \in [m]$}.
\end{equation*}
Using the suggestion in \exref{suggested_select_rules}, let the selected features be taken as
\begin{equation*}
    \cS_1 = \{ i \in [m] : p_{i2} \leq (w_2 \cdot q ) \wedge  \lambda_2  \}  \quad \text{and} \quad \cS_2 = \{ i \in [m] : p_{i1} \leq  (w_1 \cdot q) \wedge \lambda_1  \} ,
\end{equation*}
respectively. Under the aforementioned  settings, \algref{PF} returns the following rejection set:
\begin{align}
    \cR(\hattvec) &= \{ i \in \cS_1 : p_{i1} \leq \hatt_{1} \wedge \lambda_1 \} \cap \{ i \in \cS_2 : p_{i2} \leq \hatt_{2} \wedge \lambda_2 \}  \nonumber \\
    &= \{ i \in \cS_1 \cap \cS_2: (p_{i1},p_{i2}) \matleq (\hatt_1 \wedge \lambda_1,\hatt_2 \wedge \lambda_2 ) \}  \label{2Rrep}
\end{align}
where $\hattvec = (\hatt_1,\hatt_2)$ is the optimal vector of thresholds implied by \propref{optimal_thresholds}. The adaptive procedure by \citet[Section 4.2]{Bogolomov2018}, under selected features $\cS_1$ and $\cS_2$ for group 1 and 2 respectively\footnote{\citet{Bogolomov2018} uses $\cS_2$ and $\cS_1$ to denote the selected features for groups 1 and 2, respectively; we reverse their notation to maintain consistency with the convention used throughout this paper.}, yields a rejection set of the form:
\begin{align}\label{BogHeller_rejection_set}
    \tilde{\cR} \equiv  \left\{ \ell \in \cS_1 \cap \cS_2 : (p_{\ell 1},p_{\ell 2}) \matleq \left( \frac{R \cdot w_1 \cdot q}{|\cS_1 | \cdot \hat{\pi}_{1}  } \wedge \lambda_1,  \frac{R \cdot w_2 \cdot q}{|\cS_2 | \cdot \hat{\pi}_{2} } \wedge \lambda_2 \right)   \right\}
\end{align}
where 
\begin{align}\label{R}
    R \equiv \max \left\{ r : \sum_{\ell \in \cS_1 \cap \cS_2} \1 \left\{ (p_{\ell 1},p_{\ell 2}) \matleq \left( \frac{r \cdot w_1 \cdot q}{|\cS_1 | \cdot \hat{\pi}_{1}  } \wedge \lambda_1,  \frac{r \cdot w_2 \cdot q}{|\cS_2| \cdot \hat{\pi}_{2}  } \wedge \lambda_2 \right) \right\}  = r \right\}.
\end{align}
If $R$ does not exist, then $\tilde{\cR}$ is defined to be an empty set. To complete this proof, we will show that $\tilde{\cR} = \cR(\hattvec)$ by proving the  two inclusions
\begin{equation*}
    \tilde{\cR} \subseteq \cR(\hattvec) \quad \text{and} \quad  \cR(\hattvec) \subseteq \tilde{\cR}.
\end{equation*}

\begin{proof} [Proof of $\tilde{\cR} \subseteq \cR(\hattvec)$:]
Assume $\tilde{\cR}$ is non-empty and let $\tilde{\tvec} = (\tilde{t}_1,\tilde{t}_2)$ where
\begin{equation}\label{tildet}
    \tilde{t}_1 =  \frac{R \cdot w_1 \cdot q}{|\cS_1| \cdot \hat{\pi}_{1}  }  \quad \text{ and } \quad \tilde{t}_2 =  \frac{R \cdot w_2 \cdot q}{|\cS_2| \cdot \hat{\pi}_{2}  } .
\end{equation}
It follows from \eqref{BogHeller_rejection_set} and the form of the \methodnamens's rejection set given in \eqref{2Rrep} that
\begin{align*}
    \tilde{\cR} = \left\{ i \in \cS_1 \cap \cS_1 : (p_{i1},p_{i2}) \matleq \left( \tilde{t}_1 \wedge \lambda_1, \tilde{t}_2 \wedge \lambda_2 \right)  \right\} = \cR(\tilde{\tvec}).
\end{align*}
Since $R = |\tilde{\cR}|$ by \eqref{R}, it follows from the relationship above and \eqref{tildet} that
\begin{align*}
    \tilde{t}_1 = \frac{|\cR(\tilde{\tvec})| \cdot w_1 \cdot q}{|\cS_{1}| \cdot \hat{\pi}_{1}  }  \quad \text{ and } \quad \tilde{t}_2 = \frac{|\cR(\tilde{\tvec})| \cdot w_2 \cdot q}{|\cS_{2}| \cdot \hat{\pi}_{2}  } .
\end{align*}
The equalities above can be rearranged to yield 
\begin{equation*}
    \FDPhat_{1}(\tilde{\tvec}) \equiv \frac{|\cS_1| \cdot \hat{\pi}_{1} \cdot \tilde{t}_1}{|\cR(\tilde{\tvec})|} = w_1 \cdot q  \quad \text{ and }  \quad  \FDPhat_{2}(\tilde{\tvec}) \equiv \frac{|\cS_2| \cdot \hat{\pi}_{2} \cdot \tilde{t}_2}{|\cR(\tilde{\tvec})|} = w_2 \cdot q,
\end{equation*}
which implies that $\tilde{\tvec} \in \cT$ by \defref{set_of_feasible_thresholds}.  Since $\hattvec$ is the largest element-wise vector in $\cT$ by \propref{optimal_thresholds}, it must be the case that $\hattvec \matgeq \tilde{\tvec}$. Hence, it follows from \lemref{Rrep} and $\hattvec \matgeq \tilde{\tvec}$ that
\begin{equation*}
    \tilde{\cR} = \cR(\tilde{\tvec}) \subseteq \cR(\hattvec).
\end{equation*}
\end{proof}

\begin{proof} [Proof of $\cR(\hattvec) \subseteq \tilde{\cR}$:]
Assume $\cR(\hattvec)$ is non-empty. We have that
\begin{align*}
    \sum_{i \in \cS_1 \cap \cS_2 } & \1 \left\{ (p_{i1},p_{i2}) \matleq \left( \frac{|\cR(\hattvec)| \cdot w_1 \cdot q}{|\cS_{1}| \cdot \hat{\pi}_{1}  } \wedge \lambda_1 ,  \frac{|\cR(\hattvec)| \cdot w_2 \cdot q}{|\cS_{2}| \cdot \hat{\pi}_{2} } \wedge \lambda_2 \right) \right\} \\
     &\overset{\tiny (a)}{=} \sum_{i \in \cS_1 \cap \cS_2 } \1 \left\{ (p_{i1},p_{i2}) \matleq (\hatt_1 \wedge \lambda_1, \hatt_2 \wedge \lambda_2) \right\} \\ 
     & = \left| \{ i \in \cS_1 \cap \cS_2 : (p_{i1},p_{i2}) \matleq (\hatt_1 \wedge \lambda_1,\hatt_2 \wedge \lambda_2 ) \} \right| \\
     &\overset{\tiny (b)}{=} |\cR(\hattvec)|
\end{align*}
where $(a)$ is a result of \lemref{tvechat_equality} and $(b)$ is a result \eqref{2Rrep}.  By the definition of $R$ in \eqref{R}, it must follow from the display above that
\begin{align}\label{Rgeq}
    R \geq |\cR(\hattvec)|.
\end{align}
Hence, 
\begin{align*}
\cR(\hattvec) &= \left\{ i \in \cS_1 \cap \cS_2 : (p_{i1},p_{i2}) \matleq \left( \hatt_1 \wedge \lambda_1, \hatt_2 \wedge \lambda_2 \right)  \right\} \\
&\overset{\tiny (c)}{=}  \left\{ i \in \cS_1 \cap \cS_2 : (p_{i1},p_{i2}) \matleq \left( \frac{|\cR(\hattvec)| \cdot w_1 \cdot q}{|\cS_{1}| \cdot \hat{\pi}_{1}  } \wedge \lambda_1,  \frac{|\cR(\hattvec)| \cdot w_2 \cdot q}{|\cS_{2}| \cdot \hat{\pi}_{2} } \wedge \lambda_2 \right)  \right\} \\
&\overset{\tiny (d)}{\subseteq} \left\{ i \in \cS_1 \cap \cS_2 : (p_{i1},p_{i2}) \matleq \left( \frac{R \cdot w_1 \cdot q}{|\cS_1| \cdot \hat{\pi}_{1} } \wedge \lambda_1,  \frac{R \cdot w_2 \cdot q}{|\cS_2| \cdot \hat{\pi}_{2} } \wedge \lambda_2 \right)  \right\}  \\
&= \tilde{\cR}
\end{align*}
where $(c)$ is a result of \lemref{tvechat_equality} and $(d)$ is a result of \eqref{Rgeq}. Thus, $\cR(\hattvec) \subseteq \tilde{\cR}$.
\end{proof}

\subsection{Proof of \thmref{post_hoc} (Post hoc FDR control for individual studies)}\label{app:post_hoc}

First, note that
\begin{equation*}
\cR_j(\hat{\tau}_j) \equiv  \cR(\hattvec) \cap \{ i \in [m] : p_{ij} \leq \hat{\tau}_j \} \overset{\tiny (a)}{=}  \cR(\hattvec) \cap \{ i \in \cS_{k(j)} : p_{ij} \leq \hat{\tau}_j \}
\end{equation*}
where $k(j)$ is the group where study $j$ belongs, and $(a)$ follows from the fact that $\cR(\hattvec) \subseteq \cS_k$ for any group $k$ by \eqref{RrepEasy}. 

$\cR_j(\hat{\tau}_j) = \cR(\hattvec) \cap \{ i \in \cS_{k(j)} : p_{ij} \leq \hat{\tau}_j \} $ can be interpreted as the output of the Focused BH procedure by \citet[Procedure 1]{Katsevich2023} for simultaneously testing
\begin{center}
 $H_{ij}$ for $i \in \cS_{k(j)}$,
\end{center}
where $\hat{\tau}_j$ is equivalent to their equation $(6)$ and $\cR(\hattvec)$ is what they call a \textit{screening function}. By \lemref{stable} in \appref{technical_lemmas}, $\cR(\hattvec)$ is a stable screening function according to Theorem 1$(ii)$ of \citet{Katsevich2023} when $\mathbf{X}$ is given. Since conditioning on $\{ \mathbf{P}_{\cdot \cG_{k(j)} }, \mathbf{X} \}$ fixes $\cS_{k(j)}$ by \condref{indep}, it will hold by Theorem 1 of \citet{Katsevich2023} that 
\begin{equation*}
 \mathbb{E}\left[ \frac{\sum_{i \in \cR_j(\hat{\tau}_j)} \1 \{ \text{$H_{ij}$ is true} \} }{|\cR_j(\hat{\tau}_j)| \vee 1}  \Bigg| \mathbf{P}_{\cdot \cG_{k(j)} }, \mathbf{X}  \right] \leq q_j.
\end{equation*}
Taking expectation in the prior display gives  that $\FDR_{j}(\hat{\tau}_j) \leq q_j$.

\section{Technical  lemmas}\label{app:technical_lemmas}

\subsection{Lemmas regarding properties of the rejection sets}
This section states and proves technical lemmas that  concern some properties of the rejection sets considered by the ParFilter. These lemmas have been used in a few  proofs in \appref{implementation_other}, and will primarily be used  in 
the proofs  in \appref{error_control}.  
Hereafter, we will occasionally use
\begin{center}
$\cR(\tvec) = \cR(\tvec; \mathbf{P}, \mathbf{X})$, $\quad \FDPhat_{k}(\tvec) = \FDPhat_{k}(\tvec; \mathbf{P}, \mathbf{X})$, and $\quad \cT = \cT (\mathbf{P}; \mathbf{X})$
\end{center}
to emphasize that $\cR(\tvec)$ in \eqref{RrepEasy}, $\FDPhat_{k}(\tvec)$ in \eqref{FDPhat_repk} and $\cT$ in \defref{set_of_feasible_thresholds}  also depend on the base $p$-values $\mathbf{P}$ and covariates $\mathbf{X}$.



\begin{lemma}[$\cR(\tvec'') \subseteq \cR(\tvec')$ for  $\tvec' \matgeq \tvec''$]\label{lem:Rrep}
Let $\tvec' , \tvec''  \in [0,\infty)^{K}$ be vectors that satisfy $\tvec' \matgeq \tvec''$. Then $\cR(\tvec'') \subseteq \cR(\tvec')$.
\end{lemma}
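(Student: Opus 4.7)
The plan is to prove the inclusion by a direct element-chasing argument that exploits the monotonicity baked into the definition \eqref{RrepEasy} of $\cR(\tvec)$. The statement says nothing more than that raising each coordinate of the threshold vector can only enlarge (weakly) the rejection set, and the definition makes this almost immediate once we observe that the local PC weights $\nu_{ik}$ are non-negative (by \eqref{nufun}) and that the minimum $a \wedge b$ is non-decreasing in each argument.

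First I would fix an arbitrary $i \in \cR(\tvec'')$. By the definition \eqref{RrepEasy}, this means $i \in \cS_k$ and $p^{\uiG}_i \leq (\nu_{ik} \cdot t''_k) \wedge \lambda_k$ for every $k \in [K]$. Next, for each such $k$, I would use $\nu_{ik} \geq 0$ together with $t'_k \geq t''_k$ to conclude $\nu_{ik} \cdot t'_k \geq \nu_{ik} \cdot t''_k$, and hence
\[
(\nu_{ik} \cdot t'_k) \wedge \lambda_k \;\geq\; (\nu_{ik} \cdot t''_k) \wedge \lambda_k \;\geq\; p^{\uiG}_i.
\]
Since this holds for every $k$ and $i$ remains in each $\cS_k$ (the selected sets do not depend on $\tvec$ at all), we get $i \in \cR(\tvec')$, proving the desired inclusion.

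There is essentially no obstacle here: the only thing to double-check is that $\nu_{ik} \geq 0$, which is guaranteed by the definition in \eqref{nufun}, and that the selected sets $\cS_k$ do not depend on $\tvec$, which is clear from \eqref{selected_set}. The lemma is really a bookkeeping statement that will be invoked repeatedly in the FDR-control proofs downstream, so it is worth stating and proving cleanly even though it is short.
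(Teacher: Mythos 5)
Your proof is correct and is essentially the same argument as the paper's: the monotonicity of $t_k \mapsto (\nu_{ik}\cdot t_k)\wedge \lambda_k$ (using $\nu_{ik}\geq 0$) gives the coordinatewise inclusion of the per-group rejection sets, and intersecting over $k\in[K]$ yields $\cR(\tvec'')\subseteq\cR(\tvec')$. The only difference is cosmetic — you phrase it as element-chasing while the paper states the set inclusion directly for each $k$.
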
 

\begin{proof}[Proof of \lemref{Rrep}] 
Write $\tvec' = (t'_1,\dots,t'_K)$ and $\tvec'' = (t''_1,\dots,t''_K) \in [0,\infty)^{K}$.
Since $\tvec' \matgeq \tvec''$, it follows  that 
$$\left\{ i \in \cS_k : p^{\uiG}_{i}  \leq \nu_{ik} \cdot t''_k \wedge \lambda_k\right\} \subseteq \left\{ i \in \cS_k :  p^{\uiG}_{i} \leq \nu_{ik} \cdot t'_k\wedge \lambda_k \right\}$$
for each $k \in [K]$. Thus, $cR(\tvec'') \subset \cR(\tvec')$ from their definitions in \eqref{RrepEasy}.
\end{proof}

The following lemma shows that $\cR(\hattvec)$ satisfies a property akin to the \textit{self-consistency} condition described in  \citet[Definition 2.5]{Blanchard2008}.
\begin{lemma}[Self-consistency with respect to each group]\label{lem:gw_sc}
 Let $\cR(\hattvec)$ be the rejection set output of \algref{PF}, where $\hattvec$ is the optimal vector of thresholds in $\cT$ from \propref{optimal_thresholds}. If the selected features $\cS_k$ for group $k$  is non-empty, then it holds that
\begin{equation}\label{gw_indicator_inequality}
        I\left\{ i \in \cR(\hattvec) \right\} \leq I\left\{  p^{\uiG}_{i} \leq \frac{ \nu_{ik} \cdot (|\cR(\hattvec)| \vee 1) \cdot w_k \cdot q}{(|\cS_k| \vee 1) \cdot \hat{\pi}_k} \right\} \quad \text{for any $i \in [m]$.}
\end{equation}
\end{lemma}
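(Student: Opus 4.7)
The plan is to establish the lemma by a direct chain of implications that starts from $i \in \cR(\hattvec)$. If $i \notin \cR(\hattvec)$, the left-hand side of \eqref{gw_indicator_inequality} is zero and the inequality is trivial, so the only content is to bound the threshold that $p^{\uiG}_i$ must lie below whenever $i \in \cR(\hattvec)$.

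First, I would unpack the definition of the candidate rejection set in \eqref{RrepEasy}: if $i \in \cR(\hattvec)$, then in particular $i \in \cS_k$ and $p^{\uiG}_i \leq (\nu_{ik}\cdot \hatt_k)\wedge \lambda_k \leq \nu_{ik}\cdot \hatt_k$. So the task reduces to showing
\[
\nu_{ik}\cdot \hatt_k \;\leq\; \frac{\nu_{ik}\cdot (|\cR(\hattvec)|\vee 1)\cdot w_k\cdot q}{(|\cS_k|\vee 1)\cdot \hat{\pi}_k}.
\]
Since $\nu_{ik}\geq 0$ (either the inequality is $0\leq 0$ when $\nu_{ik}=0$, or we may cancel $\nu_{ik}$ from both sides), it suffices to prove the corresponding bound on $\hatt_k$ alone.

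The crucial input is \propref{optimal_thresholds}, which guarantees $\hattvec \in \cT$. By \defref{set_of_feasible_thresholds}, this yields $\FDPhat_k(\hattvec)\leq w_k\cdot q$ for every $k \in [K]$. Substituting the explicit form of $\FDPhat_k$ from \eqref{FDPhat_repk} and rearranging gives
\[
\hatt_k \;\leq\; \frac{(|\cR(\hattvec)|\vee 1)\cdot w_k\cdot q}{|\cS_k|\cdot \hat{\pi}_k}.
\]
Under the lemma's hypothesis that $\cS_k$ is non-empty, $|\cS_k| = |\cS_k|\vee 1$, so this is precisely the required bound on $\hatt_k$. Multiplying by $\nu_{ik}$ and chaining with $p^{\uiG}_i\leq \nu_{ik}\cdot \hatt_k$ completes the inequality inside the indicator on the right-hand side of \eqref{gw_indicator_inequality}.

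There is no real obstacle here: the lemma is essentially a restatement of the feasibility constraint $\hattvec \in \cT$ combined with the definition of $\cR(\hattvec)$. The only subtlety I would be careful about is handling the edge cases $\nu_{ik}=0$ and $|\cR(\hattvec)|=0$ cleanly (both make the desired inequality hold trivially), and making explicit the use of $|\cS_k|\vee 1 = |\cS_k|$ under the stated hypothesis, since otherwise the division by $|\cS_k|$ in the rearrangement of $\FDPhat_k$ would not be licit.
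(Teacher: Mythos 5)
Your proof is correct and follows essentially the same route as the paper's: invoke $\hattvec \in \cT$ from \propref{optimal_thresholds} to get $\FDPhat_k(\hattvec) \leq w_k \cdot q$, rearrange to bound $\hatt_k$, and combine with the membership condition $p^{\uiG}_i \leq \nu_{ik}\cdot\hatt_k$ from \eqref{RrepEasy}. Your handling of the trivial cases (via $i \notin \cR(\hattvec)$ rather than the paper's split on $\cR(\hattvec)$ being empty) is an immaterial difference.
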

\bigskip

\begin{proof}[Proof of \lemref{gw_sc}]
    If $\cR(\hattvec)$ is empty, then $I\left\{ i \in \cR(\hattvec) \right\} = 0$ and \eqref{gw_indicator_inequality} trivially holds.
    Now consider the case where $\cR(\hattvec)$ is non-empty. Since $\hattvec \in \cT$ by \propref{optimal_thresholds} (whose proof in \appref{optimal_thresholds} only depends on \lemref{Rrep}), it follows by \defref{set_of_feasible_thresholds} that $\FDPhat_k(\hattvec) \leq w_k \cdot q$, which upon rearrangement yields
    \begin{align}
         \hat{t}_{k} &\leq \frac{ |\cR(\hattvec)| \cdot w_k \cdot q}{|\cS_k| \cdot \hat{\pi}_{k}}.  \label{FDPhatk_rearrange}
    \end{align}
    Since $\cR(\hattvec) \subseteq \{ \ell \in \cS_k : p^{u_{\ell k} / \cG_k }_{\ell} \leq \nu_{\ell k} \cdot \hatt_k \}$ by \eqref{RrepEasy}, it follows from \eqref{FDPhatk_rearrange} that
    \begin{equation*}
         i \in \cR(\hattvec) \quad \text{ implies } \quad  p^{\uiG}_{i} \leq \nu_{ik} \cdot \frac{|\cR(\hattvec)| \cdot w_k \cdot q}{|\cS_k| \cdot \hat{\pi}_{k}},
    \end{equation*}
   and the proof is finished, as $|\cR(\hattvec)| = |\cR(\hattvec)| \vee 1$ and $|\cS_k| = |\cS_k| \vee 1$ when $\cR(\hattvec)$ is non-empty.
\end{proof}

\begin{lemma}[$\hattvec$ as a function of $\cR(\hattvec)$]\label{lem:tvechat_equality}
    Let $\cR(\hattvec)$ be the output of \algref{PF} where $\hattvec \in \cT$ is the optimal vector of thresholds  from \propref{optimal_thresholds}. If $\cR(\hattvec)$ is non-empty, then
    \begin{equation*}
        \hattvec = (\hatt_1,\dots,\hatt_K) = \left( \frac{|\cR(\hattvec)|  \cdot w_1 \cdot q}{|\cS_1| \cdot \hat{\pi}_{1} }, \dots, \frac{|\cR(\hattvec)| \cdot w_K \cdot q}{|\cS_K| \cdot \hat{\pi}_{K} } \right).
    \end{equation*}
\end{lemma}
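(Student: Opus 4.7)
My approach is to show the conclusion by a contradiction argument exploiting the maximality of $\hattvec$ in $\cT$ stated in \propref{optimal_thresholds}. First, I would observe that since $\cR(\hattvec)$ is non-empty and $\cR(\hattvec) \subseteq \cS_k$ for each $k$ by the definition in \eqref{RrepEasy}, we have $|\cR(\hattvec)| \vee 1 = |\cR(\hattvec)|$ and $|\cS_k| \vee 1 = |\cS_k|$, with $|\cS_k|\geq 1$. Moreover, $\hat\pi_k > 0$ under the \thmref{error_control} settings. Because $\hattvec \in \cT$ by \propref{optimal_thresholds}, the constraint $\FDPhat_{k}(\hattvec) \leq w_k \cdot q$ holds for every $k$, which rearranges directly into the ``$\leq$'' half of the desired equality, namely $\hat t_k \leq |\cR(\hattvec)| \cdot w_k \cdot q / (|\cS_k| \cdot \hat\pi_k)$.

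The bulk of the work would then be to establish the reverse inequality, i.e., to rule out $\hat t_k < |\cR(\hattvec)| \cdot w_k \cdot q / (|\cS_k| \cdot \hat\pi_k)$ for any $k$. Supposing this strict inequality does hold for some $k$, I would construct $\tvec' \equiv (\hat t_1, \dots, \hat t_{k-1}, \hat t_k + \epsilon, \hat t_{k+1}, \dots, \hat t_K)$ for a suitably small $\epsilon > 0$, and show $\tvec' \in \cT$ with $\tvec' \matgeq \hattvec$ strictly, contradicting the element-wise maximality of $\hattvec$ in $\cT$.

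To choose the $\epsilon$, I would exploit two finite-sample facts simultaneously. First, since only finitely many features contribute to $\cR(\tvec)$ and each threshold $\nu_{ik} t_k \wedge \lambda_k$ is a piecewise linear function of $t_k$, the map $t_k \mapsto \cR\bigl((\hat t_1, \dots, \hat t_{k-1}, t_k, \hat t_{k+1}, \dots, \hat t_K)\bigr)$ is a right-continuous step function in $t_k$. Thus there exists $\epsilon_1 > 0$ such that $\cR(\tvec') = \cR(\hattvec)$ for all $\epsilon \in (0, \epsilon_1)$. Second, under the assumed strict inequality, there exists $\epsilon_2 > 0$ such that $\hat t_k + \epsilon_2 \leq |\cR(\hattvec)| \cdot w_k \cdot q/(|\cS_k| \cdot \hat\pi_k)$. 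Choosing $\epsilon = \min(\epsilon_1,\epsilon_2)/2$, I would then verify the feasibility of $\tvec'$: for $\ell \neq k$, $\FDPhat_\ell(\tvec') = |\cS_\ell|\hat\pi_\ell \hat t_\ell / |\cR(\hattvec)| = \FDPhat_\ell(\hattvec) \leq w_\ell \cdot q$ because both $\hat t_\ell$ and $|\cR(\tvec')| = |\cR(\hattvec)|$ are unchanged; and for group $k$, $\FDPhat_k(\tvec') = |\cS_k| \hat\pi_k (\hat t_k + \epsilon)/|\cR(\hattvec)| \leq w_k \cdot q$ by the choice of $\epsilon_2$. So $\tvec' \in \cT$, delivering the contradiction.

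The main obstacle I anticipate is the careful justification that the rejection set is locally constant to the right of $\hat t_k$. The subtlety stems from the $\wedge\, \lambda_k$ truncation in \eqref{RrepEasy} and the fact that, in principle, $\hat t_k$ could sit exactly at a jump of the step function. However, right-continuity of each indicator $\1\{p^{\uiG}_i \leq \nu_{ik} t_k \wedge \lambda_k\}$ in $t_k$ and the finiteness of features together guarantee a half-open interval $[\hat t_k, \hat t_k + \epsilon_1)$ over which no new index enters $\cR$, closing this gap. Any remaining degenerate cases (e.g., $\nu_{ik} = 0$, for which the indicator is independent of $t_k$) are harmless as they only keep the rejection set fixed.
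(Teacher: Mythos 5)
Your proof is correct, but it takes a different route from the paper's. The paper argues directly rather than by contradiction: it defines the candidate vector $\hattvec^* \equiv \bigl(|\cR(\hattvec)| \cdot w_k \cdot q/(|\cS_k|\cdot\hat{\pi}_k)\bigr)_{k\in[K]}$, obtains $\hattvec^* \matgeq \hattvec$ exactly as in your ``$\leq$'' half, then uses \lemref{Rrep} to get $\cR(\hattvec) \subseteq \cR(\hattvec^*)$ and hence
$\FDPhat_k(\hattvec^*) = |\cR(\hattvec)|\cdot w_k\cdot q/(|\cR(\hattvec^*)|\vee 1) \leq w_k \cdot q$ for every $k$, so $\hattvec^* \in \cT$; the element-wise maximality of $\hattvec$ in $\cT$ (\propref{optimal_thresholds}) then forces $\hattvec^* = \hattvec$ in one stroke. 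Your argument instead rules out strictness coordinate-by-coordinate by bumping $\hat t_k$ up by a small $\epsilon$ and showing the perturbed vector stays in $\cT$; the load-bearing step is your local-constancy claim for $\cR$ to the right of $\hat t_k$, which is indeed valid (finitely many features, each map $t_k \mapsto (\nu_{ik}t_k)\wedge\lambda_k$ is continuous and non-decreasing, so every feature excluded at $\hat t_k$ has a strictly positive margin, and excluded features in groups $\ell\neq k$ stay excluded), and you correctly flag the $\lambda_k$-truncation and $\nu_{ik}=0$ edge cases. The trade-off: your perturbation argument is intuitively transparent (maximality means no coordinate can be nudged up) but requires the $\epsilon$-bookkeeping and the step-function justification, whereas the paper's proof avoids any such local analysis by testing feasibility of the full candidate vector $\hattvec^*$ at once, reusing the same monotonicity lemma (\lemref{Rrep}) employed throughout. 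One small remark: positivity of $\hat{\pi}_k$, which you invoke via the \thmref{error_control} settings, holds for every estimator used in the paper (including the harmonic-sum form in \thmref{error_control_dep}), so the lemma and your argument apply in that setting as well.
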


\bigskip

\begin{proof}[Proof of \lemref{tvechat_equality}]
    Since $\hattvec \in \cT$ by \propref{optimal_thresholds}, we have by \defref{set_of_feasible_thresholds} that
\begin{align}\label{error_inequalities}
    \left( \FDPhat_1(\hattvec), \dots, \FDPhat_{K}(\hattvec) \right) \matleq (w_1 \cdot q, \dots,  w_K \cdot q).
\end{align}
Using the definition for $\FDPhat_{k}$  in \eqref{FDPhat_repk}, rearranging the above relationship when $\cR(\hattvec)$ is non-empty yields
\begin{align}\label{hattvecinequality}
   \underbrace{\left( \frac{|\cR(\hattvec)| \cdot w_1 \cdot q}{|\cS_1| \cdot \hat{\pi}_{1}  }, \dots,   \frac{|\cR(\hattvec)| \cdot w_K \cdot q}{|\cS_K| \cdot \hat{\pi}_{K}  } \right) }_{ 
   \begin{matrix}
       \equiv \hattvec^*
   \end{matrix}
   } 
   \matgeq  (\hatt_1, \dots, \hatt_{K})  = \hattvec. 
\end{align}
Note that if $\cR(\hattvec)$ is non-empty, then $\cS_k$ must also be non-empty (i.e., $|\cS_k| > 0$) for each $k \in [K]$. From the inequality in \eqref{hattvecinequality}, we have by \lemref{Rrep} that
\begin{align}\label{RhatsubsetUB}
    \cR(\hattvec) \subseteq \cR \left(  \hattvec^* \right).
\end{align}
Moreover, 
\begin{align*}
    \FDPhat_{k} \left(  \hattvec^*   \right) &= \frac{ |\cS_k| \cdot \hat{\pi}_{k}  \cdot \left(    \frac{|\cR(\hattvec)| \cdot w_k \cdot q}{|\cS_k| \cdot \hat{\pi}_{k} } \right)   }{ \left| \cR \left(  \hattvec^*   \right) \right| \vee 1 } \\
    & = \frac{ |\cR(\hattvec)| \cdot w_k \cdot q  }{ \left| \cR \left(  \hattvec^* \right) \right| \vee 1 }  \\
    &\overset{\tiny (a)}{\leq} \frac{ |\cR(\hattvec)| \cdot  w_k \cdot q }{ |\cR(\hattvec) | \vee 1 } \\
    &\leq w_k \cdot q \quad \text{ for all $k \in [K]$ }
\end{align*}
where (a) is a result of \eqref{RhatsubsetUB}. Hence, it follows from the display above and \defref{set_of_feasible_thresholds} that
\begin{equation}\label{rearrangedinsideT}
    \hattvec^* \in \cT.
\end{equation}
Since $\hattvec$ is the largest element-wise vector in $\cT$ (\propref{optimal_thresholds}), it follows from \eqref{hattvecinequality} and \eqref{rearrangedinsideT} that $ \hattvec^* = \hattvec$, thereby completing the proof.
\end{proof}

\begin{lemma}[Stability of $\cR(\hattvec)$]\label{lem:stable}
Consider an implementation of the \methodname (\algref{PF}) where
\begin{itemize}
    \item $f_{\cS_k}$ ensures that $\cS_k$ stable for each $k \in [K]$, i.e., \condref{stable} is met;
    \item $f_{\nubf_k}$ ensures that $\nubf_k$ a function of  $\mathbf{P}_{(-\cS_k)\cG_k}$,  $\mathbf{X}$, and   $\cS_k$ only  for each $k \in [K]$, i.e., \condref{nu}$(a)$ is met;
    \item $f_{\hat{\pi}_{k}}$ ensures that $\hat{\pi}_k$ is either one or in the form of \eqref{pi_hat_adaptive} for each $k \in [K]$.  
    \end{itemize} 
    Then the rejection set  $\cR(\hattvec)$ outputted by the \methodname is stable, i.e. for any $i \in \cR(\hattvec)$,
    \begin{center}
        fixing $\mathbf{P}_{(-i) \cdot}$ and $\mathbf{X}$, and changing $\mathbf{P}_{i \cdot}$ so that $i \in \cR(\hattvec)$ still holds
    \end{center}
    will not change the set $\cR(\hattvec)$.
\end{lemma}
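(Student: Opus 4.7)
Let $\mathbf{P}$ denote the original data and $\mathbf{P}'$ the modification, with $\mathbf{P}'_{(-i)\cdot}=\mathbf{P}_{(-i)\cdot}$ and the new row $\mathbf{P}'_{i\cdot}$ chosen so that $i$ still lies in the resulting rejection set; write $R\equiv\cR(\hattvec;\mathbf{P})$ and $R'\equiv\cR(\hattvec';\mathbf{P}')$. The plan is to show $R=R'$ by first establishing that every side quantity used by \algref{PF} is identical on $\mathbf{P}$ and $\mathbf{P}'$, then invoking \lemref{tvechat_equality} to reduce the problem to $|R|=|R'|$, and finally running a monotonicity/maximality contradiction to pin down the cardinalities.

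For the invariance step, note that $i\in R$ and $i\in R'$ both force $i\in\cS_k$ via \eqref{RrepEasy}; stability of each $f_{\cS_k}$ (\condref{stable}) then yields $\cS_k(\mathbf{P};\mathbf{X})=\cS_k(\mathbf{P}';\mathbf{X})$. Because $i\in\cS_k$, the modified row lies outside $\mathbf{P}_{(-\cS_k)\cG_k}$, so \condref{nu}(a) gives $\nubf_k(\mathbf{P};\mathbf{X})=\nubf_k(\mathbf{P}';\mathbf{X})$. When $\hat{\pi}_k\equiv 1$ the identity is trivial; in the adaptive form \eqref{pi_hat_adaptive}, the sole term that could change is $\1\{p_i^{\uiG}>\lambda_k\}$, but $i\in R$ and $i\in R'$ each force $p_i^{\uiG}\leq\lambda_k$ through \eqref{RrepEasy}, so this indicator vanishes on both sides.

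Having identified $\cS_k,\nubf_k,\hat{\pi}_k$, \lemref{tvechat_equality} applied to the nonempty $R$ and $R'$ gives $\hat{t}_k=|R|w_kq/(|\cS_k|\hat{\pi}_k)$ and $\hat{t}'_k=|R'|w_kq/(|\cS_k|\hat{\pi}_k)$. The two vectors are therefore proportional with common ratio $|R|/|R'|$, so it suffices to show $|R|=|R'|$. Suppose for contradiction $|R'|>|R|$, equivalently $\hattvec'\matgeq\hattvec$ with strict inequality in every coordinate. For any $\ell\neq i$, the conditions defining $\cR(\hattvec';\mathbf{P})$ and $\cR(\hattvec';\mathbf{P}')$ coincide; for $\ell=i$, the chain $p_i^{\uiG}(\mathbf{P})\leq\nu_{ik}\hat{t}_k\wedge\lambda_k\leq\nu_{ik}\hat{t}'_k\wedge\lambda_k$ places $i$ inside $\cR(\hattvec';\mathbf{P})$. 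Therefore $\cR(\hattvec';\mathbf{P})=R'$, and substitution into \eqref{FDPhat_repk} gives $\FDPhat_k(\hattvec';\mathbf{P})=w_kq$ for every $k$, so $\hattvec'\in\cT(\mathbf{P};\mathbf{X})$. This contradicts the maximality of $\hattvec$ established in \propref{optimal_thresholds}; a symmetric argument rules out $|R|>|R'|$, so $|R|=|R'|$, hence $\hattvec=\hattvec'$ and $R=R'$ (they can only disagree at $i$, which lies in both).

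The main obstacle to anticipate is this last step: the rejection rule is self-referential, since $\cR(\hattvec)$ appears on both sides of the fixed-point equation defining $\hattvec$, making a direct element-chasing comparison of $R$ and $R'$ awkward. The workaround is to couple the two calculations by evaluating $\cR(\hattvec';\mathbf{P})$, letting the monotonicity of $\cR(\cdot)$ in $\tvec$ (\lemref{Rrep}) combine cleanly with the $\hat{\pi}_k$-invariance to exploit the maximality property of $\hattvec$ in $\cT$.
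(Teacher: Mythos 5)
Your proof is correct. The first half — showing that $\cS_k$, $\nubf_k$ and $\hat{\pi}_k$ are unchanged when only row $i$ of $\mathbf{P}$ is perturbed while $i$ remains rejected (via \condref{stable}, \condref{nu}$(a)$, and the fact that $i \in \cR(\hattvec)$ forces $p^{\uiG}_i \leq \lambda_k$ so the $i$-th indicator in \eqref{pi_hat_adaptive} vanishes on both sides) — is the same as the paper's argument. Where you genuinely diverge is in identifying the two threshold vectors. The paper forms the componentwise maximum $\hattvec^{\max} = \hattvec' \vee \hattvec''$, shows the two candidate rejection sets evaluated at $\hattvec^{\max}$ coincide, then runs a two-case analysis (according to which of $\hatt'_k, \hatt''_k$ attains the maximum) together with \lemref{Rrep} to verify $\FDPhat_k(\hattvec^{\max}) \leq w_k q$, so that $\hattvec^{\max}$ lies in both feasible sets and must equal both optima by \propref{optimal_thresholds}. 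You instead invoke \lemref{tvechat_equality} — which the paper's own proof of this lemma never uses — to write $\hatt_k = |R| w_k q/(|\cS_k|\hat{\pi}_k)$ and $\hatt'_k = |R'| w_k q/(|\cS_k|\hat{\pi}_k)$ with identical constants (legitimate, since both rejection sets contain $i$ and are hence nonempty, and \lemref{tvechat_equality} rests only on \lemref{Rrep} and \propref{optimal_thresholds}, so there is no circularity), reducing the whole problem to $|R| = |R'|$, which you settle by a feasibility-plus-maximality contradiction: if $|R'| > |R|$ then $\cR(\hattvec';\mathbf{P}) = R'$, whence $\FDPhat_k(\hattvec';\mathbf{P}) = w_k q$ and $\hattvec' \in \cT(\mathbf{P};\mathbf{X})$, contradicting the elementwise maximality of $\hattvec$. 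Your route is shorter and avoids the case analysis, at the price of leaning on the explicit fixed-point formula; the paper's max-vector argument is more self-contained and would not need nonemptiness of the rejection sets had it been dropped from the hypotheses. One point worth making explicit in your write-up: the step from $|R'| > |R|$ to strict coordinatewise inequality $\hattvec' \matgeq \hattvec$ uses $\hatt_k \in (0,\infty)$, which holds because $|R| \geq 1$ and $\hat{\pi}_k \in (0,\infty)$; in any case, a strict increase in a single coordinate already suffices for the contradiction with \propref{optimal_thresholds}.
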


\begin{proof}[Proof of \lemref{stable}]
The following proof shares arguments similar the proof of   \citet[Theorem 1$(ii)$]{Katsevich2023}.    Suppose $\mathbf{P}'$ and $\mathbf{P}''$ are two $m \times n$ matrices of base $p$-values. For each  $k \in [K]$, let
    \begin{equation}\label{local_PC_pv_for_stability}
        {p'}^{{u}_{ \ell k} / \cG_k}_{ \ell } = f_{ \ell k}(\mathbf{P}'_{ \ell \cG_k}; {u}_{ \ell k}) \quad \text{and} \quad {p''}^{u_{ \ell k} / \cG_k}_{ \ell } = f_{ \ell k}(\mathbf{P}''_{ \ell \cG_k}; u_{ \ell k}) \text{ for each feature } \ell \in [m],
    \end{equation}
    \begin{equation*}
        \cS'_k = f_{\cS_k}(\mathbf{P}'; \mathbf{X}) \quad \text{and} \quad \cS''_k = f_{\cS_k}(\mathbf{P}''; \mathbf{X}),
    \end{equation*}
      \begin{align}\label{stable_weights}
        \nubf'_k = (\nu'_{1k},\dots,\nu'_{mk}) = f_{\nubf_k}(\mathbf{P}'; \mathbf{X})  \quad \text{and}  \quad
        \nubf'' = (\nu''_{1k},\dots,\nu''_{mk}) = f_{\nubf_k}(\mathbf{P}''; \mathbf{X}),
    \end{align}
    as well as
       \begin{equation}\label{stable_pihat}
       \hat{\pi}'_{k} = f_{\hat{\pi}_{k}}(\mathbf{P}'; \mathbf{X}, \lambda_k) \quad \text{ and } \quad \hat{\pi}''_{k} = f_{\hat{\pi}_{k}}(\mathbf{P}''; \mathbf{X}, \lambda_k) 
    \end{equation}
    denote the  local PC $p$-values, selected features, local PC weights and weighted null proportion estimators for group $k$  constructed from $\mathbf{P}'$ and $\mathbf{P}''$ respectively, with other arguments such as ${\bf X}$ and $\lambda_k$ being the same.    
    Let 
        \begin{equation*}
        \text{$\hattvec' = (\hat{t}'_1,\dots,\hat{t}'_K)  \quad$ and $\quad \hattvec'' = (\hat{t}''_1,\dots,\hat{t}''_K) $}
    \end{equation*}
   be the optimal thresholds implied by \propref{optimal_thresholds} in 
   \begin{equation*}
   \cT (\mathbf{P}';\mathbf{X})  \quad  \text{and} \quad \cT (\mathbf{P}'';\mathbf{X})
   \end{equation*}
   respectively. By \propref{optimal_thresholds}, we have that
   \begin{align}
    \FDPhat_k(\hattvec';\mathbf{P}',\mathbf{X}) = &\frac{|\cS'_k| \cdot \hat{\pi}'_k \cdot \hatt'_k}{|\cR(\hattvec';\mathbf{P}', \mathbf{X})|\vee 1} \leq w_k \cdot q  \quad \text{and} \label{FDPdash} \\
    & \quad  \FDPhat_k(\hattvec'';\mathbf{P}'',\mathbf{X}) = \frac{|\cS''_k| \cdot \hat{\pi}''_k \cdot \hatt''_k}{|\cR(\hattvec'';\mathbf{P}'', \mathbf{X})|\vee 1} \leq w_k \cdot q  \label{FDPdashdash}
   \end{align}
    for each group $k \in [K]$.
    
    We will \textit{assume} that
  \begin{equation}\label{Rrep_ass}
   \mathbf{P}'_{(-i) \cdot}   =   \mathbf{P}_{(-i) \cdot}'', \quad \text{and} \quad
   i \in \cR( \hattvec'; \mathbf{P}', \mathbf{X}) \cap \cR( \hattvec''; \mathbf{P}'', \mathbf{X}) \text{ for a given } i \in [m],
    \end{equation}
    i.e. the matrices $\mathbf{P}'$ and $\mathbf{P}''$ only possibly differ in the $i$-th row in such a way that feature $i$ is still in both of the \methodname rejection sets based on  $(\mathbf{P}',\mathbf{X}$) and $(\mathbf{P}'',\mathbf{X})$. To finish the proof, it suffices to show under the assumption in \eqref{Rrep_ass} that
    \begin{equation} \label{stability_parfilter_goal}
    \cR(\hattvec';\mathbf{P}', \mathbf{X}) = \cR(\hattvec'';\mathbf{P}'', \mathbf{X}).
     \end{equation}

Since $f_{\cS_k}$ is a stable selection rule and
\begin{equation}\label{R_subset_S}
    \text{$\cR(\hattvec';\mathbf{P}', \mathbf{X}) \subseteq   \cS'_k$ and $\cR(\hattvec'';\mathbf{P}'', \mathbf{X}) \subseteq   \cS''_k$ for each $k \in [K]$,}
\end{equation}
it follows from the assumption in \eqref{Rrep_ass} that
\begin{equation}\label{Sk_equality}
    \cS'_k = \cS''_k \text{ for each } k \in [K].
\end{equation}
A consequence of \eqref{Rrep_ass}, \eqref{R_subset_S}, and \eqref{Sk_equality} is that 
\begin{equation}\label{training_p}
    \mathbf{P}'_{(-\cS'_k) \cG_k} = \mathbf{P}''_{(-\cS''_k) \cG_k} \text{ for each $k \in [K]$.}
\end{equation}
By \condref{nu}$(a)$, we have by \eqref{stable_weights}, \eqref{Sk_equality}, and \eqref{training_p} that
\begin{align}\label{equal_feature_weights}
    \nubf'_k  = \nubf''_k \text{ for each $k \in [K]$.}
\end{align}
By the definition of $\cR({\bf t})$ in \eqref{RrepEasy}, we have by \eqref{Rrep_ass} that 
    \begin{equation}\label{i_in_Rrep_result}
        i \in \left\{ \ell \in \cS'_k: {p'}^{u_{\ell k} / \cG_k}_{\ell}  \leq (\nu'_{\ell k} \cdot \hatt'_{k}) \wedge \lambda_k \right\} \cap \left\{ \ell \in \cS''_k: {p''}^{u_{\ell k}/\cG_k}_{\ell} \leq (\nu''_{\ell k} \cdot \hatt''_{k}) \wedge \lambda_k \right\}  \text{ for each } k \in [K].
    \end{equation}
Since $\mathbf{P}'$ and $\mathbf{P}''$ only possibly differ in the $i$-th row, it must follow from \eqref{local_PC_pv_for_stability} that
\begin{equation} \label{local_pc_pv_are_equal_except_row_i}
    {p'}^{u_{\ell k}/\cG_k}_{\ell} = {p''}^{u_{\ell k}/\cG_k}_{\ell} \text{ for all } \ell \in [m] \backslash \{ i \} \text{ and } k \in [K].
    \end{equation}
Together, \eqref{Sk_equality}, \eqref{equal_feature_weights}, \eqref{i_in_Rrep_result}, and \eqref{local_pc_pv_are_equal_except_row_i} imply that
    \begin{align}\label{R_equality}
        \left\{ \ell \in \cS'_k: {p'}^{u_{\ell k} / \cG_k}_{\ell}  \leq (\nu'_{\ell k} \cdot \hatt^{\max}_{k}) \wedge \lambda_k \right\} = \left\{ \ell \in \cS''_k:  {p''}^{u_{\ell k} / \cG_k}_{\ell} \leq (\nu''_{\ell k} \cdot \hatt^{\max}_{k}) \wedge \lambda_k \right\} \text{ for each } k \in [K],
\end{align}
where
\begin{equation}\label{t_k_max_def}
    \hatt^{\max}_{k} = \hatt'_{k} \vee \hatt''_{k}.
\end{equation}
It follows from \eqref{R_equality} and the definition of $\cR({\bf t})$ in \eqref{RrepEasy} that
\begin{equation}\label{Rrep_equality}
    \cR(\hattvec^{\max};\mathbf{P}', \mathbf{X}) = \cR(\hattvec^{\max};\mathbf{P}'', \mathbf{X}) \text{ \ where \ $\hattvec^{\max} \equiv  (\hatt^{\max}_1,\dots,\hatt^{\max}_K)$.}
\end{equation}
Thus, to finish proving \eqref{stability_parfilter_goal}, it is enough to show that 
    \begin{equation} \label{tmax_equal_everything}
        \hattvec^{\max} = \hattvec' \text{ and }       \hattvec^{\max} =  \hattvec''.
     \end{equation}
      

    We will now proceed to show that 
    \begin{equation}\label{pihat_rep_k_equality}
        \hat{\pi}'_{k} = \hat{\pi}''_{k} \text{ for each } k \in [K],
    \end{equation}
    which holds trivially when each $f_{\hat{\pi}_{k}}$ outputs $1$ regardless of its inputs. Therefore, we will focus on establishing \eqref{pihat_rep_k_equality} in the case where each $f_{\hat{\pi}_{k}}$ takes the form given in \eqref{pi_hat_adaptive}. Assumption \eqref{Rrep_ass}, together with the definition of $\cR(\mathbf{t})$ in \eqref{RrepEasy}, implies that
    \begin{equation}\label{pi_less_than_lambda_k}
        i \notin \left\{ \ell \in \cS'_k: {p'}^{u_{\ell k}/\cG_k}_{\ell} > \lambda_k \right\}  \cup \left\{ \ell \in \cS''_k: {p''}^{u_{\ell k}/\cG_k}_{\ell} > \lambda_k \right\} \text{ for each } k \in [K].
    \end{equation}
    It follows from \eqref{stable_pihat}, \eqref{Sk_equality}, \eqref{equal_feature_weights}, \eqref{local_pc_pv_are_equal_except_row_i}, \eqref{pi_less_than_lambda_k}, and the form of \eqref{pi_hat_adaptive} that
    \begin{align*}
       \hat{\pi}'_{k} &=  \frac{ \max\limits_{\ell \in \cS'_k} \nu'_{\ell k} + \sum\limits_{\ell \in \cS'_k \backslash \{i\}} \nu'_{\ell k} \cdot I \left\{ {p'}^{u_{\ell k}/\cG_k}_{\ell} > \lambda_k \right\}}{(1 - \lambda_k) \cdot |\cS'_k| } \\
       &= \frac{ \max\limits_{\ell \in \cS''_k} \nu''_{\ell k} + \sum\limits_{\ell \in \cS''_k \backslash \{i\}} \nu''_{\ell k} \cdot I \left\{ {p''}^{u_{\ell k}/\cG_k}_{\ell} > \lambda_k \right\}}{(1 - \lambda_k) \cdot |\cS''_k| } = \hat{\pi}''_{k}
    \end{align*}
    for each $k \in [K]$. Thus, \eqref{pihat_rep_k_equality} holds if each $f_{\hat{\pi}_{k}}$ is taken in the form of \eqref{pi_hat_adaptive}.
    
    Note that, for each $k \in [K]$,
    \begin{align}\label{FDPhat_repk_TBD}
        \FDPhat_{k}(\hattvec^{\max};\mathbf{P}', \mathbf{X}) = \frac{|\cS'_{k}| \cdot \hat{\pi}'_{k} \cdot \hatt^{\max}_{k}}{|\cR(\hattvec^{\max};\mathbf{P}', \mathbf{X})| \vee 1}.
    \end{align}
    Then, consider the following two exhaustive cases in light of the definition of $\hatt^{\max}_{k}$ in \eqref{t_k_max_def}:
    \begin{itemize}
        \item Case 1: $\hatt^{\max}_{k} = \hatt'_k$. From \eqref{FDPhat_repk_TBD}, we have that 
        \begin{align*}
            \FDPhat_{k}(\hattvec^{\max};\mathbf{P}', \mathbf{X})  &= \frac{|\cS'_{k}| \cdot \hat{\pi}'_{k} \cdot \hatt'_k}{|\cR(\hattvec^{\max};\mathbf{P}', \mathbf{X})| \vee 1} \\
            &\overset{\tiny (a)}{\leq} \frac{|\cS'_{k}| \cdot \hat{\pi}'_{k} \cdot \hatt'_k}{|\cR(\hattvec';\mathbf{P}', \mathbf{X})| \vee 1} \\
            & = \FDPhat_{k}(\hattvec';\mathbf{P}', \mathbf{X})  \\
            &\overset{\tiny (b)}{\leq} w_k \cdot q
        \end{align*}
        where (a) is a result of \lemref{Rrep}  since $\hattvec^{\max} \matgeq \hattvec'$, and (b) is a result of \eqref{FDPdash}.
        \item Case 2: $\hatt^{\max}_{k} = \hatt''_k$. From \eqref{FDPhat_repk_TBD}, we have that 
        \begin{align*}
             \FDPhat_{k}(\hattvec^{\max};\mathbf{P}', \mathbf{X}) &\overset{\tiny (c)}{=} \frac{|\cS''_{k}| \cdot \hat{\pi}''_{k} \cdot \hatt''_k}{|\cR(\hattvec^{\max};\mathbf{P}'', \mathbf{X})| \vee 1} \\
            &\overset{\tiny (d)}{\leq} \frac{|\cS''_{k}| \cdot \hat{\pi}''_{k} \cdot \hatt''_k}{|\cR(\hattvec'';\mathbf{P}'', \mathbf{X})| \vee 1} \\
            &= \FDPhat_{k}(\hattvec'';\mathbf{P}'', \mathbf{X}) \\
            &\overset{\tiny (e)}{\leq} w_k \cdot q
        \end{align*}
        where $(c)$ is a result of \eqref{Sk_equality}, \eqref{Rrep_equality}, and \eqref{pihat_rep_k_equality}, $(d)$ is a result of \lemref{Rrep} since $\hattvec^{\max} \matgeq \hattvec''$, and $(e)$ is a result of \eqref{FDPdashdash}.
    \end{itemize}
    Since we have proven  $\FDPhat_{k}(\hattvec^{\max};\mathbf{P}', \mathbf{X}) \leq w_k \cdot q$ for all $k \in [K]$, it follows by \defref{set_of_feasible_thresholds} that
    \begin{equation}\label{taurepstar}
        \hattvec^{\max} \in \cT(\mathbf{P}'; \mathbf{X}).
    \end{equation}
    Furthermore, since $\hattvec'$ is the largest element-wise vector in $\cT(\mathbf{P}'; \mathbf{X})$ (\propref{optimal_thresholds}) and $\hattvec^{\max} \matgeq \hattvec'$ is true by construction, it must follow from \eqref{taurepstar} that $\hattvec^{\max} = \hattvec'$. A completely analogous proof will also yield $\hattvec^{\max} = \hattvec''$. Hence, \eqref{tmax_equal_everything} is proved.
\end{proof}

\subsection{Additional lemmas on conditional validity } 
The additional lemmas below concerning the conditional validity of $p^{u_{i k} / \cG_k}_{i}$ play key roles in the proofs of \thmsref{error_control} and \ref{thm:error_control_dep} in \appref{error_control}.

\begin{lemma}[Additional conditional validity  of $p^{u_{i k} / \cG_k}_{i}$  under arbitrarily dependence]\label{lem:supertX}
Suppose \assumpref{indepstudies} holds.  Furthermore, suppose for any given feature $i$ and group $k$, the local PC $p$-value $p^{\uiG}_i$ is  valid conditional on ${\bf X}$, and the selection set $\cS_k$  depends only on $\mathbf{P}_{ \cdot (-\cG_k)}$ and ${\bf X}$ (i.e.,  \condsref{valid_monotone_local_PC_pv} and \ref{cond:indep} are met). Then,  for all $t \in [0,1]$,
\begin{center}
$ \Pr ( p^{u_{i k} / \cG_k}_{i} \leq t | i \in \cS_k, \mathbf{P}_{\cdot (-\cG_k)},  \mathbf{X}  ) \leq t$ if $H^{\uiG}_i$ is true.
\end{center}
\end{lemma}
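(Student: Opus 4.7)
The plan is to combine three simple facts: \textbf{(i)} by \assumpref{indepstudies}, $\mathbf{P}_{\cdot \cG_k}$ and $\mathbf{P}_{\cdot (-\cG_k)}$ are conditionally independent given $\mathbf{X}$; \textbf{(ii)} by \condref{indep}, the event $\{i \in \cS_k\}$ is a measurable function of $(\mathbf{P}_{\cdot(-\cG_k)}, \mathbf{X})$; and \textbf{(iii)} by \condref{valid_monotone_local_PC_pv}, $\Pr(p^{\uiG}_i \leq t \mid \mathbf{X}) \leq t$ under $H^{\uiG}_i$. Since $p^{\uiG}_i = f_{ik}(\mathbf{P}_{i\cG_k}; u_{ik})$ is a function only of $\mathbf{P}_{i\cG_k}$, it is in particular a function of $\mathbf{P}_{\cdot \cG_k}$.

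First, I would observe that (i) implies
\[
\Pr\!\left(p^{\uiG}_i \leq t \,\big|\, \mathbf{P}_{\cdot (-\cG_k)}, \mathbf{X}\right) \;=\; \Pr\!\left(p^{\uiG}_i \leq t \,\big|\, \mathbf{X}\right),
\]
because the left-hand side is the conditional probability of a $\sigma(\mathbf{P}_{\cdot \cG_k}, \mathbf{X})$-measurable event, and $\mathbf{P}_{\cdot \cG_k}$ is independent of $\mathbf{P}_{\cdot (-\cG_k)}$ given $\mathbf{X}$. Then by (iii), the right-hand side is at most $t$ when $H^{\uiG}_i$ is true.

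Second, because of (ii) the event $\{i \in \cS_k\}$ lies in the conditioning $\sigma$-algebra $\sigma(\mathbf{P}_{\cdot(-\cG_k)}, \mathbf{X})$, so augmenting the conditioning by this event is inert on $\{i \in \cS_k\}$:
\[
\Pr\!\left(p^{\uiG}_i \leq t \,\big|\, i \in \cS_k,\, \mathbf{P}_{\cdot (-\cG_k)}, \mathbf{X}\right) \;=\; \Pr\!\left(p^{\uiG}_i \leq t \,\big|\, \mathbf{P}_{\cdot (-\cG_k)}, \mathbf{X}\right)
\]
almost surely on $\{i \in \cS_k\}$. Chaining the two displays gives the claimed bound by $t$.

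I do not foresee a major obstacle here; the entire argument is a direct bookkeeping exercise in conditional independence. The only subtlety worth spelling out carefully is the measurability step justifying that conditioning further on $\{i \in \cS_k\}$ does not alter the conditional probability, which is immediate from \condref{indep}. No result beyond \assumpref{indepstudies} and \condsref{valid_monotone_local_PC_pv}--\ref{cond:indep} is needed, and in particular stability (\condref{stable}) and the structure of $\nubf_k$ play no role in this lemma.
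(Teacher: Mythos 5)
Your proposal is correct and follows essentially the same route as the paper's own proof: both reduce the claim to the observation that, under \assumpref{indepstudies} and \condref{indep}, $\mathbf{P}_{i\cG_k}$ is conditionally independent of $\{i \in \cS_k, \mathbf{P}_{\cdot(-\cG_k)}\}$ given $\mathbf{X}$, so that conditioning on these quantities is inert and \condref{valid_monotone_local_PC_pv} delivers the bound. Your explicit handling of the measurability of $\{i \in \cS_k\}$ is just a more careful spelling-out of the same step the paper treats as obvious.
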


\begin{proof}[Proof of \lemref{supertX}]
Under \assumpref{indepstudies} and \condref{indep}, it is obvious that
\begin{equation}\label{indepPX}
\text{$\mathbf{P}_{i \cG_k}$ and $\{i \in \cS_k,  \mathbf{P}_{\cdot (-\cG_k)}\}$ are independent conditional on $\mathbf{X}$.}
\end{equation}
The lemma then holds by the definition of $p^{\uiG}_i$ as a function of ${\bf P}_{i \cG_k}$ in \eqref{generic_local_pv}, the conditional independence  stated in \eqref{indepPX}, and  \condsref{valid_monotone_local_PC_pv}.
\end{proof}

\begin{lemma}[Additional conditionally validity  of $p^{u_{i k} / \cG_k}_{i}$ under independence]
\label{lem:supert}
Suppose \assumpref{indepstudies} holds, and the base $p$-values in $\mathbf{P}_{\cdot j} = (p_{1j},p_{2j},\dots,p_{mj})$ are independent conditional on $\mathbf{X}$ for each study $j \in [n]$. Furthermore, suppose for any given feature $i$ and group $k$,  the local PC $p$-value $p^{\uiG}_i$ is  valid conditional on ${\bf X}$, and the selection set $\cS_k$  depends only on $\mathbf{P}_{ \cdot (-\cG_k)}$ and ${\bf X}$ (i.e.,  \condsref{valid_monotone_local_PC_pv} and \ref{cond:indep} are met). Then, for all $t \in [0,1]$,
\begin{enumerate}[(i)]
\item  $\Pr ( p^{u_{i k} / \cG_k}_{i} \leq t | i \in \cS_k, \mathbf{P}_{\cdot (-\cG_k)},  \mathbf{P}_{(-\cS_k) \cG_k}, \mathbf{X}  )  \leq t$  if $H^{\uiG}_i$ is true; 
 \item  $ \Pr ( p^{u_{i k} / \cG_k}_{i} \leq t |  i \in \cS_k,  \mathbf{P}_{(-i) \cdot}, \mathbf{P}_{i (-\cG_k)}, \mathbf{X}  ) \leq t$ if $H^{\uiG}_i$ is true. 
\end{enumerate}
\end{lemma}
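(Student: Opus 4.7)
The plan is to leverage the full mutual conditional independence of all entries of $\mathbf{P}$ given $\mathbf{X}$, which follows from \assumpref{indepstudies} (independence across studies conditional on $\mathbf{X}$) together with the additional hypothesis that $p_{1j},\dots,p_{mj}$ are independent conditional on $\mathbf{X}$ within each study $j$. Since $p^{\uiG}_i = f_{ik}(\mathbf{P}_{i \cG_k}; u_{ik})$ by \eqref{generic_local_pv}, it depends on $\mathbf{P}$ only through the row-slice $\mathbf{P}_{i \cG_k}$. Hence for both parts it will suffice to argue that the conditioning information is measurable with respect to the $\sigma$-algebra generated by $\mathbf{X}$ and by $\mathbf{P}$-entries outside of $\mathbf{P}_{i \cG_k}$; the mutual conditional independence above then forces the conditional distribution of $p^{\uiG}_i$ given the conditioning information to coincide with its conditional distribution given $\mathbf{X}$ alone, and \condref{valid_monotone_local_PC_pv} finishes the job, yielding the bound by $t$ under $H^{\uiG}_i$.

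For part $(i)$, \condref{indep} makes $\cS_k$ a deterministic function of $(\mathbf{P}_{\cdot (-\cG_k)}, \mathbf{X})$, so $\{i \in \cS_k\}$ is measurable with respect to $\sigma(\mathbf{P}_{\cdot (-\cG_k)}, \mathbf{X})$. On this event the index set $(-\cS_k)$ excludes $i$, so the submatrix $\mathbf{P}_{(-\cS_k) \cG_k}$ is a function of $\mathbf{P}_{(-i) \cG_k}$ together with $(\mathbf{P}_{\cdot (-\cG_k)}, \mathbf{X})$; none of its entries involve row $i$. Combined, the whole conditioning $\sigma$-algebra in part $(i)$ lies inside $\sigma(\mathbf{P}_{(-i) \cdot}, \mathbf{X})$, from which $\mathbf{P}_{i \cG_k}$ is independent given $\mathbf{X}$ by the mutual conditional independence noted above. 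For part $(ii)$ the same argument is even more direct: the conditioning information $(\mathbf{P}_{(-i) \cdot}, \mathbf{P}_{i (-\cG_k)}, \mathbf{X})$ is manifestly a function of $\mathbf{X}$ and of all $\mathbf{P}$-entries other than $\mathbf{P}_{i \cG_k}$, and $\{i \in \cS_k\}$ is measurable with respect to this same information since $\mathbf{P}_{\cdot (-\cG_k)} \subseteq (\mathbf{P}_{(-i) \cdot}, \mathbf{P}_{i (-\cG_k)})$ and \condref{indep} applies.

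The one subtlety, which I view as the main bookkeeping obstacle, is in part $(i)$: the random set $(-\cS_k)$ appears in the indices of $\mathbf{P}_{(-\cS_k) \cG_k}$, so one must carefully justify that on the event $\{i \in \cS_k\}$ the entries of $\mathbf{P}_{(-\cS_k) \cG_k}$ can indeed be recovered from $(\mathbf{P}_{(-i) \cG_k}, \mathbf{P}_{\cdot (-\cG_k)}, \mathbf{X})$ alone, with no reliance on $\mathbf{P}_{i \cG_k}$. Once this measurability claim is pinned down, the remainder is routine and structurally parallels the proof of \lemref{supertX}, the only substantive new ingredient being the upgraded mutual independence across features within each study.
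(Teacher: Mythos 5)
Your proposal is correct and takes essentially the same route as the paper: both arguments reduce to showing that, under \assumpref{indepstudies} plus within-study conditional independence, $\mathbf{P}_{i \cG_k}$ is conditionally independent of all the conditioning information (given $\mathbf{X}$, on the event $i \in \cS_k$, where $\cS_k$ and hence $\mathbf{P}_{(-\cS_k)\cG_k}$ are pinned down by $\mathbf{P}_{\cdot(-\cG_k)}$ and $\mathbf{X}$ via \condref{indep}), after which \condref{valid_monotone_local_PC_pv} gives the bound $t$; the paper phrases this as mutual conditional independence of the row-slices $\mathbf{P}_{1\cG_k},\dots,\mathbf{P}_{m\cG_k}$ given $\{i\in\cS_k,\mathbf{P}_{\cdot(-\cG_k)},\mathbf{X}\}$. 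One minor slip in your part $(i)$: since $\mathbf{P}_{\cdot(-\cG_k)}$ contains $\mathbf{P}_{i(-\cG_k)}$, the conditioning $\sigma$-algebra lies inside $\sigma(\mathbf{P}_{(-i)\cdot},\mathbf{P}_{i(-\cG_k)},\mathbf{X})$ rather than $\sigma(\mathbf{P}_{(-i)\cdot},\mathbf{X})$, but this is harmless because the full mutual conditional independence you invoke makes $\mathbf{P}_{i\cG_k}$ independent of that larger $\sigma$-algebra given $\mathbf{X}$ as well, exactly as your opening paragraph already states.
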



\begin{proof}[Proof of \lemref{supert}]
  Under \assumpref{indepstudies} and the fact that  $\cS_k$  depends only on $\mathbf{P}_{ \cdot (-\cG_k)}$ and ${\bf X}$ (\condref{indep}), we know that 
  \begin{equation*}
  {\bf P}_{\cdot \cG_k}  \text{ and } \{ i \in \cS_k, \mathbf{P}_{\cdot (-\cG_k)} \} \text{ are independent conditional on } {\bf X},
  \end{equation*}
  which, under the mutual independence of $p_{1j},p_{2j},\dots,p_{mj}$ conditional on ${\bf X}$ for each $j$, further implies that 
  \[
  \mathbf{P}_{1 \cG_k} ,\dots,   \mathbf{P}_{m \cG_k} \text{ are mutually independent conditional on } \{ i \in \cS_k, \mathbf{P}_{\cdot (-\cG_k)} , {\bf X}\}.
  \]
  In particular, the last statement implies 
  \begin{equation}  \label{icGk_independent_of_else_a}
  \mathbf{P}_{i \cG_k} \text{ and } \mathbf{P}_{(- \cS_k) \cG_k} \text{ are  independent conditional on } \{ i \in \cS_k, \mathbf{P}_{\cdot (-\cG_k)} , {\bf X}\}
  \end{equation}
  and
  \begin{equation}  \label{icGk_independent_of_else_b}
  \mathbf{P}_{i \cG_k} \text{ and } \mathbf{P}_{(- i) \cG_k} \text{ are  independent conditional on } \{ i \in \cS_k, \mathbf{P}_{\cdot (-\cG_k)} , {\bf X}\}.
  \end{equation}
  Then, \condref{valid_monotone_local_PC_pv} and \eqref{icGk_independent_of_else_a}   together give \lemref{supert}$(i)$, and \condref{valid_monotone_local_PC_pv} and \eqref{icGk_independent_of_else_b}   together give \lemref{supert}$(ii)$ since $\mathbf{P}_{(- i) \cG_k} \cup \mathbf{P}_{\cdot (-\cG_k)} = \mathbf{P}_{(-i) \cdot} \cup \mathbf{P}_{i (-\cG_k)}$.
%

\end{proof}

%

\subsection{Additional  lemmas from the literature}
We state two useful lemmas from the existing literature,  with  notations adapted to our present context for the reader's convenience.

\begin{lemma}[Equation (5.4) of \citet{pfilter2019}]\label{lem:storey}
Let $S \in \mathbb{N}$ be a fixed natural number, $\lambda \in (0,1)$ be a fixed tuning parameter, $\nubf = (\nu_1,\dots,\nu_S) \in [0,\infty)^S$ be fixed non-negative weights where $\sum_{\ell \in [S]} \nu_{\ell} = S$, and $\cH_0 \subseteq [S]$ be a fixed subset of $[S]$. Let $\mathbf{p} =(p_1,\dots,p_S)$ be a vector of independent $p$-values where $\Pr(p_\ell \leq t) \leq t $ for all $t \in [0,1]$ if $\ell \in \cH_0$. Then
\begin{equation*}
    \mathbb{E} \left[ \frac{(1 - \lambda) \cdot S}{ \nu_{\max} + \sum_{\ell \in [S] \backslash \{ i \}} \nu_{\ell} \cdot I \left\{ p_\ell > \lambda \right\}} \right] \leq \frac{S}{ \sum_{\ell \in \cH_0} \nu_{\ell} }
\end{equation*}
for any $i \in \cH_0$ where $\nu_{\max} = \max_{\ell \in [S]} \nu_{\ell}$.
\end{lemma}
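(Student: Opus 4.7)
The plan is to adapt the Storey-style argument used in the proof of Equation (5.4) of \citet{pfilter2019} to the present notation. It proceeds in three stages, which I describe in turn.

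First, I would lower-bound the denominator by discarding non-null indicator contributions. Since $\nu_\ell I\{p_\ell > \lambda\} \geq 0$ for every $\ell$, we have
\[
\nu_{\max} + \sum_{\ell \in [S] \setminus \{i\}} \nu_\ell I\{p_\ell > \lambda\} \geq \nu_{\max} + \sum_{\ell \in \cH_0 \setminus \{i\}} \nu_\ell I\{p_\ell > \lambda\},
\]
so it suffices to bound the expectation of the reciprocal of this smaller quantity. Next, using the validity of the null $p$-values, each indicator $I\{p_\ell > \lambda\}$ for $\ell \in \cH_0 \setminus \{i\}$ is an independent Bernoulli random variable with success probability at least $1-\lambda$. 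By a standard stochastic-dominance coupling I would replace these indicators with i.i.d.\ $\mathrm{Bernoulli}(1-\lambda)$ random variables $\tilde I_\ell$ with $\tilde I_\ell \leq I\{p_\ell > \lambda\}$ almost surely, which only enlarges the expected reciprocal. The problem thus reduces to showing
\[
\mathbb{E}\Bigl[\frac{1}{\nu_{\max} + \sum_{\ell \in \cH_0 \setminus \{i\}} \nu_\ell \tilde I_\ell}\Bigr] \leq \frac{1}{(1-\lambda)\sum_{\ell \in \cH_0}\nu_\ell}.
\]

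Finally, I would invoke the auxiliary weighted Storey-type inequality: for any non-negative weights $w_0, w_1, \ldots, w_k$ with $w_0 \geq \max_\ell w_\ell$ and i.i.d.\ $Y_\ell \sim \mathrm{Bernoulli}(q)$,
\[
\mathbb{E}\Bigl[\frac{1}{w_0 + \sum_{\ell=1}^k w_\ell Y_\ell}\Bigr] \leq \frac{1}{q\bigl(w_0 + \sum_{\ell=1}^k w_\ell\bigr)}.
\]
Applying this with $w_0 = \nu_{\max}$ (which dominates every $w_\ell$ by definition), $\{w_\ell\}=\{\nu_\ell\}_{\ell \in \cH_0 \setminus \{i\}}$, and $q = 1-\lambda$, and using the trivial inequality $\nu_{\max} \geq \nu_i$ to deduce $\nu_{\max} + \sum_{\ell \in \cH_0 \setminus \{i\}}\nu_\ell \geq \sum_{\ell \in \cH_0}\nu_\ell$, finishes the proof.

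The main obstacle is establishing the auxiliary Storey-type bound, because a naive induction on $k$ does not give a tight enough estimate: splitting on a single $Y_\ell$ and applying the inductive hypothesis to both branches yields a residual term that goes in the wrong direction. A clean approach is instead to use the Laplace-transform identity $1/z = \int_0^\infty e^{-tz}\,dt$ and the substitution $u = e^{-t}$ to rewrite the left-hand side as $\int_0^1 u^{w_0-1}\prod_\ell(q u^{w_\ell} + 1-q)\,du$, and then compare it with the identity $\int_0^1 u^{w_0 + \sum_\ell w_\ell - 1}\,du = 1/(w_0 + \sum_\ell w_\ell)$ through an integration-by-parts argument in which the hypothesis $w_0 \geq \max_\ell w_\ell$ is essential. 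The detailed calculations are carried out in \citet[proof of Equation~(5.4)]{pfilter2019}, to which I would defer.
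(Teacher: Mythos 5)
Your proposal is correct and follows essentially the same route as the paper's proof: lower-bound the denominator by discarding non-null terms, reduce to a weighted Bernoulli reciprocal-moment bound, and finish with $\nu_{\max} \geq \nu_i$ so that $\nu_{\max} + \sum_{\ell \in \cH_0 \setminus \{i\}}\nu_\ell \geq \sum_{\ell \in \cH_0}\nu_\ell$. Your auxiliary inequality is, after dividing numerator and denominator by $w_0=\nu_{\max}$ (which is why the condition $w_0 \geq \max_\ell w_\ell$ appears), exactly Lemma 3 of \citet{pfilter2019} that the paper invokes, with the paper handling the step where $\Pr(p_\ell > \lambda) \geq 1-\lambda$ by monotonicity of the bound rather than by your explicit coupling.
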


\begin{proof}[Proof of \lemref{storey}]
    For a vector of constants $(a_1,\dots,a_d) \in [0,1]^d$ where $d \in \bN$, and Bernoulli variables $b_1,\dots,b_d \overset{\tiny \text{iid}}{\sim} \text{Bernoulli}(1 - \lambda)$, the inequality
    \begin{align}\label{binomial_inequality}
        \mathbb{E} \left[ \frac{1}{1 + \sum^d_{i = 1} a_i \cdot b_i}  \right] \leq \frac{1}{(1 - \lambda) \cdot (1 + \sum^d_{i=1} a_i)}
    \end{align}
    holds by Lemma 3 of \citet[p.2805]{pfilter2019} where the right-hand-side is increasing in $\lambda$. Since $\Pr(p_{\ell} > \lambda) > 1 - \lambda$ if $\ell \in \cH_0$, we have that
    \begin{align*}
        \mathbb{E} \left[ \frac{(1 - \lambda) \cdot S}{ \nu_{\max} + \sum_{\ell \in [S] \backslash \{ i \}} \nu_{\ell} \cdot I \left\{ p_\ell > \lambda \right\}} \right]  &= \mathbb{E} \left[ \frac{(1 - \lambda) \cdot \frac{S}{\nu_{\max}} }{ 1 + \sum_{\ell \in [S] \backslash \{ i \} } \frac{\nu_{\ell}}{\nu_{\max}} \cdot \1 \left\{ p_{\ell} > \lambda \right\} } \right] \\
        &\leq \mathbb{E} \left[ \frac{(1 - \lambda) \cdot \frac{S}{\nu_{\max}} }{ 1 + \sum_{\ell \in \cH_0 \backslash \{ i \} } \frac{\nu_{\ell}}{\nu_{\max}} \cdot \1 \left\{ p_{\ell} > \lambda \right\} } \right] \\
        & \overset{\tiny (a)}{\leq} \frac{(1 - \lambda) \cdot \frac{S}{\nu_{\max}} }{ (1 - \lambda) \cdot ( 1 +  \sum_{\ell \in \cH_0 \backslash \{ i \} } \frac{\nu_{\ell}}{\nu_{\max}} ) } \\
        &\leq \frac{ S }{ \nu_{\max} +  \sum_{\ell \in \cH_0 \backslash \{ i \} } \nu_{\ell} } \\
        &\leq \frac{ S }{  \sum_{\ell \in \cH_0 } \nu_{\ell} }. 
    \end{align*}
    where $(a)$ is a result of \eqref{binomial_inequality}.
\end{proof}

\lemref{storey}  is similar to \citet[Corollary 13]{Blanchard2009} when applied to what they call ``modified Storey's estimator''; see \citet[p.2849]{Blanchard2009} for further details.

\bigskip

\begin{lemma}[Proposition 3.7 of \citet{Blanchard2008}]\label{lem:shape}
Let $S \in \mathbb{N}$ be a fixed natural number, $c \geq 0$ be a non-negative constant, $\cH_0 \subseteq [S]$ be a fixed subset of $[S]$, and $V \in [0, \infty)$ be an arbitrary non-negative random variable. Let $\mathbf{p} = (p_1,\dots,p_S)$ be a vector of arbitrarily dependent $p$-values where $\Pr(p_\ell \leq t) \leq t$ for all $t \in [0,1]$ if $\ell \in \cH_0$. Then, for a given  $i \in \cH_0$,
\begin{equation*}
    \mathbb{E} \Bigg[ \frac{ I \{ p_i \leq c \cdot ( \sum_{\ell \in [S]} \frac{1}{\ell} )^{-1} \cdot  V  \} }{ V } \Bigg] \leq c \text{ for all } c \geq 0.
\end{equation*}
\end{lemma}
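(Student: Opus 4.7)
The plan is to prove this classical Benjamini--Yekutieli-style inequality via a two-step layer-decomposition plus summation-by-parts argument. Since the lemma is only invoked in the proof of \thmref{error_control_dep} with $V$ arising as a rejection-count quantity that is integer-valued on $\{1, \ldots, S\}$, I would first establish the inequality in that case; the general non-negative case follows by a straightforward discretisation (e.g.\ replacing $V$ by $\lceil V \rceil \wedge S$ and invoking monotone convergence) or by passing to the corresponding Stieltjes-integral analogues of the sums below.

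First, I would condition on the value of $V$ to write the expectation as a sum,
\[
\mathbb{E}\Bigg[\frac{I\{p_i \leq cV/H_S\}}{V}\Bigg] = \sum_{v=1}^{S} \frac{\Pr(p_i \leq cv/H_S,\; V = v)}{v},
\]
where I use the shorthand $H_S \equiv \sum_{\ell \in [S]} \tfrac{1}{\ell}$. Setting grid points $a_k \equiv ck/H_S$ for $k = 0, 1, \ldots, S$ and decomposing $\{p_i \leq a_v\}$ as the disjoint union $\bigsqcup_{k=1}^{v}\{a_{k-1} < p_i \leq a_k\}$, I would swap the order of summation and use $1/v \leq 1/k$ for $v \geq k$ to obtain the upper bound
\[
\sum_{k=1}^{S} \frac{\Pr(a_{k-1} < p_i \leq a_k,\; V \geq k)}{k} \;\leq\; \sum_{k=1}^{S} \frac{\Pr(a_{k-1} < p_i \leq a_k)}{k}.
\]

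To conclude, the key step is a summation-by-parts identity: writing $F_k \equiv \Pr(p_i \leq a_k)$ with $F_0 = 0$, one has
\[
\sum_{k=1}^{S} \frac{F_k - F_{k-1}}{k} = \frac{F_S}{S} + \sum_{k=1}^{S-1} \frac{F_k}{k(k+1)},
\]
and applying the super-uniformity bound $F_k \leq a_k = ck/H_S$ (valid since $i \in \cH_0$) term-by-term gives exactly $\tfrac{c}{H_S}\bigl(1 + \sum_{k=2}^{S} \tfrac{1}{k}\bigr) = \tfrac{c}{H_S} \cdot H_S = c$.

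I expect the main obstacle to be psychological rather than technical: the naive bound $\Pr(a_{k-1} < p_i \leq a_k) \leq a_k - a_{k-1} = c/H_S$ is illegitimate for merely super-uniform $p$-values (whose CDF need not dominate the identity from below in differences), and a direct union bound on $\Pr(p_i \leq a_v) \leq a_v$ plugged into the original sum yields only the loose estimate $cS/H_S$. It is crucial to retain the \emph{cumulative} probabilities $F_k$ and let the weights $\tfrac{1}{k(k+1)}$ absorb the extra factor of $k$ inside $F_k \leq ck/H_S$, producing the harmonic sum that telescopes exactly against the $H_S^{-1}$ built into the threshold.
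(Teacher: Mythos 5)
Your core argument is correct and, in fact, more self-contained than the paper's treatment: the paper gives no proof of \lemref{shape}, it simply cites Proposition 3.7 (together with the shape function of Section 4.2.1) of \citet{Blanchard2008}. For $V$ integer-valued in $[S]$ -- the only case actually invoked, since in the proof of \thmref{error_control_dep} one has $V = |\cR(\hattvec)| \vee 1 \in [\,|\cS_k|\,]$ -- your layer decomposition over the grid $a_k = ck/H_S$, the interchange of sums with the bound $1/v \le 1/k$, the Abel-summation identity $\sum_{k=1}^S (F_k - F_{k-1})/k = F_S/S + \sum_{k=1}^{S-1} F_k/(k(k+1))$, and the term-by-term use of $F_k \le a_k$ are all valid, and the harmonic sum telescopes exactly to $c$ as you claim. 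This is an elementary discrete rendering of the Blanchard--Roquain shape-function/Fubini argument, and it would serve as a legitimate replacement for the citation.

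The genuine gap is your opening claim that the case of arbitrary nonnegative $V$ "follows by a straightforward discretisation (replacing $V$ by $\lceil V\rceil \wedge S$ and invoking monotone convergence)." Replacing $V$ by $\lceil V \rceil \wedge S$ enlarges the indicator's threshold but does not shrink the quantity $1/V$ in the right direction (and capping at $S$ can even decrease the denominator), so the modified expectation does not dominate the original one; the inequalities point the wrong way. Indeed, the lemma as literally stated -- with $V$ an arbitrary nonnegative random variable and $V$ itself appearing in the threshold -- is false: take $S=1$ (harmonic factor $1$), $c=1$, and $V = p_i$ uniform on $[0,1]$, so the left-hand side is $\mathbb{E}[1/p_i] = \infty$. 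The correct general statement, which is what \citet{Blanchard2008} actually prove, has the shape function $\beta(V) = H_S^{-1}\,(\lfloor V \rfloor \wedge S)$ in the threshold, which coincides with $H_S^{-1} V$ only when $V$ is integer-valued in $[S]$. So your restriction to integer-valued $V \in [S]$ is not merely a convenience but a necessity; the imprecision originates in the lemma's restatement in the paper, and your proposal would be fully correct if the discretisation remark were deleted and the statement read with that restriction (or with the floor) made explicit.
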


 The original \citet[Proposition 3.7 ]{Blanchard2008} involves a ``shape'' function $\beta(V)$ (defined in their equation $(6)$). In \lemref{shape}, we restated their proposition with the shape function
\begin{equation*}
    \beta(V) = \left( \sum_{i \in [S]} \frac{1}{i} \right)^{-1} \cdot V,
\end{equation*}
which is justified as a valid shape function in Section 4.2.1 of \citet{Blanchard2008}. We also note that \lemref{shape} is similar to Lemma 1$(c)$ of \citet{pfilter2019}. 

\bigskip

\section{Proofs for \secref{MainPF}}\label{app:error_control}
This appendix contains the proof for all the theoretical results in \secref{MainPF}.
\subsection{Proof of  \lemref{FDR_rep_control_from_partition} ($\FDR_{k}$ control $\Rightarrow$ $\FDR_{\rep}$ control)} \label{app:FDR_k_control_gives_FDR_rep_control}
It suffices to show that
\begin{equation} \label{inclusion_suffice}
\cH^{u/[n]}  \subseteq \bigcup_{k \in [K]} \cH_k,
\end{equation}
i.e. any feature in $\cH^{u/[n]}$  must be in at least one of the sets $\cH_1 , \dots, \cH_K$.
If \eqref{inclusion_suffice} can be proven as  true, then it must be that
    \begin{multline*}
    \FDR_{\rep}(\widehat{\cR}) \equiv \mathbb{E} \left[ \frac{\sum_{i \in \cHun} \1 \{ i \in \widehat{\cR} \} }{1 \vee \sum_{i \in [m]} \1 \{ i \in \widehat{\cR} \} } \right] 
    \leq \sum_{k \in [K]} \mathbb{E} \left[ \frac{\sum_{i \in \cH_k } \1 \{ i \in \widehat{\cR} \} }{1 \vee \sum_{i \in [m]} \1 \{ i \in \widehat{\cR} \} } \right] = \sum_{k \in [K]}  \FDR_{k}(\widehat{\cR}).
    \end{multline*}
As such, if $ \FDR_{k}(\widehat{\cR}) \leq w_k \cdot q$ for each $k \in [K]$, having $\sum_{k=1}^K w_k = 1$ implies $\FDR_{\rep}(\widehat{\cR})  \leq q$.

To show \eqref{inclusion_suffice}, suppose towards a contradiction, there exists a feature $i_1 \in \cH^{u/[n]}  $ such that $i_1 \not \in \cH_k$ \textit{for all} $k \in [K]$. One can then write
    \begin{equation*}
       \left| \left\{ j \in [n] : \mu_{i_1j} \notin \mathcal{A}_{i_1} \right\} \right| \overset{\tiny (a)}{=} \sum_{k \in [K]} \left| \left\{ j \in \cG_k : \mu_{i_1j} \notin \mathcal{A}_{i_1} \right\}  \right| \overset{\tiny (b)}{\geq} \sum_{k \in [K]} u_{i_1 k} =  u
    \end{equation*}
    where $(a)$ is a result of $\cG_1, \dots, \cG_K$ forming a partition of $[n]$, and $(b)$ is true because 
    \[
    \left| \left\{ j \in \cG_k : \mu_{i_1j} \notin \mathcal{A}_{i_1} \right\}  \right| \geq u_{i_1 k},
    \] 
    a direct consequence of the assumption that $i_1 \not \in \cH_k $ for all  $k \in [K]$. However, $\left| \left\{ j \in [n] : \mu_{i_1 j} \notin \mathcal{A}_{i_1} \right\} \right|  \geq u$ means that $i_1 \not \in \cH^{u/[n]}$,  a contradiction.

\subsection{Proof of  \lemref{construct_local_PC} (Validity of local GBHPC $p$-values conditional on ${\bf X}$)}
\label{app:valid_local_GBHPC_pvalue}
 \assumpsref{superuniformp} and \assumpssref{pxindependence} together imply that for any $(i, j) \in [m]\times [n]$,
     \begin{equation} \label{cond_validity_of_each_base_pvalues}
    \Pr(p_{ij} \leq t \mid {\bf X}) \leq t \quad \text{for all $\ t \in [0,1]$ when $H_{ij}$ is true}.
    \end{equation} 
    Since \eqref{cond_validity_of_each_base_pvalues} and  that $p_{i1}, \dots,p_{in}$ are independent conditional on ${\bf  X}$  (\assumpref{indepstudies}), for any $\cJ \subset [\cG_k]$ and the restricted global null hypothesis $H_i^{1/\cJ}: \mu_{ij}  \in \cA_i \text{ for all } j \in \cJ$, any one of the four mentioned combining functions can produce a valid $f(\mathbf{P}_{i \cJ})$ with the property  that
      \begin{equation} \label{cond_validity_for_restricted_global_null}
  \Pr(f_{i, \cJ}( \mathbf{P}_{i \cJ}) \leq t \mid \mathbf{X}) \leq t \text{ for all }t \in [0,1] \text{ whenever }H_i^{1/\cJ} \text{ is true}.
  \end{equation}
    In particular, \eqref{cond_validity_for_restricted_global_null} holds for each $\cJ \subseteq \cG_k$ such that $| \cJ | = |\cG_k| - u_{ik} + 1$, and 
    by Proposition 1 of \citet{Wang2019}(as well as its proof), the local GBHPC $p$-value defined in \eqref{local_GBHPC} is conditionally valid conditional on ${\bf X}$. 

\subsection{Proof of \propref{optimal_thresholds} (Existence of an optimal vector $\hattvec$ in $\cT$)}\label{app:optimal_thresholds}
The fact that $t_k \leq \hatt_k$ for all $k \in [K]$ and any ${\bf t}= (t_1, \dots, t_K)  \in \cT$ simply comes from the definition of $\hat{t}_k$ in \eqref{optimal_tau_repk}. So we only have to  show $\hattvec \in  \cT$. Note that $\cT$ is non-empty because $\textbf{0}_K$ must be one of its elements.

  We first make the following claim, whose   proof will be deferred to the end of this subsection.
 \begin{claim}[$\hat{t}_k$ is achieved in $\cT$] \label{claim:t_hat_k_achieved}
For any given $k \in [K]$,  there must exist $K-1$ non-negative thresholds  $t^{\angleb{k}}_1$, $\dots$, $t^{\angleb{k}}_{k-1}$, $t^{\angleb{k}}_{k+1}$, $\dots$, $t^{\angleb{k}}_{K}$ such that $ \tvec^{\angleb{k}} \equiv (t^{\angleb{k}}_1, \dots, t^{\angleb{k}}_{k-1}, \hat{t}_{k}, t^{\angleb{k}}_{k+1}, \dots, t^{\angleb{k}}_{K}) \in \cT$, where $\hat{t}_{k}$ is defined according to \eqref{optimal_tau_repk}. 
     \end{claim}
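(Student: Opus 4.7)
The plan is to show that $\cT$ is a non-empty compact subset of $[0,\infty)^K$, and then apply the extreme value theorem to the continuous coordinate projection $\tvec\mapsto t_k$ to conclude that the supremum in \eqref{optimal_tau_repk} is attained at some $\tvec^{\angleb{k}}\in\cT$. Non-emptiness is immediate since $\mathbf{0}_K\in\cT$: indeed $\FDPhat_k(\mathbf{0}_K)=0\leq w_k q$ for every $k$.

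For boundedness, I would use the inclusion $\cR(\tvec)\subseteq\cS_k$, which gives $|\cR(\tvec)|\vee 1\leq|\cS_k|$ in the non-trivial case $|\cS_k|\geq 1$. So for any $\tvec\in\cT$,
\[
w_k q \ \geq\ \FDPhat_k(\tvec) \ =\ \frac{|\cS_k|\,\hat\pi_k\, t_k}{|\cR(\tvec)|\vee 1} \ \geq\ \hat\pi_k\, t_k,
\]
whence $t_k\leq w_k q/\hat\pi_k$, using that $\hat\pi_k>0$ under the algorithmic options ($\hat\pi_k=1$ or the form in \eqref{pi_hat_adaptive}). Thus $\cT\subseteq\prod_{k\in[K]}[0,w_k q/\hat\pi_k]$, a bounded set. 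The degenerate case $|\cS_k|=0$ for some $k$ forces $\cR(\tvec)=\emptyset$ for every $\tvec$, so $\hat t_k$ may be assigned arbitrarily and the claim holds trivially.

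The main obstacle is closedness of $\cT$, because $|\cR(\tvec)|$ is a step function in $\tvec$ and therefore not continuous. The key observation is that $|\cR(\tvec)|$ is \emph{upper} semi-continuous in $\tvec$: writing
\[
|\cR(\tvec)|\ =\ \sum_{i\in\bigcap_{k\in[K]}\cS_k}\ \prod_{k\in[K]} I\!\left\{p^{\uiG}_i\leq \nu_{ik}\,t_k\wedge\lambda_k\right\},
\]
each indicator corresponds to a closed (weak) inequality in $\tvec$ and is thus upper semi-continuous, and finite non-negative products and sums of USC functions remain USC. Consequently $1/(|\cR(\tvec)|\vee 1)$ is lower semi-continuous (the denominator is bounded below by $1$), and multiplication by the non-negative continuous factor $|\cS_k|\,\hat\pi_k\, t_k$ yields $\FDPhat_k$ as lower semi-continuous on $[0,\infty)^K$. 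Hence each sub-level set $\{\tvec:\FDPhat_k(\tvec)\leq w_k q\}$ is closed, and so is their intersection $\cT=\bigcap_{k\in[K]}\{\tvec:\FDPhat_k(\tvec)\leq w_k q\}$.

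Combining boundedness and closedness, $\cT$ is compact. Since $\cT$ is non-empty and $\tvec\mapsto t_k$ is continuous, its maximum over $\cT$ is attained; by \eqref{optimal_tau_repk} this maximum equals $\hat t_k$, and any maximizer delivers the desired vector $\tvec^{\angleb{k}}=(t_1^{\angleb{k}},\ldots,t_{k-1}^{\angleb{k}},\hat t_k,t_{k+1}^{\angleb{k}},\ldots,t_K^{\angleb{k}})\in\cT$.
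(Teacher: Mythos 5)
Your proof is correct, but it takes a genuinely different route from the paper's. The paper argues by approximation and contradiction: it takes a sequence of vectors in $\cT$ whose $k$-th coordinates increase to $\hat{t}_k$, replaces the $k$-th coordinate by $\hat{t}_k$, and rules out the possibility that none of the modified vectors lies in $\cT$ by combining the monotonicity of $\tvec \mapsto \cR(\tvec)$ (\lemref{Rrep}) with the fact that $|\cR(\tvec)|\vee 1$ takes values in $\bN$, so that a $\liminf$ of a strict inequality stays strict — the same liminf-with-integer-values device that reappears in the proof of \propref{tau_correctness}. You instead establish a stronger structural fact: $\cT$ is compact, because $|\cR(\tvec)|$ is upper semi-continuous (the rejection conditions are weak inequalities), hence each $\FDPhat_k$ is lower semi-continuous and its sub-level sets are closed, while boundedness follows from $\cR(\tvec)\subseteq\cS_k$ and $\hat\pi_k>0$; attainment of $\hat t_k$ is then the extreme value theorem applied to the coordinate projection. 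Your argument is cleaner and more modular — it never needs monotonicity of $\cR$, and compactness of $\cT$ would simultaneously deliver attainment in every coordinate and could be reused elsewhere — whereas the paper's argument is more elementary (no semicontinuity machinery) and is stylistically matched to its later algorithmic proofs. One small caution: your dismissal of the degenerate case is terse — if $\cS_\ell=\emptyset$ for some $\ell\neq k$, then $\cT$ is unbounded in coordinate $\ell$ and so not compact, but since then $\cR(\tvec)=\emptyset$ for all $\tvec$ the constraints decouple into per-coordinate intervals and the claim is immediate; and if $\cS_k=\emptyset$ the maximum in \eqref{optimal_tau_repk} is not even well defined, an edge case the paper itself leaves unaddressed, so this does not count against you.
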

 
    For each $k \in [K]$, let $\tvec^{\angleb{k}}$ be as suggested by \claimref{t_hat_k_achieved}. It follows from the definition of $\hattvec$ via \eqref{optimal_tau_repk} that
    \begin{equation}\label{angle_inequality}
        \tvec^{\angleb{k}} \matleq \hattvec \equiv(\hat{t}_{1},\dots,\hat{t}_{K}).
    \end{equation}
    Thus, for each $k \in [K]$, we have that
    \begin{align}\label{FDPhatthatinequality}
        w_k \cdot q \overset{\tiny (a)}{\geq} \FDPhat_{k}(\tvec^{\angleb{k}}) 
        = \frac{ | \cS_k |  \cdot \hat{\pi}_k \cdot \hat{t}_{k}}{|\cR(\tvec^{\angleb{k}})| \vee 1} 
        \overset{\tiny (b)}{\geq}  \frac{|\cS_k | \cdot \hat{\pi}_{k} \cdot \hatt_k }{|\cR(\hattvec)| \vee 1} = \FDPhat_{k}(\hattvec) 
    \end{align}
    where $(a)$ is a result of $\tvec^{\angleb{k}} \in \cT$, and $(b)$ is a result of \lemref{Rrep} and  \eqref{angle_inequality}. Since $\FDPhat_{k}(\hattvec) \leq w_k \cdot q$ for all $k \in [K]$ as shown in \eqref{FDPhatthatinequality}, it must be the case that $\hattvec \in  \cT$ by \defref{set_of_feasible_thresholds}.
Now we furnish the proof of \claimref{t_hat_k_achieved}.

   \begin{proof}[Proof of the \claimref{t_hat_k_achieved}]
    
    
    By the definition of $\hat{t}_k$  in \eqref{optimal_tau_repk}, there must exist a sequence of vectors 
    \begin{equation} \label{seq_converge_to_thatk}
    {\bf t}^{(1)} \equiv (t_1^{(1)}, \dots, t_K^{(1)}), \quad  {\bf t}^{(2)} \equiv (t_1^{(2)}, \dots, t_K^{(2)}),  \quad  {\bf t}^{(3)} \equiv (t_1^{(3)}, \dots, t_K^{(3)}), \quad \dots
    \end{equation}
    in $\bR_{\geq 0}^K$ such that 
     \begin{enumerate}[(i)]
        \item the sequence of their $k$th elements $(t_k^{(\ell)})_{\ell = 1}^\infty$ is increasing and converges to $\hat{t}_k$, and 
     \item   for each $\ell \in \bN$,
     $
    \FDPhat_{k'}(\tvec^{(\ell)})\leq  w_{k'}\cdot  q \text{ for all } k' \in [K]
     $. 
   
     \end{enumerate}     
     Moreover, we also define  the sequence of vectors
         \begin{equation*} \label{to_be_taken_subseq}
    {\bf t}^{(1)}_{\hat{t}_{k}} \equiv (t_1^{(1)}, \dots, \hat{t}_{k}, \dots, t_K^{(1)}), \quad 
     {\bf t}^{(2)}_{\hat{t}_{k}} \equiv (t_1^{(2)}, \dots, \hat{t}_{k}, \dots,  t_K^{(2)}),  \quad 
      {\bf t}^{(3)}_{\hat{t}_{k}} \equiv (t_1^{(3)}, \dots, \hat{t}_{k}, \dots, t_K^{(3)}), \quad  \dots,
    \end{equation*}
    which replace the $k$th element of each vector in \eqref{seq_converge_to_thatk} by $\hat{t}_k$.

     Suppose towards a contradiction that \claimref{t_hat_k_achieved} is not true. In particular, this means that none of the vectors in the sequence $(    {\bf t}^{(\ell)}_{\hat{t}_{k}})_{\ell = 1}^\infty$ can be a member in $\cT$, which in turn implies that there exists $\kappa\in [K]$ and a subsequence $(\ell_g)_{g=1}^\infty$ of increasing natural numbers such that
 \begin{equation}\label{FDP_hat_contradict}
    \FDPhat_{\kappa}(\tvec^{(\ell_g)}_{\hat{t}_k})  
    > w_{\kappa} \cdot q, \quad \text{ for all } g\in \bN.
     \end{equation}
      In what follows, we will leverage the following facts:  Property $(i)$ above about the sequence $({\bf t}^{(\ell)})_{\ell = 1}^\infty$ implies that 
          \begin{equation} \label{ramification_of_prop_2}
   \tvec^{(\ell_g)} \matleq  \tvec^{(\ell_g)}_{\hat{t}_k}  \text{ for all }g \in \bN.
    \end{equation} and  property $(ii)$ above about the sequence $({\bf t}^{(\ell)})_{\ell = 1}^\infty$ implies
    \begin{equation} \label{ramification_of_prop_1}
    |\cS_{k}| \cdot \hat{\pi}_{k} \cdot t_k^{(\ell_g)} \leq w_k \cdot q  \cdot (|\cR(\tvec^{(\ell_g)})| \vee 1)\text{ and all } g \in \bN.
    \end{equation}
We will arrive at a contradiction in two cases:

\begin{itemize}      
      \item 
       Case 1: $\kappa \neq k$. In this case, the $\kappa$-th element of $\tvec^{(\ell_g)}_{\hat{t}_k}$ is $t^{(\ell_g)}_{\kappa}$,
which  implies
     \[
    \FDPhat_{\kappa}(\tvec^{(\ell_g)}_{\hat{t}_k}) = \frac{|\cS_{\kappa}| \cdot \hat{\pi}_{\kappa} \cdot t_{\kappa}^{(\ell_g)}}{|\cR(\tvec^{(\ell_g)}_{\hat{t}_k})| \vee 1} \leq \frac{|\cS_{\kappa}| \cdot \hat{\pi}_{\kappa} \cdot t_{\kappa}^{(\ell_g)}}{|\cR(\tvec^{(\ell_g)})| \vee 1} =     \FDPhat_{\kappa}(\tvec^{(\ell_g)})   \leq w_{\kappa} \cdot q
     \]
     and contradicts \eqref{FDP_hat_contradict}, 
     where the first inequality above comes from \eqref{ramification_of_prop_2} and \lemref{Rrep}, and the second inequality comes from property $(i)$ above about the sequence $({\bf t}^{(\ell)})_{\ell = 1}^\infty$.
     \item 
      Case 2: $\kappa = k$.  One can write
     \begin{equation} \label{contradiction_when_kappa_eq_k}
      w_k \cdot q<   \FDPhat_{\kappa}(\tvec^{(\ell_g)}_{\hat{t}_k}) 
         =  \FDPhat_k(\tvec^{(\ell_g)}_{\hat{t}_k})
          = \frac{|\cS_k| \cdot \hat{\pi}_k \cdot \hat{t}_k}{|\cR(\tvec^{(\ell_g)}_{\hat{t}_k})| \vee 1} 
         \leq   \frac{|\cS_k| \cdot \hat{\pi}_k \cdot \hat{t}_k}{|\cR(\tvec^{(\ell_g)})| \vee 1},
     \end{equation}
     where the first inequality comes from \eqref{FDP_hat_contradict} and $\kappa = k$, and the second inequality comes from \eqref{ramification_of_prop_2} and \lemref{Rrep}. Now we will compare \eqref{ramification_of_prop_1} and \eqref{contradiction_when_kappa_eq_k}. By taking ``$\liminf$''  on both sides of  \eqref{ramification_of_prop_1} and using property $(ii)$ of  the sequence $({\bf t}^{(\ell)})_{\ell = 1}^\infty$ above, we get
\begin{equation} \label{ramification_of_prop_1_liminf}
    |\cS_{k}| \cdot \hat{\pi}_{k} \cdot \hat{t}_k \leq w_k \cdot q  \cdot \liminf_{g \rightarrow \infty} ( |\cR(\tvec^{(\ell_g)})| \vee 1).
    \end{equation}
     Rearranging \eqref{contradiction_when_kappa_eq_k} as 
    $
     w_k\cdot q  \cdot (|\cR(\tvec^{(\ell_g)})| \vee 1) <  |\cS_k| \cdot \hat{\pi}_k \cdot \hat{t}_k
     $
      and taking ``$\liminf$''  on both sides,  we get
          \begin{equation}\label{contradiction_when_kappa_eq_k_liminf}
     w_k\cdot q  \cdot  \liminf_{g \rightarrow \infty} (|\cR(\tvec^{(\ell_g)})| \vee 1) <  |\cS_k| \cdot \hat{\pi}_k \cdot \hat{t}_k,
     \end{equation}
     where, importantly, \eqref{contradiction_when_kappa_eq_k_liminf} remains  a strict inequality because $|\cR(\tvec^{(\ell_g)})| \vee 1$ can only take discrete values in $\bN$. Clearly,  \eqref{ramification_of_prop_1_liminf} and \eqref{contradiction_when_kappa_eq_k_liminf}  contradict each other.
     
 \end{itemize}   
       \end{proof}

\subsection{Proof of \propref{tau_correctness} (Correctness and Termination  of \algref{tau})}\label{app:tau_correctness}

%

 We  first claim that
       \begin{equation} \label{each_iteration_gives_sth_bigger_than_tvec_hat}
    \hattvec^{(s)} = (\hatt^{(s)}_1,\dots,\hatt^{(s)}_K) \matgeq \hattvec \text{ for all }   s \geq 0,
    \end{equation}
    and    
       \begin{align}\label{FDPhat_step2}
     \FDPhat_{k} \left( (\hattvec^{(s)}_{[1:k]}, \hattvec^{(s-1)}_{[(k+1):K]}) \right) \leq w_k \cdot q \text{ for all } k \in [K] \text{ and } s \geq 1;
    \end{align}
they will later be used to finish proving \propref{tau_correctness}.

We  first prove \eqref{each_iteration_gives_sth_bigger_than_tvec_hat} by using an induction argument.  Since $\hattvec^{(0)} = (\infty,\dots,\infty)$ by initialization, we have 
    \begin{equation}\label{initial_inequality}
        \hattvec^{(0)} \matgeq \hattvec = (\hatt_1,\dots,\hatt_K). 
    \end{equation}
 Suppose, for a given $k \in [K]$ and $s \geq 1$, it is true that 
    \begin{equation}\label{inductive_assumption_rep}
        (\hattvec^{(s)}_{[1:(k-1)]},\hatt^{(s-1)}_{k}, \hattvec^{(s-1)}_{[(k+1):K]}) \matgeq \hattvec.
    \end{equation}
    Since $\hatt_k$ is the $k$th element of $\hattvec$, the inductive assumption above in \eqref{inductive_assumption_rep} implies that:
    \begin{align}\label{inductiveresult}
        (\hattvec^{(s)}_{[1:(k-1)]},\hatt_{k}, \hattvec^{(s-1)}_{[(k+1):K]}) \matgeq \hattvec.
    \end{align}
    Since $\hattvec \in \cT$ by \propref{optimal_thresholds}, it follows by \defref{set_of_feasible_thresholds} that 
    \begin{align}
        w_k \cdot q  &\geq \FDPhat_{k} \left( \hattvec \right) \nonumber \\
        &= \frac{|\cS_k| \cdot \hat{\pi}_k \cdot \hat{t}_{k}}{|\cR \left( \hattvec \right)| \vee 1} \nonumber \\
        &\overset{\tiny (a)}{\geq} \frac{|\cS_k| \cdot \hat{\pi}_k \cdot \hat{t}_{k} }{ \left|\cR \left( (\hattvec^{(s)}_{[1:(k-1)]},\hatt_{k}, \hattvec^{(s-1)}_{[(k+1):K]}) \right) \right| \vee 1} \nonumber \\
        &= \FDPhat_{k} \left( (\hattvec^{(s)}_{[1:(k-1)]},\hat{t}_{k}, \hattvec^{(s-1)}_{[(k+1):K]}) \right) \label{FDPhatrepkalg}
    \end{align}
    where (a) is consequence of \lemref{Rrep} and \eqref{inductiveresult}. Result \eqref{FDPhatrepkalg} and \eqref{inductive_assumption_rep} implies that 
    \[
    \hatt^{(s)}_{k}  \in  [\hatt_{k}, \hatt^{(s-1)}_{k} ]
    \]
     by the definition of $\hatt^{(s)}_{k}$ in  Step 2 of \algref{tau}.
    Since $\hatt^{(s)}_{k} \geq \hatt_{k}$, it follows from \eqref{inductiveresult} that
    \begin{equation}\label{otherinductiveresult}
        (\hattvec^{(s)}_{[1:k]}, \hattvec^{(s-1)}_{[(k+1):K]}) \matgeq \hattvec.
    \end{equation}
 Since \eqref{otherinductiveresult} holds by the inductive assumption \eqref{inductive_assumption_rep}, it follows that $\hattvec^{(s)} \matgeq \hattvec$ will be true provided that $\hattvec^{(s-1)} \matgeq \hattvec$ is true. Thus, it follows  from the initial case \eqref{initial_inequality} that  \eqref{each_iteration_gives_sth_bigger_than_tvec_hat} is true.
    
    \ \
    
  Next we prove \eqref{FDPhat_step2}. 
    Suppose, towards a contradiction, that  \eqref{FDPhat_step2} is not true for a given $s \geq 1$ and $k$. By the definition of $\hatt_k^{(s)}$ in Step 2 of \algref{tau},   there exists a sequence of non-decreasing numbers $(\hatt^{(s)}_{k, \ell})_{\ell =1}^\infty$ converging to $\hatt_k^{(s)}$ such that 
    \[
     \FDPhat_{k} \left( (\hattvec^{(s)}_{[1:(k-1)]}, \hatt^{(s)}_{k, \ell}, \hattvec^{(s-1)}_{[(k+1):K]}) \right) \leq w_k \cdot q \quad \text{ for all } \quad \ell \in \bN.
    \]
    In particular, the latter fact implies that, for all $\ell \in \bN$,
    \begin{align*}
  |\cS_k| \cdot \hat{\pi}_k \cdot \hatt^{(s)}_{k, \ell} 
  &\leq w_k \cdot q \cdot \Big|\cR\Big( (\hattvec^{(s)}_{[1:(k-1)]}, \hatt^{(s)}_{k, \ell}, \hattvec^{(s-1)}_{[(k+1):K]})\Big)\Big| \vee 1\\
  & \leq w_k \cdot q \cdot \Big|\cR\Big( (\hattvec^{(s)}_{[1:k]}, \hattvec^{(s-1)}_{[(k+1):K]})\Big)\Big| \vee 1,
    \end{align*}
    where the second inequality above comes from \lemref{Rrep} and that $\hatt^{(s)}_{k, \ell} \leq  \hatt_k^{(s)}$ for all $\ell \in \bN$.
   By taking ``$\liminf$" with respect to $\ell$ on both sides of the last display, we obtain 
   \[
   \lim_{\ell \rightarrow \infty}  |\cS_k| \cdot \hat{\pi}_k \cdot \hatt^{(s)}_{k, \ell} =   |\cS_k| \cdot \hat{\pi}_k \cdot \hatt^{(s)}_k \leq w_k \cdot q \cdot \Big|\cR\Big( (\hattvec^{(s)}_{[1:k]}, \hattvec^{(s-1)}_{[(k+1):K]})\Big)\Big| \vee 1,
   \] 
   where the last inequality  contradicts that \eqref{FDPhat_step2} is not true.

\begin{proof}[Proof of \propref{tau_correctness}$(i)$]
   The  proof borrows similar induction arguments made in Section 7.2 of \citet{pfilter2019}.
 If $\hattvec^{(s^*-1)} = \hattvec^{(s^*)}$ holds for some time iteration $s^*$, then \eqref{FDPhat_step2} implies that, for all $k \in [K]$, 
    \begin{align}
         w_k \cdot q &\geq \FDPhat_{k} \left( (\hattvec^{(s^*)}_{[1:k]}, \hattvec^{(s^*-1)}_{[(k+1):K]}) \right) \nonumber \\
        &= \FDPhat_{k} \left( (\hattvec^{(s^*)}_{[1:k]}, \hattvec^{(s^*)}_{[(k+1):K]}) \right) \nonumber \\
        &= \FDPhat_{k} \left( \hattvec^{(s^*)} \right).\label{FDPhat_step2s}
    \end{align}
    It follows from \eqref{FDPhat_step2s} and \defref{set_of_feasible_thresholds} that 
    \begin{equation}\label{hattvecsstar}
        \hattvec^{(s^*)} \in \cT.
    \end{equation}
      Since $\hattvec^{(s^*)} \matgeq \hattvec$  by \eqref{each_iteration_gives_sth_bigger_than_tvec_hat} and $\hattvec$ is the largest element-wise vector in $\cT$ (\propref{optimal_thresholds}), it must be the case that  $\hattvec^{(s^*)} = \hattvec$ by \eqref{hattvecsstar}, and \propref{tau_correctness}$(i)$ is proven.

\end{proof}
\begin{proof}[Proof of \propref{tau_correctness}$(ii)$]

By construction, $(\hattvec^{(s)})_{s \in \bN}$ is a non-increasing sequence lowered bounded by $\hattvec$ as proven in \eqref{each_iteration_gives_sth_bigger_than_tvec_hat}, that is, 
\[
\hattvec^{(0)} \matgeq \hattvec^{(1)}  \matgeq  \hattvec^{(2)}  \cdots
\]
Hence, we can define the vector of the component-wise limit
\[
\hattvec^{(\infty)} = ( \hatt^{(\infty)}_1, \dots, \hatt^{(\infty)}_K),
\] 
where $\hatt^{(\infty)}_k \equiv \lim_{s\rightarrow \infty}\hatt^{(s)}_k$ for each $k \in [K]$ and 
\begin{equation} \label{decreasing_convegence}
\hattvec^{(s)} \searrow \hattvec^{(\infty)} \text{ as } s \rightarrow \infty.
\end{equation}

Suppose, towards a contradiction, \algref{tau} doesn't terminate. This implies there exists a $\kappa \in [K]$ such that  the sequence $(\hatt^{(s)}_\kappa)_{s= 0}^\infty$ doesn't stabilize in the tail, i.e.
\begin{equation} \label{not_stabilize}
 \text{for all $N \in \bN$, there exists $N \leq s_1 \leq s_2$ such that $\hatt^{(s_1)}_\kappa \neq \hatt^{(s_2)}_\kappa$}. 
 \end{equation}
 Now, we define, for each $k\neq \kappa$,
\[
\epsilon_k \equiv  \min \Big\{ \big|p_i^{u_{ik}/\cG_k}/\nu_{ik} -   \hatt^{(\infty)}_k \big|: i \in [m]  \text{ and }|p_i^{u_{ik}/\cG_k}/\nu_{ik} -   \hatt^{(\infty)}_k| \neq 0 \Big\} \vee 0.001,
\]
which is the minimal among the absolute differences in $\{\big|p_i^{u_{ik}/\cG_k}/\nu_{ik} -   \hatt^{(\infty)}_k \big|\}_{i = 1}^m$  that are positive, or set to $0.001$  if no such positive absolute differences exist. We also let
\[
\epsilon \equiv \min_{ k \neq \kappa} \epsilon_k 
\] 
By the definition of $\cR(\hattvec)$ in \eqref{RrepEasy}, we first re-write
\begin{equation} \label{rewrite}
\cR\Big( (\hattvec^{(s)}_{[1:(\kappa-1)]},t_\kappa,\hattvec^{(s-1)}_{[(\kappa+1):K]}) \Big) = 
\cR(t_\kappa)
  \cap \cR\Big(\hattvec^{(s)}_{[1:(\kappa-1)]} \Big)  
   \cap \cR \Big(\hattvec^{(s-1)}_{[(\kappa+1):K]} \Big),
  \end{equation}
where 
\begin{align*}
\cR(t_\kappa) 
& \equiv \left\{ i \in \cS_k :  p^{\uiG}_{i}/\nu_{ik} \leq  t_\kappa \wedge( \lambda_k /\nu_{ik})\right\},\\
\cR\Big(\hattvec^{(s)}_{[1:(\kappa-1)]} \Big)   
&\equiv
 \bigcap_{k \in [1:(\kappa-1)]} \left\{ i \in \cS_k :  p^{u_{ik}/\cG_k}_{i} / \nu_{ik} \leq 
 \hatt_k^{(s)} 
  \wedge (\lambda_k  / \nu_{ik} ) \right\} \text{ and } \\
  \cR \Big(\hattvec^{(s-1)}_{[(\kappa+1):K]} \Big) 
&\equiv \bigcap_{k \in [(\kappa+1):K]} \left\{ i \in \cS_k :  p^{u_{ik}/\cG_k}_{i} / \nu_{ik} \leq 
 \hatt_k^{(s-1)} 
  \wedge (\lambda_k  / \nu_{ik} ) \right\}. 
\end{align*}

Since $\hattvec^{(s)} $ converges decreasingly to $\hattvec^{(\infty)} $ as in   \eqref{decreasing_convegence}, let $S \in \bN$ be such that
\[
 |\hatt_k^{(s)} - \hatt_k^{(\infty)}| \leq \epsilon/2 \text{ for all } k \neq \kappa \text{ and } s \geq S.
\] 
In particular, by how $\epsilon$ is defined, it must be the case that
\begin{equation} \label{rej_set_partially_stabilize}
\cR\Big(\hattvec^{(s)}_{[1:(\kappa-1)]} \Big)    = \cR\Big(\hattvec^{(\infty)}_{[1:(\kappa-1)]} \Big) \text{ and }
\cR \Big(\hattvec^{(s-1)}_{[(\kappa+1):K]} \Big) 
  = \cR \Big(\hattvec^{(\infty)}_{[(\kappa+1):K]} \Big)  \text{ for all } s \geq S+1,
\end{equation}
where $\cR(\hattvec^{(\infty)}_{[1:(\kappa-1)]})$ and $\cR (\hattvec^{(\infty)}_{[(\kappa+1):K]} )$ are defined analogously to $\cR(\hattvec^{(s)}_{[1:(\kappa-1)]} )$ and $\cR (\hattvec^{(s-1)}_{[(\kappa+1):K]} ) $, except with $\hatt^{(s)}_1, \dots, \hatt^{(s)}_{\kappa -1}, \hatt^{(s-1)}_{\kappa +1} \cdots \hatt^{(s-1)}_K $ respectively replaced by $\hatt^{(\infty)}_1, \dots, \hatt^{(\infty)}_{\kappa -1}, \hatt^{(\infty)}_{\kappa +1} \cdots \hatt^{(\infty)}_K $.
From \eqref{rej_set_partially_stabilize} and \eqref{rewrite}, we further get that
\[
 \quad \cR\Big( (\hattvec^{(s)}_{[1:(\kappa-1)]},t_\kappa,\hattvec^{(s-1)}_{[(\kappa+1):K]}) \Big) = \cR(t_\kappa) \cap  \cR\Big(\hattvec^{(\infty)}_{[1:(\kappa-1)]} \Big) \cap \cR \Big(\hattvec^{(\infty)}_{[(\kappa+1):K]} \Big)     \text{ for all  }s \geq  S +1,
\]
which in turn implies that
\begin{equation} \label{non_changing_step_function}
\Big|\cR\Big( (\hattvec^{(s)}_{[1:(\kappa-1)]},t_\kappa,\hattvec^{(s-1)}_{[(\kappa+1):K]}) \Big) \Big|
= \Big|\cR\Big( (\hattvec^{(\infty)}_{[1:(\kappa-1)]},t_\kappa,\hattvec^{(\infty)}_{[(\kappa+1):K]}) \Big) \Big| \text{ for all } s \geq S+1.
\end{equation}
From \eqref{non_changing_step_function}, 
we have
    \begin{equation*}
       \hatt^{(s)}_\kappa  =  \max \Bigg\{ t_\kappa \in [0, \hatt_\kappa^{(s-1)} ] :
        \frac{|\cS_\kappa| \cdot \hat{\pi}_\kappa \cdot t_\kappa}{\big|\cR\big( (\hattvec^{(\infty)}_{[1:(\kappa-1)]},t_\kappa,\hattvec^{(\infty)}_{[(\kappa+1):K]}) \big) \big|} \leq w_\kappa \cdot q \Bigg\} \text{ for all } s \geq S+1.
    \end{equation*}
    Since   the ratio
    \[
    \frac{|\cS_\kappa| \cdot \hat{\pi}_\kappa }{\big|\cR\big( (\hattvec^{(\infty)}_{[1:(\kappa-1)]},t_\kappa,\hattvec^{(\infty)}_{[(\kappa+1):K]}) \big) \big|}
    \]
    is the same for all $s \geq S+1$, it is apparent that $ \hatt^{(S+1)}_\kappa =  \hatt^{(S+2)}_\kappa =  \hatt^{(S+3)}_\kappa = \cdots$, and therefore we have contradicted \eqref{not_stabilize}.
\end{proof}

\subsection{Proof of \thmref{error_control} ($\FDR_{\rep}$ control under independence)} \label{app:main_theorem_pf}

Under cases $(i)$ and $(ii)$, it suffices to show that for any group $k \in [K]$ and feature $i \in [m]$,
\begin{equation} \label{single_hypothesis_control}
      \mathbb{E} \left[ \left. \frac{ \1 \left\{  i \in \cR(\hattvec) \right\} }{|\cR(\hattvec)| \vee1 } \right| \mathbf{P}_{\cdot (- \cG_k) }, \mathbf{P}_{(-\cS_k) \cG_k}, \mathbf{X} \right] \leq \frac{\nu_{ik} \cdot w_k \cdot q}{(|\cS_k| \vee 1) \cdot \pi_{k}  } \quad \text{if $H^{\uiG}_i$ is true}.
\end{equation}
Provided that \eqref{single_hypothesis_control} is true, control of $\FDR_{k}$ at level $w_k \cdot q$ under cases $(i)$ and $(ii)$ can be established as follows:
\begin{align*}
    \FDR_{k}(\cR(\hattvec)) &= \mathbb{E} \left[ \frac{ \sum_{i \in \cH_k }  \1 \left\{ i \in \cR(\hattvec) \right\} }{|\cR(\hattvec)| \vee 1 } \right] \\
    &\overset{\tiny (a)}{=} \mathbb{E} \left[ \frac{ \sum_{i \in \cS_k \cap \cH_k  }  \1 \left\{ i \in \cR(\hattvec) \right\}  }{|\cR(\hattvec)| \vee 1 } \right] \\
    &= \mathbb{E} \left[ \mathbb{E} \left[ \frac{ \sum_{i \in \cS_k \cap \cH_k  } \1 \left\{ i \in \cR(\hattvec) \right\} }{|\cR(\hattvec)| \vee 1 } \Bigg| \mathbf{P}_{\cdot (- \cG_k) }, \mathbf{P}_{(-\cS_k) \cG_k }, \mathbf{X} \right] \right] \\
    &\overset{\tiny (b)}{=} \mathbb{E} \left[ \sum_{i \in \cS_k \cap \cH_k  } \mathbb{E} \left[ \frac{  \1 \left\{ i \in \cR(\hattvec) \right\} }{|\cR(\hattvec)| \vee 1 } \Bigg| \mathbf{P}_{\cdot (- \cG_k) }, \mathbf{P}_{(-\cS_k) \cG_k}, \mathbf{X} \right] \right] \\
    &\overset{\tiny (c)}{\leq} \mathbb{E} \left[  \sum_{i \in \cS_k \cap \cH_k } \frac{ \nu_{ik} \cdot w_{k} \cdot q }{ |\cS_k| \cdot \pi_{k}  }  \right] \\
    &\overset{\tiny (d)}{=} \mathbb{E} \left[   \pi_{k} \cdot  \frac{ w_{k} \cdot q }{\pi_{k}  }  \right] \\
    &= w_k \cdot q 
    \end{align*}
    where $(a)$ is a result of $\cR(\hattvec) \subseteq \cS_k$, $(b)$ holds because conditioning on $\{ \mathbf{P}_{\cdot (- \cG_k) }, \mathbf{P}_{(-\cS_k) (\cG_k)}, \mathbf{X} \}$ fixes $\cS_k$ by \condref{indep}, $(c)$ holds by \eqref{single_hypothesis_control}, and $(d)$ holds by the definition of $\pi_{k}$ in \eqref{group_pi}. 
    
    $\FDR_{\rep}$ control at level $q$ for both case $(i)$ and $(ii)$ then follows from a subsequent application of \lemref{FDR_rep_control_from_partition}. Hence, we will focus on proving \eqref{single_hypothesis_control} under cases $(i)$ and $(ii)$, starting with case $(ii)$ for expositional purposes.  
 
      \begin{proof}[Proof for case (ii)]  The proof shares similarities with the proof of Lemma S1 $(iv)$ in the supplementary materials of \citet{Katsevich2023}.
    
The weighted null proportion estimator $\hat{\pi}_{k}$ under this case is in the form of \eqref{pi_hat_adaptive}. Let
    \begin{equation*}
        \hat{\pi}^{(- i)}_{k} \equiv \frac{\max_{\ell \in \cS_k} \nu_{\ell k} + \sum_{\ell \in \cS_k \backslash \{ i \} } \nu_{\ell k} \cdot \1 \{ p^{u_{\ell k} / \cG_k}_{\ell} > \lambda_k \} }{(1 - \lambda_k) \cdot |\cS_k|}
    \end{equation*}
    where $\hat{\pi}^{(-i)}_{k}$ is $\hat{\pi}_{k}$ but without the summand indexed by $i$ in the numerator. Thus,  
    \begin{equation}\label{pihatrepk0inequality}
        \hat{\pi}^{(-i)}_{k} \leq \hat{\pi}_{k}.
    \end{equation}
    
    By substituting
    \begin{equation*}
        \text{ $S = |\cS_k|$, $\lambda = \lambda_k$, $\nubf = (\nu_{\ell k})_{\ell \in \cS_k}$,  $\mathbf{p} = (p^{u_{\ell k}/\cG_k}_{\ell})_{\ell \in \cS_k}$, and $\cH_0 = \cH_k \cap \cS_k$}
    \end{equation*}  
    into \lemref{storey}, we have that
        \begin{align}\label{blanchard_result_rep}
        \mathbb{E} \left[ \left. \frac{1}{\hat{\pi}^{(-i)}_{k}} \right| i \in \cS_k, \mathbf{P}_{\cdot (- \cG_k)}, \mathbf{P}_{(-\cS_k) \cG_k}, \mathbf{X}  \right] \leq \frac{|\cS_k| \vee 1}{\sum_{\ell \in \cS_k \cap \cH_k } \nu_{\ell k}} = \frac{1}{\pi_{k}}.
    \end{align}
    \lemref{storey} is applicable above because, when $\{ i \in \cS_k, \mathbf{P}_{\cdot (- \cG_k)}, \mathbf{P}_{(-\cS_k) \cG_k}, \mathbf{X} \}$ is given,
    \begin{itemize}
    \item  both $S = |\cS_k|= |\cS_k| \vee 1$ and $\cH_0 = \cH_k \cap \cS_k$  are fixed by \condref{indep};
    \item $\nubf = (\nu_{\ell k})_{\ell \in \cS_k}$ is fixed by \condref{nu}$(a)$; and
    \item the local PC $p$-values in $\mathbf{p} = (p^{u_{\ell k}/\cG_k}_{\ell})_{\ell \in \cS_k}$ are conditionally valid as described in \lemref{supert}$(i)$.
    \end{itemize}

For a given fixed matrix $\tilde{\mathbf{P}}_{(-i)\cdot} \in [0,1]^{(m-1) \times n}$,  vector $\tilde{\mathbf{P}}_{i (-\cG_k)} \in [0,1]^{n - |\cG_k|}$, and covariate matrix $\tilde{\mathbf{X}}$, we define the  event
    \begin{equation*}
        \mathcal{E}_{ik}( \tilde{\mathbf{P}}_{(-i) \cdot}, \tilde{\mathbf{P}}_{i (-\cG_k)}, \tilde{\mathbf{X}}) \equiv \{  \text{$\mathbf{P}_{(-i) \cdot} = \tilde{\mathbf{P}}_{(-i) \cdot} \text{, } \mathbf{P}_{i (-\cG_k)} = \tilde{\mathbf{P}}_{i (-\cG_k)} \text{, } \mathbf{X} = \tilde{\mathbf{X}}$, and $i \in \cR(\hattvec)$}  \},
    \end{equation*}
    which shares similarities to the event defined in equation (S3) in the supplementary materials of \citet{Katsevich2023}. In words, $\mathcal{E}_{ik}( \tilde{\mathbf{P}}_{(-i) \cdot},\tilde{\mathbf{P}}_{i (-\cG_k)}, \tilde{\mathbf{X}})$ is the event that $\mathbf{P}$ takes values so that $i \in \cR(\hattvec)$ when $\mathbf{P}_{(-i) \cdot} = \tilde{\mathbf{P}}_{(-i) \cdot}$, $\mathbf{P}_{i (-\cG_k)} = \tilde{\mathbf{P}}_{i (-\cG_k)}$, and $\mathbf{X} = \tilde{\mathbf{X}}$. 
    \\~\\
  Now,  if $\mathbf{P}_{(-i) \cdot}$, $\mathbf{P}_{i (-\cG_k)}$, and $\mathbf{X}$ are given, then the stability of $\cR(\hattvec)$ from  \lemref{stable} will imply that
    \begin{equation} \label{equal_Rminusi_on_event}
        \cR(\hattvec) =  \widehat{\cR}^{(-i)}  \text{ on the event } \mathcal{E}_{ik}(\mathbf{P}_{(-i) \cdot},\mathbf{P}_{i (-\cG_k)},\mathbf{X})
    \end{equation}
    for some set $\widehat{\cR}^{(-i)}$ that depends only on $\mathbf{P}_{(-i) \cdot}$ and $\mathbf{X}$, and is such that $i  \in \widehat{\cR}^{(-i)} $. Thus, 
    
    \begin{align*}
        &\mathbb{E} \left[ \left. \frac{ \1 \left\{ i \in \cR(\hattvec) \right\} }{|\cR(\hattvec)| \vee 1} \right| \mathbf{P}_{\cdot ( -\cG_k)}, \mathbf{P}_{(-\cS_k) \cG_k},\mathbf{X}  \right] \\
        &\overset{\tiny (a)}{=} \mathbb{E} \left[ \left. \frac{ \1 \left\{ i \in \cR(\hattvec) \right\} }{|\cR(\hattvec)| \vee 1} \1 \{ i \in \cS_k \} \right| \mathbf{P}_{\cdot ( -\cG_k)}, \mathbf{P}_{(-\cS_k) \cG_k},\mathbf{X}  \right] \\
        &\overset{\tiny (b)}{=} \mathbb{E} \left[ \left. \frac{ \1 \left\{ i \in \cR(\hattvec) \right\} }{|\cR(\hattvec)| \vee 1}  \right| i \in \cS_k, \mathbf{P}_{\cdot ( -\cG_k)}, \mathbf{P}_{(-\cS_k)\cG_k},\mathbf{X}  \right] \\
        &= \mathbb{E} \left[   \mathbb{E} \left[ \left. \left. \frac{ \1 \left\{ i \in \cR(\hattvec) \right\} }{|\cR(\hattvec)|\vee 1 } \right| i \in \cS_k, \mathbf{P}_{(-i) \cdot}, \mathbf{P}_{i  (-\cG_k)}, \mathbf{X}  \right] \right|  i \in \cS_k, \mathbf{P}_{\cdot ( -\cG_k)}, \mathbf{P}_{(-\cS_k)\cG_k},\mathbf{X}  \right] \\
         &= \mathbb{E} \left[   \mathbb{E} \left[ \left. \left. \frac{ \1 \left\{ i \in \cR(\hattvec) \right\} }{|\cR(\hattvec)| \vee 1 } \1 \Big\{    \mathcal{E}_{ik}( \mathbf{P}_{(-i) \cdot}, \mathbf{P}_{i (-\cG_k)}, \mathbf{X}) \Big\}  \right| i \in \cS_k, \mathbf{P}_{(-i) \cdot}, \mathbf{P}_{i  (-\cG_k)}, \mathbf{X}  \right] \right| i \in \cS_k, \mathbf{P}_{\cdot ( -\cG_k)}, \mathbf{P}_{(-\cS_k)\cG_k},\mathbf{X}  \right] \\
       &\overset{\tiny (c)}{\leq} \mathbb{E} \left[   \mathbb{E} \left[ \left. \left.  \frac{ \1 \left\{ p^{\uiG}_i \leq \frac{(|\widehat{\cR}^{(-i)}| \vee 1) \cdot \nu_{ik} \cdot w_k \cdot q}{(|\cS_k| \vee 1) \cdot \hat{\pi}_{k} } \right\} }{|\widehat{\cR}^{(-i)}| \vee 1} 
       \right| i \in \cS_k, \mathbf{P}_{(-i ) \cdot}, \mathbf{P}_{i (-\cG_k)}, \mathbf{X}  \right] \right| i \in \cS_k, \mathbf{P}_{\cdot ( -\cG_k)}, \mathbf{P}_{(-\cS_k) \cG_k},\mathbf{X}  \right] \\
       &\overset{\tiny (d)}{\leq} \mathbb{E} \left[   \mathbb{E} \left[ \left. \left.  \frac{ \1 \left\{ p^{\uiG}_i \leq \frac{(|\widehat{\cR}^{(-i)}| \vee 1) \cdot \nu_{ik} \cdot w_k \cdot q}{(|\cS_k| \vee 1) \cdot \hat{\pi}_{k}^{(-i)} } \right\} }{|\widehat{\cR}^{(-i)}| \vee 1} 
       \right| i \in \cS_k, \mathbf{P}_{(-i ) \cdot}, \mathbf{P}_{i (-\cG_k)}, \mathbf{X}  \right] \right| i \in \cS_k, \mathbf{P}_{\cdot ( -\cG_k)}, \mathbf{P}_{(-\cS_k) \cG_k},\mathbf{X}  \right] \\
        &\overset{\tiny (e)}{=} \mathbb{E} \left[ \left. \frac{ \Pr \left( \left. p^{\uiG}_i \leq \frac{(|\widehat{\cR}^{(-i)}| \vee 1) \cdot \nu_{ik} \cdot w_k \cdot q}{(|\cS_k| \vee 1) \cdot \hat{\pi}^{(-i)}_{k}} \right| i \in \cS_k, \mathbf{P}_{(-i) \cdot}, \mathbf{P}_{i  (-\cG_k)} , \mathbf{X} \right) }{|\widehat{\cR}^{(-i)}| \vee 1} \right| i \in \cS_k, \mathbf{P}_{\cdot ( -\cG_k)}, \mathbf{P}_{(-\cS_k) \cG_k},\mathbf{X} \right] \\
        &\overset{\tiny (f)}{\leq}  \mathbb{E} \left[ \left. \frac{ \left( \frac{(|\widehat{\cR}^{(-i)}| \vee 1) \cdot \nu_{ik} \cdot w_k \cdot q}{(|\cS_k| \vee 1)\cdot \hat{\pi}^{(-i)}_{k}} \right)  }{|\widehat{\cR}^{(-i)}| \vee 1 } \right| i \in \cS_k,  \mathbf{P}_{\cdot ( -\cG_k)}, \mathbf{P}_{(-\cS_k) \cG_k},\mathbf{X}  \right] \\
        &\leq \frac{\nu_{ik} \cdot w_k \cdot q}{|\cS_k| \vee 1} \cdot  \mathbb{E} \left[ \left. \frac{1}{\hat{\pi}^{(-i)}_{k}} \right| i \in \cS_k, \mathbf{P}_{\cdot ( -\cG_k)}, \mathbf{P}_{(-\cS_k) \cG_k},\mathbf{X}  \right] \\
        &\overset{\tiny (g)}{\leq} \frac{\nu_{ik} \cdot w_k \cdot q}{(|\cS_k| \vee 1)\cdot \pi_{k}  } 
    \end{align*}
    where 
    \begin{itemize}
    \item $(a)$ is a result of $\cR(\hattvec) \subseteq \cS_k$;
    \item $(b)$ is a result of $\cS_k$ being fixed when $\{ \mathbf{P}_{\cdot ( -\cG_k)}, \mathbf{P}_{(-\cS_k)\cG_k},\mathbf{X}  \}$ is given, by \condref{indep};
    \item $(c)$ is a result of \lemref{gw_sc} and \eqref{equal_Rminusi_on_event};
    \item $(d)$ is a result of \eqref{pihatrepk0inequality};
    \item $(e)$ is a result of $\widehat{\cR}^{(-i)}$ only depending on $\mathbf{P}_{(-i) \cdot}$ and $\mathbf{X}$;
    \item $(f)$ is a result of \lemref{supert}$(ii)$ because conditioning on $\{ i \in \cS_k, \mathbf{P}_{(-i) \cdot}, \mathbf{P}_{i (-\cG_k)}, \mathbf{X} \}$ fixes $\cS_k$ by \condref{indep}, fixes $\nu_{ik}$ by \condref{nu}$(a)$, and fixes both $\hat{\pi}^{(-i)}_{k}$ and $\widehat{\cR}^{(-i)}$ by their definitions; and
    \item $(g)$ is a result of \eqref{blanchard_result_rep}.
    \end{itemize}
    \end{proof}
 
  \begin{proof}[Proof for case (i)]
 The proof is nearly identical to the proof for case $(ii)$ above. The difference is that we set $\hat{\pi}_{k} = \hat{\pi}^{(-i)}_{k} = 1$ and verify that they satisfy properties \eqref{pihatrepk0inequality} and \eqref{blanchard_result_rep}. Property \eqref{pihatrepk0inequality} apparently holds.  Furthermore, property \eqref{blanchard_result_rep} will hold because
    \begin{align}
        \mathbb{E} \left[ \left. \frac{1}{\hat{\pi}^{(-i)}_{k}} \right| i \in \cS_k, \mathbf{P}_{\cdot (- \cG_k)}, \mathbf{P}_{ (-\cS_k) \cG_k}, \mathbf{X}  \right] = \mathbb{E} \left[ \left. \frac{1}{1} \right| i \in \cS_k, \mathbf{P}_{\cdot (- \cG_k)}, \mathbf{P}_{(-\cS_k) \cG_k}, \mathbf{X}  \right] = 1 \leq  \frac{1}{\pi_{k}}.
    \end{align}
  \end{proof}

\subsection{Proof of \thmref{error_control_dep} ($\FDR_{\rep}$ control under arbitrary dependence)} \label{app:main_theorem_dep_pf}

It suffices to show that for any group $k \in [K]$ and feature $i \in [m]$,
\begin{equation} \label{single_hypothesis_control_dep}
      \mathbb{E} \left[ \left. \frac{ \1 \left\{  i \in \cR(\hattvec) \right\} }{|\cR(\hattvec)| \vee1 } \right| \mathbf{P}_{\cdot (- \cG_k) }, \mathbf{X} \right] \leq \frac{\nu_{ik} \cdot w_k \cdot q}{(|\cS_k| \vee 1)\cdot \pi_{k}  } \quad \text{if $H^{\uiG}_i$ is true}.
\end{equation}

Provided that \eqref{single_hypothesis_control_dep} is true, control of $\FDR_{k}$ at level $w_k \cdot q$ can be established as follows:
\begin{align*}
    \FDR_{k}(\cR(\hattvec)) &= \mathbb{E} \left[ \frac{ \sum_{i \in \cH_k }  \1 \left\{ i \in \cR(\hattvec) \right\} }{|\cR(\hattvec)| \vee 1 } \right] \\
    &\overset{\tiny (a)}{=} \mathbb{E} \left[ \frac{ \sum_{i \in \cS_k \cap \cH_k  }  \1 \left\{ i \in \cR(\hattvec) \right\}  }{|\cR(\hattvec)| \vee 1 } \right] \\
    &\overset{\tiny (b)}{=} \mathbb{E} \left[ \sum_{i \in \cS_k \cap \cH_k  } \mathbb{E} \left[ \frac{  \1 \left\{ i \in \cR(\hattvec) \right\} }{|\cR(\hattvec)| \vee 1 } \Bigg| \mathbf{P}_{\cdot (- \cG_k) }, \mathbf{X} \right] \right] \\
    &\overset{\tiny (c)}{\leq} \mathbb{E} \left[  \sum_{i \in \cS_k \cap \cH_k } \frac{ \nu_{ik} \cdot w_{k} \cdot q }{ (|\cS_k| \vee 1 )\cdot \pi_{k}  }  \right] \\
    &\overset{\tiny (d)}{=} \mathbb{E} \left[   \pi_{k} \cdot  \frac{ w_{k} \cdot q }{\pi_{k}  }  \right] \\
    &= w_k \cdot q 
    \end{align*}
    where $(a)$ is a result of $\cR(\hattvec) \subseteq \cS_k$; $(b)$ holds because conditioning on $\{ \mathbf{P}_{\cdot (- \cG_k) },  \mathbf{X} \}$ fixes $\cS_k$ by \condref{indep}; $(c)$ holds by \eqref{single_hypothesis_control_dep}; and $(d)$ holds by the definition of $\pi_{k}$ in \eqref{group_pi}. 
    
    $\FDR_{\rep}$ control at level $q$ for both cases then follows from a subsequent application of \lemref{FDR_rep_control_from_partition}. Hence, we will focus on proving \eqref{single_hypothesis_control_dep}. 
    \\~\\
   \textit{Proof of \eqref{single_hypothesis_control_dep}.} By $\cR(\hattvec) \subseteq \cS_k$ and \condref{indep}, we have 
    \begin{align*}
         \mathbb{E} \left[ \left. \frac{ \1 \left\{ i \in \cR(\hattvec) \right\} }{|\cR(\hat{\tvec})| \vee 1 } \right| \mathbf{P}_{\cdot (-\cG_k)}, \mathbf{X}  \right]  
           &= \mathbb{E} \left[ \left. \frac{ \1 \left\{ i \in \cR(\hattvec) \right\} }{|\cR(\hat{\tvec})| \vee 1 }  \right|  i \in \cS_k , \mathbf{P}_{\cdot (-\cG_k)}, \mathbf{X}  \right] \\
         &\overset{\tiny (a)}{\leq} \mathbb{E} \left[\left.  \frac{ \1 \Big\{ p^{\uiG}_i \leq \frac{  (|\cR (\hat{{\bf t}} ) |\vee 1) \cdot \nu_{ik} \cdot   w_k \cdot q}{ \left( \sum_{\ell \in [|\cS_k| \vee 1]} \frac{1}{\ell} \right) \cdot (|\cS_k| \vee 1) } \Big\} }{|\cR(\hattvec)| \vee 1} \right|  i \in \cS_k , \mathbf{P}_{\cdot (-\cG_k) }, \mathbf{X}  \right] \\
         &\overset{\tiny (b)}{\leq} \frac{\nu_{ik} \cdot w_k \cdot q}{|\cS_k| \vee 1 } \\
         &\leq \frac{\nu_{ik} \cdot w_k \cdot q}{(|\cS_k| \vee 1) \cdot \pi_{k} } 
    \end{align*}
where $(a)$ is a result of \lemref{gw_sc} with $\hat{\pi}_{k} =  \sum_{\ell \in [|\cS_{k}| \vee 1]}\ell^{-1} $, and $(b)$ is a result of substituting 
    \begin{center}
        $S = |\cS_k|$, $\mathbf{p} = (p^{u_{\ell k}/ \cG_k}_\ell )_{\ell \in \cS_k}$, $\cH_0 = \cS_k \cap \cH_k$, $V = |\cR(\hattvec)| \vee1$, and $c = \frac{\nu_{ik} \cdot w_k \cdot q}{|\cS_k| \vee 1}$
    \end{center}
    into \lemref{shape}. \lemref{shape} is applicable above because, when $\{ i \in \cS_k, \mathbf{P}_{\cdot (-\cG_k)},\mathbf{X} \}$ is given,  
	\begin{itemize}
			\item both $S = |\cS_k|$ and $\cH_0 = \cS_k \cap \cH_k$ are fixed by \condref{indep};
			\item the local PC $p$-values in $\mathbf{p} = (p^{u_{\ell k}/ \cG_k}_\ell )_{\ell \in \cS_k}$ is valid conditional on $ i \in \cS_k , \mathbf{P}_{\cdot (-\cG_k) }, \mathbf{X}$ as in \lemref{supertX}; and
			\item $c = \frac{\nu_{ik} \cdot w_k \cdot q}{|\cS_k| }$ is fixed since $\nu_{ik}$ is fixed by  \condref{nu}$(b)$.
    	\end{itemize}

\section{Extended results from \secref{deprived_power_sim}}\label{app:deprived_power}
The simulation studies in this section extend  those performed in \secref{deprived_power_sim} by considering $n \in \{ 2, 3, 4, 5, 6, 7, 8 \}$ studies and $m \in \{ 1, 10, 20, \dots, 1000 \}$ hypotheses, where all the base null hypotheses $H_{ij}$ are \emph{false} with  highly right skewed  $p$-values generated as 
\[
p_{ij} \overset{\tiny \text{iid}}{\sim} \text{Beta}(0.26,7) \text{ for each } (i, j) \in [m] \times [n],
\]
as in \secref{deprived_power_sim}. 

\figref{stouffer_deprived_power} extends \figref{PCdemo} and shows the power of the BH procedure applied to  Stouffer-GBHPC $p$-values. For $m = 1$ and any given $n$, we observe a sharp drop in power when moving from $u = n - 1$ to $u = n$. For example, when $m = 1$ and $n = 8$, the power is $0.72$ at $u = 7$ and only $0.142$ at $u = 8$. Moreover, for any fixed $n$, the power at $u = n$ rapidly declines toward zero as $m$ increases, whereas for $u < n$, the power stabilizes around a value well above zero. For instance, when $n = 8$, the power at $u = 8$ drops from $0.142$ to $0$ as $m$ increases from $1$ to $1000$, while at $u = 7$, the power decreases from $0.72$ and converges to $0.45$.

We repeat this experiment using PC $p$-values constructed via the generalized Benjamini–Heller formulation in \eqref{u_combine_p}, with $f_\cJ$ taken to be Fisher’s combining function. The yields a PC $p$-values with the  form
\begin{equation*}
p^{u/[n]}_i = 1 - F_{2(n-u+1)} \left( -2 \sum^n_{j = u} \log(p_{i(j)}) \right),
\end{equation*}
where $p_{i(1)} \leq \cdots \leq  p_{i(n)}$ are the order statistics of the base $p$-values for feature $i$. The results shown in \figref{fisher_deprived_power} exhibit a similar phenomenon to what was observed earlier.
Overall, Fisher-GBHPC $p$-values are generally less powerful than their Stouffer-GBHPC counterparts, particularly as $n$ increases. 

\begin{figure}[H]
    \centering
    \includegraphics[width=0.9\textwidth]{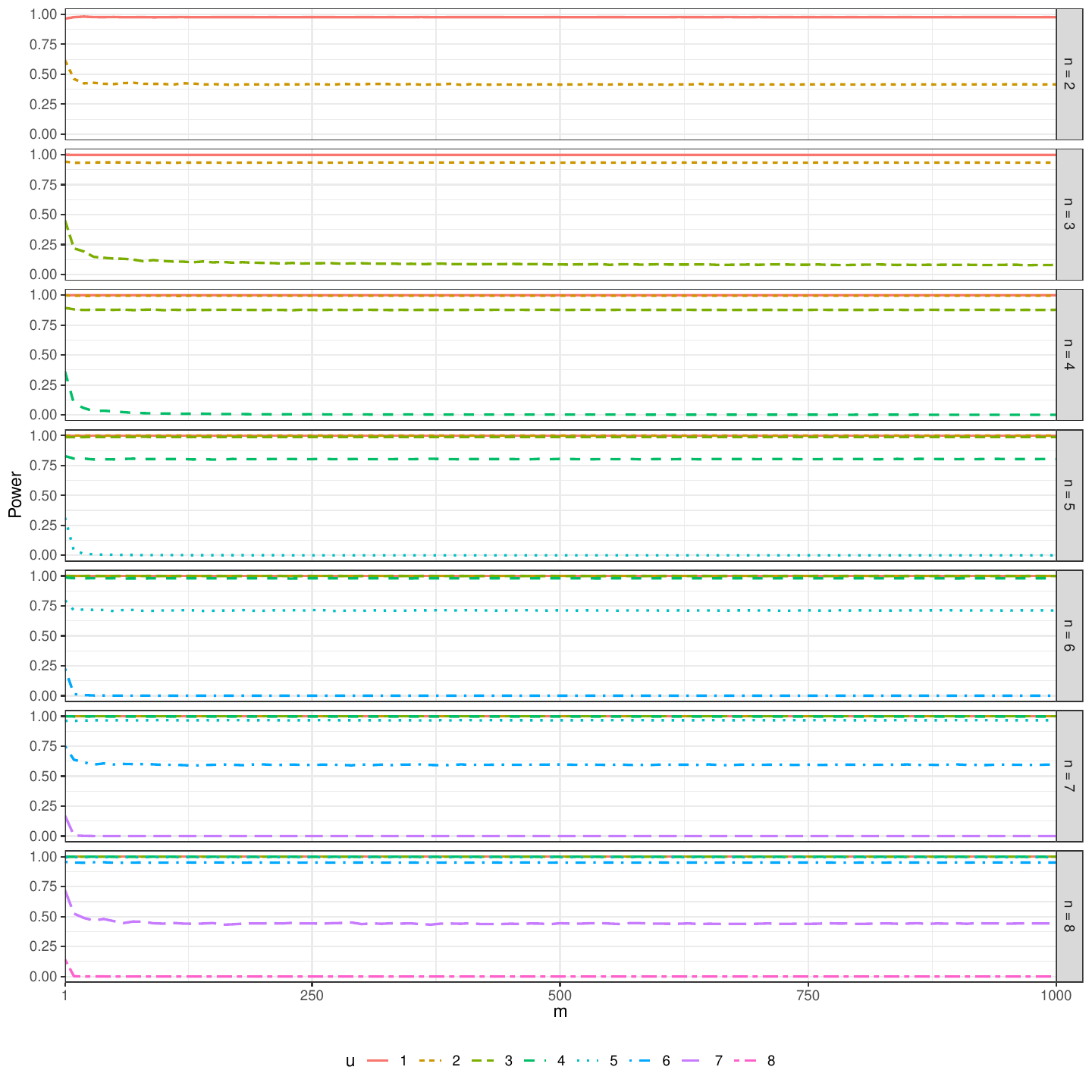}
    \caption{Power of the BH procedure (with FDR target 0.05) applied to $m = 1,10,20,\dots,1000$ Stouffer-GBHPC $p$-values under replicability levels $u = 1,\dots,n$ for a simulation experiment with $n = 2,3,\dots,8$ studies.}
    \label{fig:stouffer_deprived_power}
\end{figure}

\begin{figure}[H]
    \centering
    \includegraphics[width=0.9\textwidth]{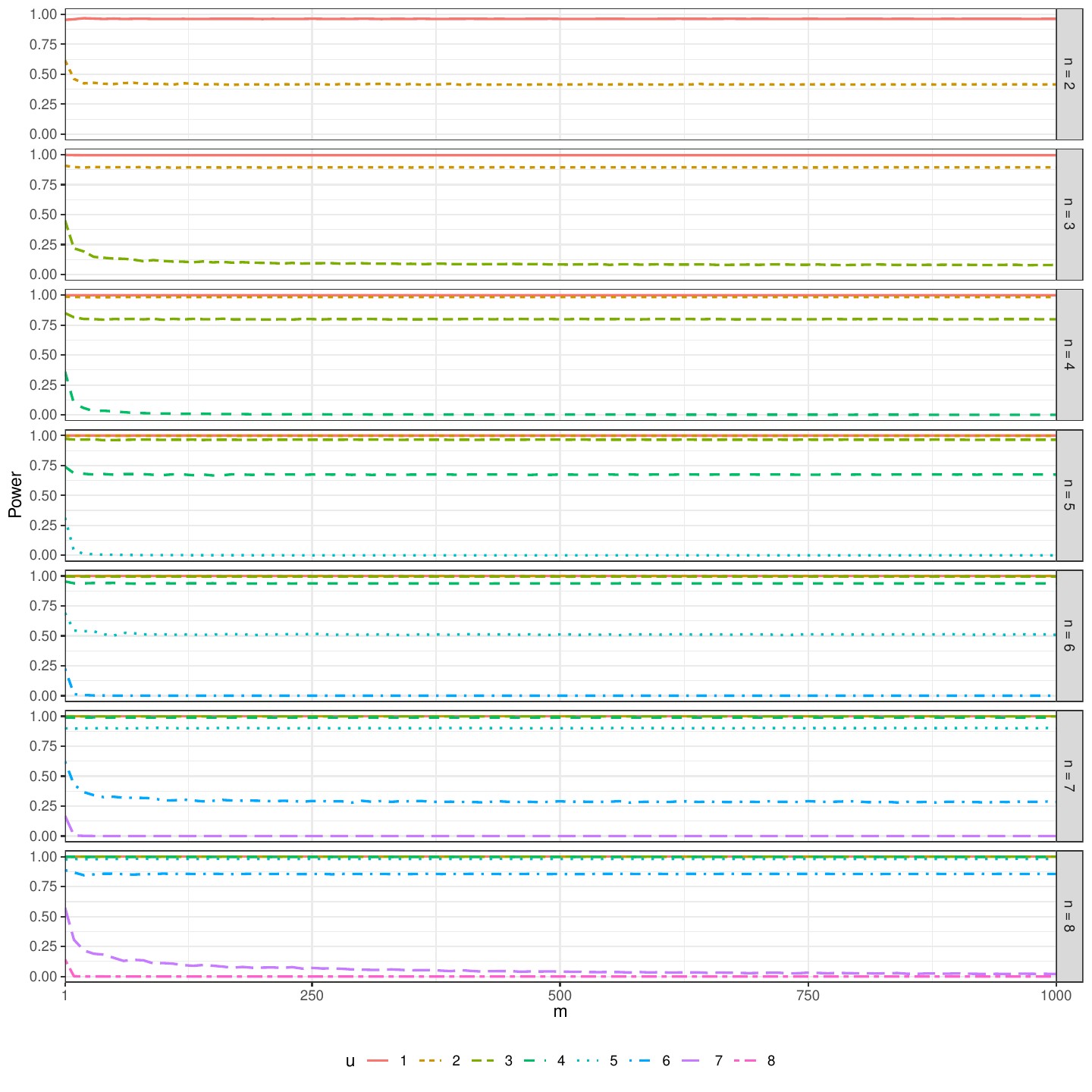}
    \caption{Power of the BH procedure (with FDR target 0.05) applied to $m = 1,10,20,\dots,1000$ Fisher-GBHPC $p$-values under replicability levels $u = 1,\dots,n$ for a simulation experiment with $n = 2,3,\dots,8$ studies.}
    \label{fig:fisher_deprived_power}
\end{figure}

\section{Additional simulations}\label{app:additional_sim}
Below, we introduce several additional $p$-value-based methods that will be evaluated alongside those in \secref{sim_methods} in the simulation studies presented in \appref{full_main_sim}--\ref{app:depueqn}. 
\begin{enumerate}[(a)]
  \setcounter{enumi}{9} 
\item \textbf{Adaptive-BH:} The plug-in adaptive step-up procedure described in \citet[Definition 8]{Blanchard2009} applied to the PC $p$-values in \eqref{maxPCp}, using the modified Storey's estimator \citep[Sec. 3.2]{Blanchard2009}, with tuning parameter $\lambda = 0.5$, for the reciprocated null proportion $m/|\cH^{{}_{u/[n]}}|$.
\item \textbf{BY:} The BY procedure by \citet{Benjamini2001} applied to the PC $p$-values in \eqref{maxPCp}. The BY procedure is similar to the BH procedure but multiplies the input target FDR level by a factor of $1 / \sum^m_{i = 1} \frac{1}{i}$. 
\item \textbf{Adaptive-CoFilter-BH:} The CoFilter procedure described in Algorithm 2.1 of \citet{Dickhaus2024} with the adaptive-BH procedure, described in item (j) above, implemented in step $(iii)$ of the algorithm.

\item \textbf{Inflated-\methodnamens:} An implementation of \algref{PF} similar to that of item $(a)$ in \secref{sim_methods}, but with:  \begin{enumerate}[1)] \item the selection rule
\begin{equation*}
    \cS_k = \bigcap_{\ell \in [K]\backslash \{k\} } \left\{ i \in [m] : p^{u_{i \ell}/\cG_{\ell}}_i \leq \left( w_{\ell} \cdot q / \left( \sum^m_{i = 1} \frac{1}{i} \right) \right) \wedge \lambda_{\ell}  \right\} ,
\end{equation*}
 \item local PC weights constructed according to \appref{localPCweights_b}, \item weighted null proportion estimators taken in the form of \eqref{misnomer}, and \item tuning parameter $\lambda_k = 1$  for each $k \in [K]$. \end{enumerate}
\item \textbf{Non-adaptive-\methodnamens:}  An implementation of \algref{PF} similar to that of item (a) in \secref{sim_methods}, but with the weighted null proportion estimator $\hat{\pi}_k = 1$ and tuning parameter $\lambda_k = 1$  for each group $k \in [K]$. 
 \item \textbf{Oracle:} A computationally simpler substitute for the the optimal $\FDR_{\rep}$ procedure. The details of its implementation are given in \appref{oracle}. While the oracle procedure is not implementable in real-world scenarios, as it requires knowledge of the density function of the base $p$-values, it serves as an interesting theoretical benchmark. 
\end{enumerate}

To avoid confusion, we clarify that the  qualifier ``adaptive'' as in the Adaptive-BH, Adaptive-CoFilter-BH and Non-adapative-ParFilter above refers to adaptiveness to the post-selection weighted proportions of local PC nulls in \eqref{group_pi} (as opposed to adaptiveness to the covariates ${\bf X}$).


\subsection{Simulations in \secref{sim} with additional methods}\label{app:full_main_sim}
We incorporate the  additional methods described above into the simulation study conducted in \secref{sim}. The $\FDR_{\rep}$ control guarantees provided by each of these additional methods under this simulation study, along with whether they are covariate-adaptive, are summarized in \tabref{method_summary_additional}. The empirical $\FDR_{\rep}$ and $\text{TPR}_{\rep}$ results are presented in \figref{Indepueqnfull}.

\begin{table}[H]
\small
\begin{tabular}{|l|l|l|}
\hline
Method                                               & Covariate-adaptive? & $\FDR_{\rep}$ control \\ \hline
{Adaptive-BH}               & No             & Finite (Corollary 13, \citet{Blanchard2009})  \\
{BY}                      & No              & Finite (Theorem 1.3, \citet{Benjamini2001})         \\
{Adaptive-CoFilter-BH}               & No             & Asymptotic (Proposition 4.2, \citet{Dickhaus2024})  \\
{Inflated-\methodnamens}                & Yes               & Finite (\thmref{error_control_dep})          \\
{Non-adaptive-\methodnamens:}                       & Yes              & Finite (\thmref{error_control}$(i)$) \\
{Oracle}                     & Yes              & Finite (\appref{oracle})                   \\
 \hline
\end{tabular}
\caption{A summary of the additional methods compared in the simulations of \appref{full_main_sim}. Specifically, whether the method is covariate-adaptive and the type of $\FDR_{\rep}$ control it possesses, i.e. finite-sample or asymptotic as $m \longrightarrow \infty$.}\label{tab:method_summary_additional}
\end{table}

When the covariates are uninformative ($\gamma = 0$), AdaFilter-BH (\AdaFilterBHCol line) outpowered the oracle (\OracleCol line) across all settings of $n$ and $\xi$. This may seem surprising, but recall that the oracle is a computationally simpler substitute for the actual optimal $\FDR_{\rep}$ procedure. AdaFilter-BH also only guarantees asymptotic $\FDR_{\rep}$ control, so its gain in power over the oracle comes at the cost of potentially violating $\FDR_{\rep}$ in finite samples  (e.g., when $(n,\gamma_1) = (5, 0)$). 

When the covariates are informative ($\gamma = 1.0$ or $1.5$), the oracle method is by far the most powerful, particularly as $n$ increases. The Adaptive-BH procedure (\AdaptiveBHCol line) was more conservative than the BH procedure, suggesting that the modified Storey estimator may not be effective when applied to PC $p$-values. The Adaptive-CoFilter-BH (\AdaptiveCoFilterBHCol line) tends to exhibit slightly greater power than the CoFilter-BH (\CoFilterBHCol line).  The BY procedure (\BYCol line) generally exhibited the worst power among all the compared methods.

The Inflated-\methodname procedure (\InflatedParFilterCol line) exhibited consistently low power across all settings. This is  unsurprising, given that it uses conservative weighted null proportion estimators designed to account for dependence among base $p$-values, despite the fact that the base $p$-values here are independent. Under mildly to highly informative covariates ($\gamma = 1.0$ and $\gamma = 1.5$, respectively), the Non-adaptive-\methodname (\NonAdaptiveParFilterCol line) was not competitive with the original \methodname implementation (\ParFilterCol line). When the covariates are non-informative ($\gamma = 0$), the \methodname and Non-adaptive-\methodname performed similarly, with the latter showing slightly better power. This suggests that, for \algref{PF}, the data-adaptive weighted null proportion estimator provided in \eqref{pi_hat_adaptive} is generally preferable to setting $\hat{\pi}_k = 1$.

\subsection{Simulations for  the case of $u < n$}\label{app:uleqnsim}
We repeat the simulations described in \appref{full_main_sim} but with the following settings for $u$ and $n$:
\begin{equation*}
    (u,n) \in \{ (2,3),(3,4),(3,5),(4,5) \}.
\end{equation*}
To accomodate for these new settings of $u$ and $n$, let the PC $p$-value for each feature $i \in [5000]$ be
\begin{equation*}
p^{u/[n]}_i \equiv 1 - \Phi \left( \frac{\sum_{j = u}^n \Phi^{-1}(1 - p_{i(j)})}{\sqrt{n - u + 1}} \right),
\end{equation*}
i.e., a GBHPC $p$-value (\defref{gbhpc}) where each $f_{\cJ}$ is taken as Stouffer's combining function in $n - u + 1$ arguments. Moreover, let $\mathcal{U}(u)$ denote the set of all valid choices for the local replicability levels $\mathbf{u}_i = (u_{i1},u_{i2})$ when $u$ is the replicability level and $K = 2$ is the number of groups. For instance,
\begin{equation*}
\mathcal{U}(4) = \{ (1,3), (3,1), (2,2) \}.
\end{equation*}
We reimplement the \methodname procedure described in item $(a)$ of \secref{sim_methods}, but under a new testing configuration where $K=2$, $\{ \cG_k, w_k \}_{k \in [2]}$ is as specified in \eqref{uleqnconfig}, and $\mathbf{u}_i$ is randomly sampled uniformly from $\mathcal{U}(u)$ for each $i \in [m]$. Similarly, we reimplement the No-Covar-ParFilter, Inflated-\methodname and Non-adaptive-\methodname procedures  under the same testing configuration. All other methods from \appref{full_main_sim} are implemented as originally described. 

The  empirical $\FDR_{\rep}$ and $\text{TPR}_{\rep}$ results of this new set of simulations are presented in \figref{Indepuleqn}. When the covariates are non-informative ($\gamma_1 = 0$), we see that AdaFilter-BH (\AdaFilterBHCol line) outpowers the remaining methods except the oracle (\OracleCol line) for all settings of $(u,n)$ and $\xi$. When the covariates are mildly ($\gamma_1 = 1.0$) or most ($\gamma_1 = 1.5$) informative, CAMT (\CAMTCol line) and AdaPT (\AdaPTCol line) display  high levels of power, especially for $(u,n) \in \{ (3,4), (3,5), (4,5) \}$. In comparison, the original \methodname (\ParFilterCol line), generally the most powerful among other  implementations of \algref{PF},   exhibits weaker power.
Even when the covariates are the most informative ($\gamma_1 = 1.5$), its power remained visibly lower than that of AdaFilter-BH, a method that does not adapt to covariates.

These  results  suggest that, at present, our various implementations of \algref{PF} may not be competitive for replicability analyses when $u < n$. This is likely due to the local replicability levels $\mathbf{u}_i$ being assigned randomly rather than derived from covariates. This randomness leads to a large number of imbalanced features (\defref{imbalance}), which in turn diminishes the power of \algref{PF}. Developing a general algorithm for constructing these levels using covariate information when $u < n$ would be a valuable direction for future research, though challenging due to the wide variability in the type, dimension, and informativeness of covariates across scientific contexts; see our discussion in \appref{uleqn}. In the meantime, CAMT and AdaPT appear to be  competitive choices  when side information is available in the $u < n$ setting.

\subsection{Simulations under negative AR(1) correlation}\label{app:depueqn}
We repeat the simulations in \secref{sim},  with the settings of $m$ and $(u,n)$, and the way we generate $\mathbf{X}$ and $\{ \1 \{ \text{$H_{ij}$ is true} \} \}_{(i,j) \in [m] \times [n]}$ remaining exactly the same as in \secref{sim}. However, we inject $\mathbf{P}_{\cdot j}$ with autoregressive dependence of order 1 (AR(1) correlation) across features, for each study $j \in [n]$. 
We first generate $z$-values $(z_{1j}, \dots, z_{mj})$ with strong negative AR(1) correlation:
\begin{equation*}
    z_{ij} \sim \text{N}(0,1) \text{ for each $i \in [m]$} 
\end{equation*}
where
\begin{equation*}
\text{cor}(z_{ij}, z_{i'j}) = (-0.8)^{|i - i'|} \text{ for any $i,i' \in [m]$;}
\end{equation*}
across studies, the $z$-values are independent.
From the generated $z$-values, we create base $p$-values by letting:
\begin{equation*}
    p_{ij} = \begin{cases}
        \Phi^{-1}(z_{ij}) &\text{if $H_{ij}$ is true}, \\
        F^{-1}_{\text{Beta}}(z_{ij}; 1 - \xi, 7) &\text{if $H_{ij}$ is false,}
    \end{cases}
    \quad \text{for each $(i,j) \in [m] \times [n]$}
\end{equation*}
where $\Phi^{-1}(\cdot)$ is an inverse standard normal distribution and $F^{-1}_{\text{Beta}}(\cdot; \alpha_1 , \alpha_2)$ is an inverse beta distribution function with shape parameters $\alpha_1$ and $\alpha_2$. The range of the signal strength parameter is the same as in \secref{sim}, i.e., $\xi \in \{ 0.72,0.74,0.76,0.78,0.80,0.82 \}$. The base $p$-values defined above have the same marginal distribution as the base $p$-values defined in \secref{sim}. But unlike \secref{sim}, the base $p$-values generated here are negatively dependent within each study.

The methods explored in \appref{full_main_sim} are reimplemented in this simulation, except for the oracle due to its computational complexity\footnote{It requires evaluating a multivariate normal density of dimension $m = 5000$.} under dependence. We summarize the theoretical $\FDR_{\rep}$ guarantees of each compared method under the data-generating setup of this simulation study in \tabref{method_summary_dep}. 

\begin{table}[h]
\small
\begin{tabular}{|l|l|l|}
\hline
Method                                               & $\FDR_{\rep}$ control \\ \hline
{\methodname}                & No proven guarantee       \\
{No-Covar-\methodname} & No proven guarantee \\
{AdaFilter-BH}               & Asymptotic  (Theorem 4.4, \citet{adafilter})  \\
{Inflated-AdaFilter-BH}      & Asymptotic  (Implied by Theorem 4.4, \citet{adafilter})     \\
{CoFilter-BH} & No proven guarantee \\
{BH}                &  No proven guarantee  \\
{AdaPT}                      & No proven guarantee  \\
{CAMT}                       & Asymptotic (Theorem 3.8, \citet{Zhang2022}) \\
{IHW}                        & No proven guarantee  \\

{Adaptive-BH}               & No proven guarantee  \\
{BY}                      & Finite (Theorem 1.3, \citet{Benjamini2001})         \\
{Adaptive-CoFilter-BH} & No proven guarantee \\
{Inflated-\methodnamens}                & Finite (\thmref{error_control_dep})          \\
{Non-adaptive-\methodnamens}                 & No proven guarantee \\
 \hline
\end{tabular}
\caption{Summary of the type of $\FDR_{\rep}$ control provided by each method compared in the simulations of \appref{depueqn}—i.e., finite-sample, asymptotic as $m \to \infty$, or none at all.}\label{tab:method_summary_dep}
\end{table}

The results for this simulation are presented in \figref{depueqn}. Overall, $\FDR_{\rep}$ control at $q = 0.05$ was observed by every method. The power of the compared methods remains largely consistent with their performance in \appref{full_main_sim}. Specifically, the \methodname (\ParFilterCol line) generally outperforms AdaFilter-BH (\AdaFilterBHCol line) when the covariates are mildly ($\gamma_1 = 1.0$) or most ($\gamma_1 = 1.5$) informative, while the AdaFilter-BH maintains its superior power when the covariates are non-informative ($\gamma_1 = 0$). The Inflated-\methodname (\InflatedParFilterCol line) exhibited relatively poor power across all simulation settings; however, it consistently outperformed the BY procedure (\BYCol line), Adaptive-BH (\AdaptiveBHCol line), and Inflated-AdaFilter-BH (\InflatedAdaFilterBHCol line). Recall that, aside from the Inflated-\methodnamens, the BY procedure is the only other method compared here that guarantees finite-sample $\FDR_{\rep}$ control with dependent base $p$-values; see \tabref{method_summary_dep}. 

The simulation results indicate that our original implementation of the \methodname (item (a) of \secref{sim_methods}) has empirical $\FDR_{\rep}$ control under highly negatively correlated base $p$-values. A direction for future work is to investigate whether the \methodname as implemented in item $(a)$ of \secref{sim_methods} has asymptotic $\FDR_{\rep}$ control under arbitrary dependence. If this is indeed the case, then implementing the Inflated-\methodname may not be necessary for large-scale meta-analyses with many features, as it was shown here to have  low power despite its appealing .

\begin{figure}[H]
    \centering
    \includegraphics[width=0.9\textwidth]{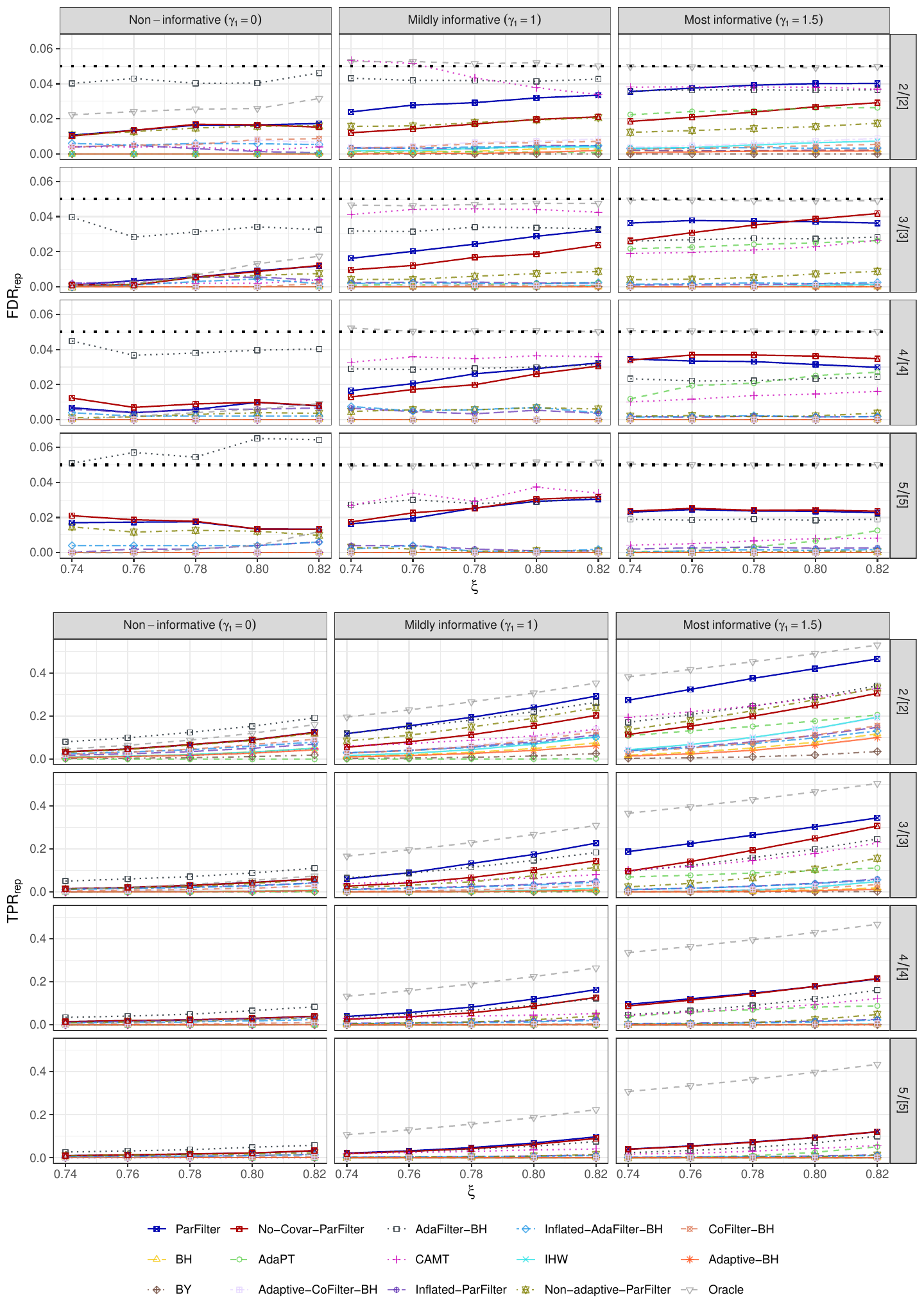}
    \caption{Empirical $\text{FDR}_{\rep}$ and $\text{TPR}_{\rep}$ of the compared methods for the simulations in \appref{full_main_sim}  based on $500$ simulated datasets, for testing $2/[2]$, $3/[3]$, $4/[4]$, and $5/[5]$ replicability  under  independence of base $p$-values across features.  Logistic parameters $\gamma_1 = 0, 1,$ and $1.5$, denoted as non-informative, mildly informative, and most informative respectively. Effect size parameters $\xi = 0.72, 0.74, 0.76, 0.78, 0.80, 0.82$. .}
    \label{fig:Indepueqnfull}
\end{figure}

\begin{figure}[H]
    \centering
    \includegraphics[width=0.9\textwidth]{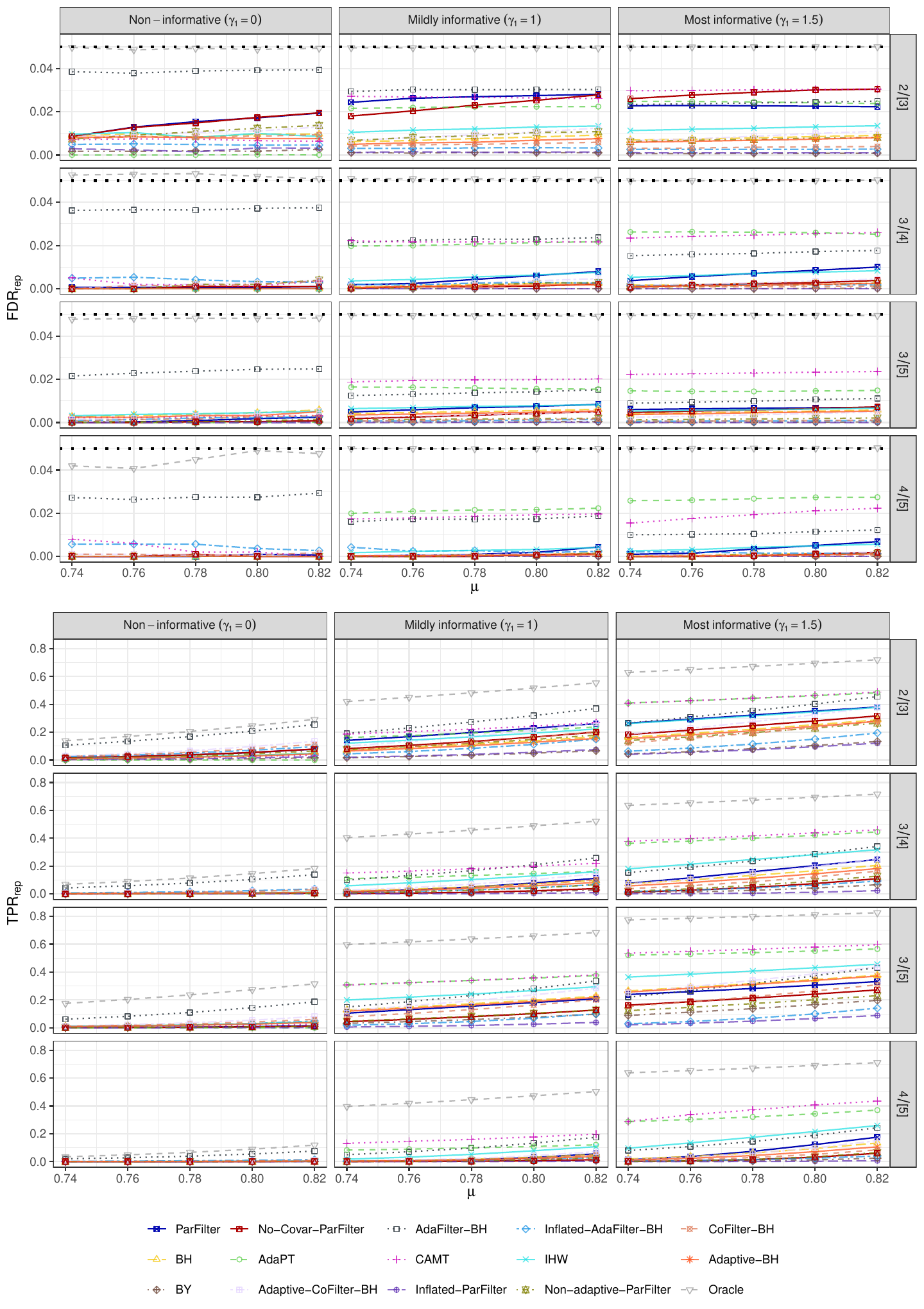}
    \caption{Empirical $\text{FDR}_{\rep}$ and $\text{TPR}_{\rep}$ of the compared methods for the simulations in \appref{uleqnsim} based on $500$ simulated datasets, for testing $2/[3]$, $3/[4]$, $3/[5]$, and $4/[5]$ replicability under  independence of base $p$-values across features. Logistic parameters $\gamma_1 = 0, 1,$ and $1.5$, denoted as non-informative, mildly informative, and most informative respectively. Effect size parameters $\xi = 0.72, 0.74, 0.76, 0.78, 0.80, 0.82$.  }
    \label{fig:Indepuleqn}
\end{figure}

\begin{figure}[H]
    \centering
    \includegraphics[width=0.9\textwidth]{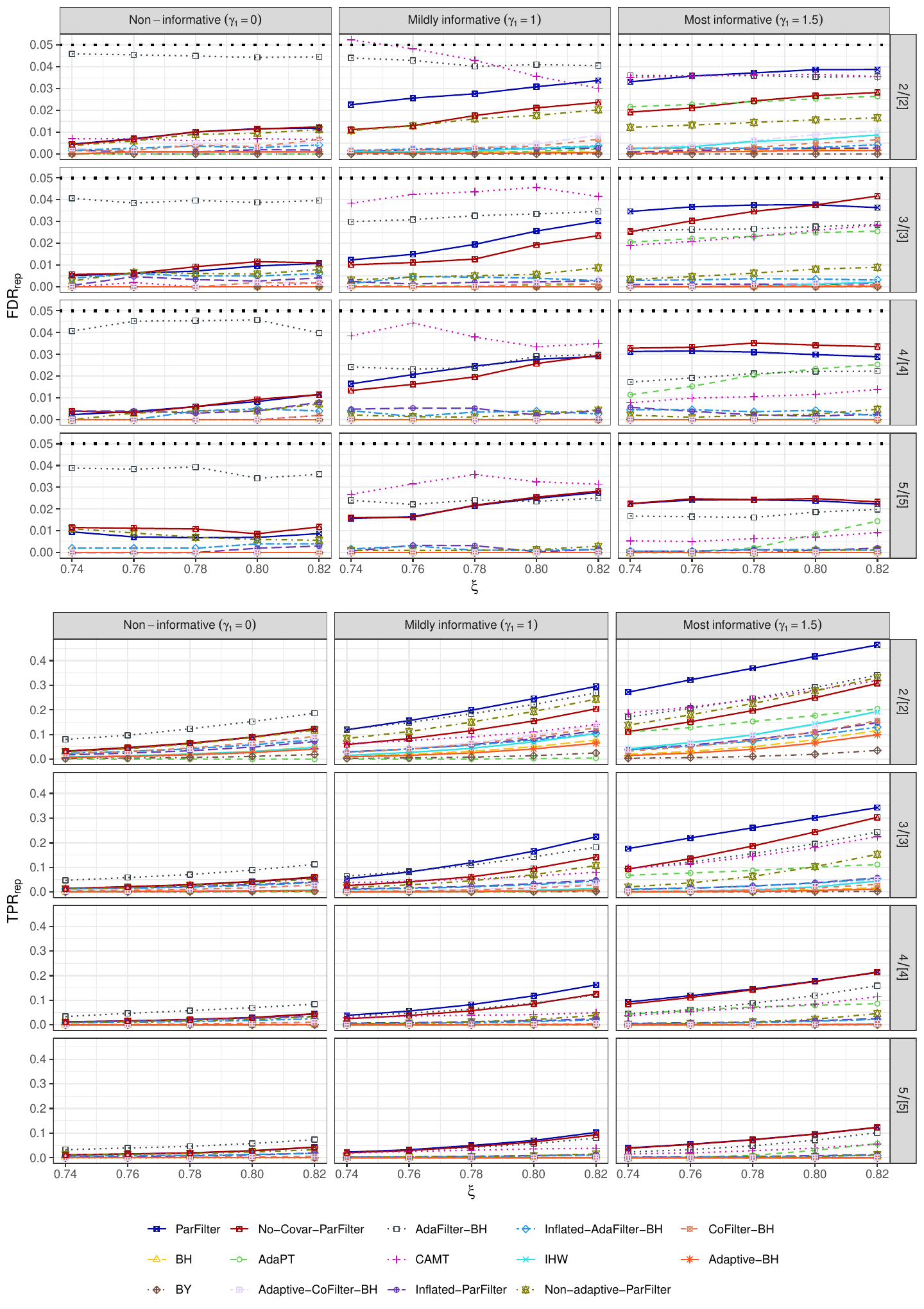}
    \caption{Empirical $\text{FDR}_{\rep}$ and $\text{TPR}_{\rep}$ of the compared methods for the simulations in \appref{depueqn}  based on $500$ simulated datasets, for testing $2/[2]$, $3/[3]$, $4/[4]$, $5/[5]$ replicability under AR(1) correlation of base $p$-values across features. Logistic parameters $\gamma_1 = 0, 1,$ and $1.5$, denoted as non-informative, mildly informative, and most informative respectively. Effect size parameters $\xi = 0.72, 0.74, 0.76, 0.78, 0.80, 0.82$.}
    \label{fig:depueqn}
\end{figure}

\section{Implementation of the oracle procedure}\label{app:oracle}
The \textit{optimal} $\FDR$ procedure achieves the highest expected number of true discoveries among all procedures that control finite-sample $\FDR$ \citep[Sec. 2.1]{Heller2020}. In this section, we describe the implementation of the oracle procedure used in \appref{additional_sim}, which served as a computationally simpler alternative to the optimal FDR procedure. The oracle procedure follows a similar approach to that of \citet{Sun2007} and avoids the need to solve the complex optimization problem required by the optimal FDR procedure \citep[Sec. 2.4]{Heller2020}.

For each $i \in [m]$, let $\Clfdr^{{}_{u/[n]}}_{i} \equiv \Pr(\text{$H^{{}_{u/[n]}}_i$ is true}|\mathbf{P},\mathbf{X})$ denote the probability that $H^{{}_{u/[n]}}_i$ is true given $\mathbf{P}$ and $\mathbf{X}$. Moreover, let $\Clfdr^{{}_{u/[n]}}_{(1)} \leq \cdots \leq \Clfdr^{{}_{u/[n]}}_{(m)}$ denote the order statistics of $\Clfdr^{{}_{u/[n]}}_{1}, \cdots, \Clfdr^{{}_{u/[n]}}_{m}$. For a target $\FDR_{\rep}$ level $q$, the oracle procedure computes a rejection index
\begin{equation}\label{lhat}
    \hat{l} \equiv \max \left\{ l \in [m] : \frac{\sum_{i \in [l]} \Clfdr^{u/[n]}_{(i)} }{l \vee 1} \leq q \right\}
\end{equation}
and returns the rejection set 
\begin{equation*}
    \widehat{\cR}^{\text{Oracle}} \equiv
    \begin{cases}
        \{ i \in [m] : \Clfdr^{u/[n]}_{i} \leq \Clfdr^{u/[n]}_{(\hat{l})} \} & \text{ if $\hat{l}$ exists}; \\
        \emptyset & \text{ otherwise}.
    \end{cases}
\end{equation*}
By design, $|  \widehat{\cR}^{\text{Oracle}}  | = \hat{l}$ if $\hat{l}$ exists.

The oracle procedure has finite-sample $\FDR_{\rep}$ control at level $q$ because:
\begin{align*}
    \FDR_{\rep}(\widehat{\cR}^{\text{Oracle}}) &\equiv \mathbb{E} \left[ \frac{ \sum_{i \in \widehat{\cR}^{\text{Oracle}}} \1 \left\{ i \in   \cH^{u/[n]} \right\} }{ |\widehat{\cR}^{\text{Oracle}}| \vee 1} \right] \\
    &= \mathbb{E}  \left[ \mathbb{E}  \left[  \frac{\sum_{i \in \widehat{\cR}^{\text{Oracle}}} \1 \left\{ i \in   \cH^{u/[n]} \right\} }{ |\widehat{\cR}^{\text{Oracle}}| \vee 1} \right] \Bigg|  \mathbf{P}, \mathbf{X}  \right] \\
    &= \mathbb{E}  \left[ \mathbb{E}  \left[  \frac{ \sum_{i \in [\hat{l}]} \Clfdr^{u/[n]}_{(i)} }{ \hat{l} \vee 1} \right] \Bigg|  \mathbf{P}, \mathbf{X}  \right] \\
    &\overset{\tiny (a)}{\leq} q
\end{align*}
where $(a)$ follows from the definition of $\hat{l}$ in \eqref{lhat}. Note, the proof above assumes $\hat{l}$ exists when $\mathbf{P}$ and $\mathbf{X}$ are given. If $\hat{l}$ does not exist, then the proof still holds because then
\begin{equation*}
    \mathbb{E}  \left[ \mathbb{E}  \left[  \frac{\sum_{i \in \widehat{\cR}^{\text{Oracle}}} \1 \left\{ i \in   \cH^{u/[n]} \right\} }{ |\widehat{\cR}^{\text{Oracle}}| \vee 1} \right] \Bigg|  \mathbf{P}, \mathbf{X}  \right] = \mathbb{E}  \left[ \mathbb{E}  \left[  \frac{0}{|\emptyset| \vee 1} \Big| \mathbf{P}, \mathbf{X} \right]   \right] = 0 \leq q.
\end{equation*}

\section{Additional replicability analysis results from \secref{realdata}}\label{app:real_data_extra}
Below, we highlight seven genes from \tabref{GeneRanks}, aside from F11r, Mknk2, and Mreg, that have known associations with autoimmune diseases.
\begin{itemize}
	  \item Escsr: \citet{Feng2024} found a strong association between Escsr expression and ulcerative colitis, an autoimmune related disease affecting the digestive tract.
   	 \item Jarid2: \citet{Pereira2014} discovered that Jarid2 is associated with the maturation of invariant natural killer T (iNKT) cells. iNKT cells are known to profoundly influence autoimmune disease outcomes in humans and animals \citep{Hammond2003}.
	\item Ktn1: \citet{Lu2005} found links between Ktn1 and Behçet's syndrome, a rare autoimmune related disease that causes inflammation in blood vessels.
	 \item Ncl: Lupus erythematosus is an autoimmune disease associated with dysfunction of C1QTNF4, a protein strongly linked to Ncl \citep{Vester2021}.
	   \item Bcl2l2: While primarily recognized for its role in regulating cell death, Bcl2l2 has also been associated with autoimmunity in mice and human males \citep{Tischner2010}
	\item Fxyd3: \citet{Yang2023} concluded that targeting Fxyd3 may offer a potential therapeutic approach for treating psoriasis, a common autoimmune skin disease.
	\item Rell1: The tumor necrosis factor receptor superfamily (TNFRSF) are proteins that are important therapeutic targets for autoimmune diseases \citep{Croft2024}. Rell1 is functionally associated with Relt, a member of the TNFRSF. 
\end{itemize}

\end{document}